\title{Undecidability of the Emptiness Problem for Weak Models of Distributed Computing}
\titlerunning{Undecidability of the Emptiness Problem for Distributed Automata}
\author{Flavio T. {Principato}}{Technical University of Munich, Germany}{flavio.principato@tum.de}{https://orcid.org/0009-0003-2061-5416}{}
\author{Javier {Esparza}}{Technical University of Munich, Germany}{esparza@cit.tum.de}{https://orcid.org/0000-0001-9862-4919}{}
\author{Philipp {Czerner}}{Technical University of Munich, Germany}{czerner@cit.tum.de}{https://orcid.org/0000-0002-1786-9592}{}
\authorrunning{F.\,T. Principato, J. Esparza, and P. Czerner}
\keywords{Undecidability, Emptiness Problem, distributed Automata}
\DeclareMathOperator{\dist}{dist}
\newcommand*{\checked}{\rlap{\checkmark}\square}
\newcommand*{\mapsdown}{\hspace{-0.25ex}\rotatebox{-90}{$\mapsto$}}
\newcommand*{\rsnowball}{\rlap{$\rightarrow$}\bigcirc}
\newcommand*{\rface}{\rlap{$\rightarrow$}\phantom{\bigcirc}}
\newcommand*{\lsnowball}{\rlap{$\leftarrow$}\bigcirc}
\newcommand*{\lface}{\rlap{$\leftarrow$}\phantom{\bigcirc}}
\newcommand{\labelledtag}[1]{\tag*{(#1)}\label{#1}}
\theoremstyle{definition}
\newtheorem{definition2}[theorem]{Definition}
\newtheorem{construction}[theorem]{Construction}
\crefname{construction}{Construction}{Constructions}
\begin{document}

\maketitle

\begin{abstract}
	Esparza and Reiter have recently conducted a systematic comparative study of weak asynchronous models of distributed computing, in which a network of identical finite-state machines acts cooperatively to decide properties of the network's graph.
They introduced a distributed automata framework encompassing many different models, and proved that w.r.t.\ their expressive power (the graph properties they can decide) distributed automata collapse into seven equivalence classes.
In this contribution, we turn our attention to the formal verification problem: Given a distributed automaton, does it decide a given graph property?
We consider a fundamental instance of this question -- the \emph{emptiness problem}: Given a distributed automaton, does it accept any graph at all?
Our main result is negative: the emptiness problem is undecidable for six of the seven equivalence classes, and trivially decidable for the remaining class.
\end{abstract}

\section{Introduction}\label{sec:intro}
The concepts of distributed computing can be used to model interactions between a variety of natural or artificial agents, like molecules, cells, microorganisms, or nanorobots.
Typical features of these models are that agents do not have identities, and each agent has very limited computational power and restricted communication capabilities.
Weak models of distributed computing are hence the appropriate paradigm in this context -- in contrast to traditional distributed computing models used to study computer networks.
Examples of such models include population protocols \cite{pp_1,pp_2}, chemical reaction networks \cite{crn}, networked finite-state machines \cite{nfsm}, the weak models of distributed computing of \cite{wmdc}, and the beeping model \cite{beep_1,beep_2}.
Many of these and other models are discussed in comparative surveys \cite{survey_1,survey_2}.

All these weak models share the following features \cite{nfsm}: the communication network can have arbitrary topology; all agents run the same protocol; each agent has a finite number of states, independent of the network size or topology; state transitions only depend on the agent's state and the states of a bounded number of neighbours; nodes do not know their neighbours, in the sense of \cite{ni}.
Nonetheless, they differ in several other aspects.
Esparza and Reiter classified them in \cite{classification} according to four parameters:

\begin{itemize}
	\item\emph{Detection.}
		In some models, agents can only detect the \emph{existence} of neighbours in a given state ($\mathtt{d}$), while in others they can \emph{count} their number up to a fixed threshold ($\mathtt{D}$).
	\item\emph{Acceptance.}
		A given input is accepted or rejected by all agents reaching a respective state.
		Some models require agents to \emph{halt} once they reach an accepting or rejecting state ($\mathtt{a}$), while others only require \emph{stable consensus} ($\mathtt{A}$), i.e., agents can keep changing their decision, as long as they eventually agree on acceptance or rejection.
	\item\emph{Selection.}
		In each step of the execution of a model, a certain selection of agents acts.
		In \emph{synchronous} models (\texttt{\$}), all agents are selected in every step.
		Models with \emph{liberal} selection ($\mathtt{s}$) select an arbitrary subset of agents in each step.
		Finally, models with \emph{exclusive} selection ($\mathtt{S}$) select exactly one agent at a time.
	\item\emph{Fairness.}
		Different assumptions about the occurrence of the permitted selections can be made.
		Some models only ensure that every agent will be selected infinitely many times.
		We call this \emph{weak fairness} ($\mathtt{f}$).
		Others use stochastic-like selection, which guarantees that any finite sequence of permitted selections will occur infinitely often.
		We call this \emph{strong fairness} ($\mathtt{F}$).
\end{itemize}

In \cite{classification}, Esparza and Reiter studied the expressive power of all possible combinations of these parameters.
For this, they introduced a generic model, called \emph{distributed automata}, which can be equipped with any parameter combination -- for example, one can study the class of $\mathtt{dA\texttt{\$}f}$-distributed automata.
The behaviour of a distributed automaton is described by a finite-state machine, inputs are labelled graphs and the output is boolean (acceptance/rejection).
Intuitively, each node of the input graph becomes an agent running a copy of the finite-state machine where the initial state depends on the node's label, and its transitions depend on the agent's own and the agent's neighbours' states.
At each step of the execution a subset of the nodes is selected by a scheduler, and each selected agent executes a transition of the finite-state machine, thereby moving into a new state.
Note that this adds a component of nondeterminism, such that multiple runs on the same input graph could a priori yield different results.
This ambiguity is resolved by requiring the \emph{consistency condition}, which essentially states that for a given input graph the distributed automaton's output must always be the same.
The set of graphs accepted by a given distributed automaton forms a graph language; we say the automaton decides the graph language.
A model's expressive power is determined by the class of graph languages it can decide.

The main result of \cite{classification} is that all the possible combinations of parameters collapse into seven equivalence classes w.r.t.\ their expressive power.
Notably, all proofs of equivalence are constructive.
In particular, every class of distributed automata was shown to be equivalent to some class with liberal selection ($\mathtt{s}$).
For this reason, we can often omit the selection parameter when referring to an equivalence class $xyz \in \{\mathtt{d},\mathtt{D}\}\{\mathtt{a},\mathtt{A}\}\{\mathtt{f},\mathtt{F}\}$, implicitly meaning $xy\mathtt{s}z$.
Furthermore, the classes $\mathtt{da}z$ with $z \in \{\mathtt{f},\mathtt{F}\}$ were shown to both have trivial expressive power, i.e., every $\mathtt{da}z$-distributed automaton either accepts all labelled graphs or none.
The seven distinct equivalence classes assemble in a hierarchy of expressive power shown in \Cref{fig:diagram}.
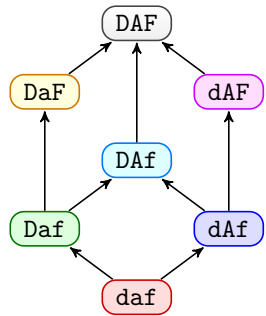
\begin{figure}[h]
	\centering
	\definecolor{draw_daf}{HTML}{cc0000}
\definecolor{draw_Daf}{HTML}{007700}
\definecolor{draw_dAf}{HTML}{0000ff}
\definecolor{draw_DAf}{HTML}{0077ff}
\definecolor{draw_DaF}{HTML}{cc7700}
\definecolor{draw_dAF}{HTML}{cc00ff}
\definecolor{draw_DAF}{HTML}{444444}

\definecolor{fill_daf}{HTML}{ffdddd}
\definecolor{fill_Daf}{HTML}{ddffdd}
\definecolor{fill_dAf}{HTML}{ddddff}
\definecolor{fill_DAf}{HTML}{ddffff}
\definecolor{fill_DaF}{HTML}{ffffdd}
\definecolor{fill_dAF}{HTML}{ffddff}
\definecolor{top_DAF} {HTML}{ffffff}
\definecolor{bot_DAF} {HTML}{eeeeee}

\begin{tikzpicture}[semithick,
	on grid, every node/.style={rounded corners=1.2ex, inner sep=0ex, minimum height=3ex, minimum width=6ex},
	>=stealth', shorten >=0.5pt]
	\def\dv{6ex}
	\def\dh{8ex}

	\node [draw=draw_daf,fill=fill_daf]									(daf)                                  {$\mathtt{daf}$};
	\node [draw=draw_Daf,fill=fill_Daf]									(Daf) [above left=\dv and \dh of daf]  {$\mathtt{Daf}$};
	\node [draw=draw_dAf,fill=fill_dAf]									(dAf) [above right=\dv and \dh of daf] {$\mathtt{dAf}$};
	\node [draw=draw_DAf,fill=fill_DAf]									(DAf) [above=2*\dv of daf]             {$\mathtt{DAf}$};
	\node [draw=draw_DaF,fill=fill_DaF]									(DaF) [above=2*\dv of Daf]             {$\mathtt{DaF}$};
	\node [draw=draw_dAF,fill=fill_dAF]									(dAF) [above=2*\dv of dAf]             {$\mathtt{dAF}$};
	\node [draw=draw_DAF, shade,top color=top_DAF,bottom color=bot_DAF]	(DAF) [above=2*\dv of DAf]             {$\mathtt{DAF}$};

	\draw[->]
		(daf) edge (Daf)
		(daf) edge (dAf)
		(Daf) edge (DAf)
		(Daf) edge (DaF)
		(dAf) edge (DAf)
		(dAf) edge (dAF)
		(DAf) edge (DAF)
		(DaF) edge (DAF)
		(dAF) edge (DAF)
		;
\end{tikzpicture}%
	\caption{The hierarchy of expressive power of classes of distributed automata as proven in \cite{classification} --- The arrows indicate strictly increasing expressive power.}
	\label{fig:diagram}
\end{figure}
Finally, we mention that Czerner \emph{et al.}\ \cite{decision_power} characterized which labelling properties each class can decide, i.e., properties depending only on the labelling and not on the structure of the graph.

Distributed automata are designed to decide properties of graphs, e.g., whether a given graph is cyclic or whether at least one of its nodes is labelled as red.
So, we are interested in the verification problem: Does the graph language $L(A)$ of a given distributed automaton $A$ satisfy a given property, i.e., does $L(A) \subseteq \mathcal{C}$ hold, where $\mathcal{C}$ is the class of graphs with the desired property?
This is equivalent to asking if  $L(A) \cap \overline{\mathcal{C}} = \emptyset$ holds, where $\overline{\mathcal{C}}$ denotes the complement of $\mathcal{C}$.
Since distributed automata are closed under intersection (as we show in this paper), verification reduces to the \emph{emptiness problem} whenever $\overline{\mathcal{C}}$ is decidable by distributed automata.

We show that the emptiness problem is undecidable for six of the seven classes, and trivially decidable for the class of $\mathtt{daf}$-distributed automata%
\footnote{Since a $\mathtt{daf}$-distributed automaton either accepts all graphs or no graph, emptiness can be decided by checking whether the automaton accepts, e.g., the graph with one node and no edges, which is easy.}.
We reduce from the halting problem for Turing machines on blank tape.
For each class $xyz$, we define a suitable family of labelled graphs to represent a finite tape section.
Then we construct an $xyz$-distributed automaton that solves two tasks: it decides whether the graph belongs to the suitable family and, if so, simulates the execution of a given Turing machine until either the machine halts, in which case the automaton accepts, or the head exceeds the finite tape section, in which case the automaton rejects.
So, we need families of labelled graphs that can be used to represent arbitrarily long finite tape sections, and are decidable by $xyz$-distributed automata.
This is especially difficult for the classes of the form $\mathtt{Da}z$, which have extremely limited expressive power.
Identifying a suitable family of labelled graphs for this case is the main contribution of our paper.

\smallskip\noindent\textit{Related work.}
Kuusisto and Reiter have studied the emptiness problem for a specific class of automata on directed graphs working synchronously \cite{KuusistoR20}.
This work predates \cite{classification}, which considers a large variety of models on undirected graphs%
\footnote{Unfortunately, \cite{KuusistoR20} and \cite{classification} use the name distributed automata with different meanings: \cite{KuusistoR20} for a specific automata model, and \cite{classification} as a generic name for all the automata classes in their classification.}.

\smallskip\noindent\textit{Structure of the paper.}
\Cref{sec:prelim} contains preliminaries.
\Cref{sec:groundwork} identifies suitable families of graphs for the more capable classes $\mathtt{DA}z$ and $\mathtt{dA}z$, and prepares the ground for \Cref{sec:snowball}, which introduces a suitable family for $\mathtt{Da}z$.
\Cref{sec:conclusion} presents conclusions.
Formal descriptions of the constructions and detailed proofs are given in the appendices.

\section{Preliminaries}\label{sec:prelim}
For sets $A$ and $B$, $\mathcal{P}(A)$ denotes the powerset of $A$ and $B^A$ denotes the set of all functions from $A$ to $B$.
Given a partial function $f \colon {A \to B}$, we write $f(a) = \bot$ to denote that $f$ is not defined for $a \in A$.
We use the convention $0 \in \mathbb{N}$ and define $[n] \coloneq \{0,...,n\}$ for $n \in \mathbb{N}$ and $[-1] \coloneq \emptyset$.

\smallskip\noindent\emph{Words.} Given a word $w \in \Sigma^*$ over an alphabet $\Sigma$, we let $|w|$ denote the length of $w$  and $w_{i...j}$ the substring of $w$ from the $i^\text{th}$ to the $j^\text{th}$ symbol for $i,j \in [|w|-1]$,  with $w_{i...j} \coloneq \varepsilon$ for $j<i$ and $w_i \coloneq w_{i...i}$.
Note that we use $0$-indexing.
For $a \in \Sigma$, $a^\omega$ denotes the infinite repetition of $a$.

\smallskip\noindent\emph{Graphs.} We implicitly assume graphs to be non-empty, finite, undirected, unweighted, and connected.
A labelled graph over a set of labels $\Lambda$ is a triple $G=(V,E,\lambda)$, where $V$ and $E$ are sets of nodes and edges, and $\lambda \colon V \to \Lambda$ is a labelling function that assigns a label to each node $v \in V$.
For $U \subseteq V, v \in V$, the function $\dist(U,v)$ denotes the length of the shortest path from any node in $U$ to $v$.

\smallskip\noindent\emph{Turing machines.} A Turing machine (TM) is a $7$-tuple $T = (Q,q_0,F,\Gamma,\Sigma,\square,\delta)$ with a finite set of states  $Q$, an initial state $q_0 \in Q$, a set of accepting states $F \subseteq Q$, a tape alphabet $\Gamma$, an input alphabet $\Sigma \subseteq \Gamma$, the blank symbol $\square \in \Gamma \setminus \Sigma$, and a (partial) transition function $\delta \colon {Q \times \Gamma \to Q \times \Gamma \times \{-1,+1\}}$ with $\delta(f,\gamma) = \bot$ for all $f \in F, \gamma \in \Gamma$.
In our model, the TM tape is only unbounded to the right.
The TM head starts on the leftmost cell and moves right ($+1$) or left ($-1$) in each transition.

\subsection{Distributed Machines}

A \emph{distributed machine} operating on a \emph{labelling alphabet} $\Lambda$ with counting bound $\beta \in \mathbb{N} \setminus \{0\}$ is a $5$-tuple $M = (Q,\delta_0,\delta,Y,N)$ with a finite set of \emph{states} $Q$, an \emph{initialization function} $\delta_0 \colon {\Lambda \to Q}$, a \emph{transition function} $\delta \colon {Q \times {[\beta]}^Q \to Q}$, and disjoint sets $Y,N \subseteq Q$ of \emph{accepting} and \emph{rejecting states}, respectively.

A distributed machine $M$ runs on a labelled graph $G = (V, E, \lambda)$.
Intuitively, at each node $v \in V$ an \emph{agent} starts in the state $\delta_0(\lambda(v))$.
The machine proceeds in steps, where in each step a set $S$ of nodes is selected.
Each agent $v \in S$ transitions to a new state according to the function $\delta$, which depends on $v$'s and $v$'s neighbours' current states.

We proceed to formalize this intuition.
A \emph{configuration} $C$ of $M$ on $G$ is a function $C \colon {V \to Q}$ that assigns a \emph{current state} to every agent of $G$.
For every configuration $C$ and agent $v$, we define the function $\mathcal{N}_v^C \colon {Q \to [\beta], q \mapsto \min\{n_q(v),\beta\}}$, where $n_q(v) \in \mathbb{N}$ is the number of agents $w$ neighbouring $v$ with $C(w)=q$.
To simplify notation, we define for $Q = R \times S \times T$ that $\mathcal{N}_v^C(r,*,t) = \sum_{s \in S} {\mathcal{N}_v^C(r,s,t)}$.
Given two configurations $C,D$ and a \emph{selection} $S \subseteq V$, we let $C \stackrel{S}{\longrightarrow}_M D$ denote that $D(v) = \delta\big(C(v),\mathcal{N}_v^C\big)$ for all $v \in S$ and $D(v) = C(v)$ otherwise, that is, the machine transitions from $C$ to $D$ by performing state transitions according to $\delta$ for exactly the agents selected by $S$.
We write $C \to_M D$ if there exists some selection $S \subseteq V$ such that $C \stackrel{S}{\longrightarrow}_M D$.

\subsection{Distributed Automata}

In principle, a distributed automaton consists of a distributed machine and a scheduler specifying which sequences of selections are allowed.
All schedulers guarantee the minimal assumption that every node is selected infinitely often.

A \emph{schedule} is a sequence of selections $\sigma = {(S_i)}_{i \in \mathbb{N}}$ such that for every $v \in V$ there exist infinitely many $i \in \mathbb{N}$ with $v \in S_i$, i.e., every agent is selected infinitely often.
A sequence of configurations $\rho = {(C_i)}_{i \in \mathbb{N}}$ with $C_0 = \delta_0 \circ \lambda$ and $C_i \to_M C_{i+1}$ for all $i \in \mathbb{N}$ is called a \emph{run} of $M$ on $G$.
We say that $\rho$ is \emph{induced} by $\sigma$ if{}f $C_i \stackrel{S_{i}}{\longrightarrow}_M C_{i+1}$ for all $i \in \mathbb{N}$.
A configuration $C$ is called \emph{accepting} if{}f $C(V) \subseteq Y$ and \emph{rejecting} if{}f  $C(V) \subseteq N$.
A run $\rho = {(C_i)}_{i \in \mathbb{N}}$ is called \emph{accepting} (\emph{rejecting}) if{}f there exists an index $I \in \mathbb{N}$ such that $C_i$ is accepting (rejecting) for all $i \geq I$.

A \emph{scheduler} is a pair $\Sigma = (s,f)$.
The \emph{selection constraint} $s$ maps any graph $G = (V, E, \lambda)$ to a subset of $\mathcal{P}(V)$ of \emph{permitted} selections, such that $\bigcup_{S \in s(G)} {S} = V$, i.e., every agent is in at least one permitted selection.
The \emph{fairness constraint} $f$ maps $G$ to a subset of ${s(G)}^\mathbb{N}$ of \emph{fair} schedules.
For a distributed machine $M$ and a scheduler $\Sigma$, a run of $M$ on a labelled graph $G$ is called \emph{fair} if{}f it is induced by a fair schedule.

A pair $A = (M,\Sigma)$ consisting of a distributed machine $M$ and a scheduler $\Sigma$ satisfies the \emph{consistency condition} on a class of graphs $\mathcal{C}$ if{}f for every given graph $G \in \mathcal{C}$, either all fair runs of $M$ on $G$ are accepting or all are rejecting.
Intuitively, whether $M$ accepts $G \in \mathcal{C}$ does not depend on which fair run we consider.
In this case, we call $A$ a \emph{$\mathcal{C}$-distributed automaton} and say that \emph{$A$ accepts $G$} or \emph{$A$ rejects $G$}, respectively.
$L(A) \subseteq \mathcal{C}$ denotes the language of $\Lambda$-labelled graphs accepted by $A$.
A \emph{distributed automaton} is a $\mathcal{G}$-distributed automaton, where $\mathcal{G}$ denotes the set of all graphs.

\subsection{Classes of Distributed Automata}

We formalize the parameters \emph{detection}, \emph{acceptance}, \emph{selection} and \emph{fairness} mentioned in the introduction.
The first two concern the distributed machine $M = (Q,\delta_0,\delta,Y,N)$, while the other two concern the scheduler $\Sigma = (s,f)$.
Let $G = (V,E,\lambda)$ be any $\Lambda$-labelled graph.

\smallskip\noindent\textbf{Detection.}
A distributed machine with counting bound $\beta = 1$ has \emph{existence detection} ($\mathtt{d}$), while a machine with $\beta > 1$ has \emph{counting detection} ($\mathtt{D}$).

\smallskip\noindent\textbf{Acceptance.}
A distributed machine with \emph{halting acceptance} ($\mathtt{a}$) fulfils $\delta(q,\mathcal{N}) = q$ for all $q \in Y \cup N, \mathcal{N} \in [\beta]^Q$.
Otherwise, it accepts by \emph{stable consensus} ($\mathtt{A}$).

\smallskip\noindent\textbf{Selection.}
A \emph{synchronous} (\texttt{\$}) scheduler fulfils $s(G) = \{V\}$, i.e., the only permitted selection is all of $V$.
The other extreme, a \emph{liberal} ($\mathtt{s}$) scheduler, is characterized by $s(G) = \mathcal{P}(V)$, i.e., any selection is permitted.
Lastly, the scheduler is \emph{exclusive} ($\mathtt{S}$) if{}f $s(G) = \{\{v\}\}_{v \in V}$, i.e., in every step, exactly one agent is selected.

\smallskip\noindent\textbf{Fairness.}
For a \emph{weakly fair} ($\mathtt{f}$) scheduler, $f(G)$ contains all schedules in ${s(G)}^\mathbb{N}$.
(Recall that a schedule, by definition, selects every agent infinitely often.)
A schedule ${(S_i)}_{i \in \mathbb{N}}$ is strongly fair if{}f for every finite sequence of permitted selections ${(T_i)}_{i \in [n]} \in {s(G)}^*$, there are infinitely many $j \in \mathbb{N}$, such that ${(S_{j+i})}_{i \in [n]} = {(T_i)}_{i \in [n]}$.
For a \emph{strongly fair} ($\mathtt{F}$) scheduler, $f(G) \subseteq {s(G)}^\mathbb{N}$ contains exactly the strongly fair schedules.

The expressive power of all possible combinations of these parameters is analysed in \cite{classification}.
As mentioned in the introduction, after merging classes with the same expressive power one obtains the hierarchy shown in \Cref{fig:diagram}.
For the purpose of studying the emptiness problem, we are interested in the six classes with non-trivial expressive power, as the problem is trivially decidable for the $\mathtt{daf}$ class.

Recall that $xyz$ implicitly means $xy\mathtt{s}z$.
However, \cite{classification} also proves the following result:

\begin{proposition}[\text{\cite[Lemma 3 (5) and Theorem 5]{classification}}]\label{prop:sync}
	For all $xy \in \{\mathtt{d},\mathtt{D}\}\{\mathtt{a},\mathtt{A}\}$, the classes $xy\mathtt{f}$ and \emph{$xy\texttt{\$}\mathtt{f}$} have the same expressive power.
	Moreover, given a distributed automaton in one of the classes, one can effectively construct an equivalent automaton in the other class.
\end{proposition}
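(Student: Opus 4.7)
The plan is to establish both directions of the equivalence, each by an effective construction.

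The direction $xy\mathtt{sf} \to xy\texttt{\$}\mathtt{f}$ is immediate: given a liberal automaton $A$ with machine $M$, I retain $M$ and replace its scheduler by the synchronous one. The unique synchronous schedule $(V, V, \ldots)$ is itself a weakly fair liberal schedule, since every agent is selected in every step. By consistency of $A$, the synchronous run therefore yields the same decision as every other liberal fair run on the same graph. Since the new synchronous automaton has only one fair run, it trivially satisfies consistency, and its language equals $L(A)$.

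The converse $xy\texttt{\$}\mathtt{f} \to xy\mathtt{sf}$ requires an asynchronous simulation of synchronous rounds. Given a synchronous machine $M = (Q, \delta_0, \delta, Y, N)$, I construct a liberal machine $M'$ over the enlarged state space $Q \times Q \times \{0, 1, 2\}$: a triple $(q, q_{\mathrm{old}}, p)$ records an agent's current simulated state, its state in the previous synchronous round, and a phase counter modulo $3$. Each agent starts in $(\delta_0(\lambda(v)), \delta_0(\lambda(v)), 0)$. When selected, an agent in phase $p$ is allowed to advance to phase $(p+1) \bmod 3$ only if every neighbour is already in phase $p$ or $(p+1) \bmod 3$; otherwise it remains unchanged. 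On an advance it sets $q_{\mathrm{old}} \gets q$ and $q \gets \delta(q, \mathcal{N})$, where $\mathcal{N}$ is assembled from the $q$-components of same-phase neighbours (who still hold their round-$r$ state) and the $q_{\mathrm{old}}$-components of one-phase-ahead neighbours (who have retained their round-$r$ state there). The gating rule guarantees neighbouring agents differ by at most one round, so $\mathcal{N}$ is always correctly assembled.

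The main obstacle is verifying that $A'$ satisfies the consistency condition. The key invariant, proved by induction on $r$, is that after agent $v$ has advanced its phase exactly $r$ times, the first component of its state equals $C_r(v)$, where $(C_i)_{i \in \mathbb{N}}$ is the synchronous run of $A$. The gating rule is exactly what makes the induction step go through, ensuring that $v$ reads the round-$r$ states of all its neighbours when computing its round-$(r{+}1)$ state. Weak fairness together with a standard minimum-phase argument shows no agent can block forever, so every agent eventually completes every round.

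Remaining technicalities are routine: the detection parameter transfers because counts over $Q$ are recovered from counts over $Q \times Q \times \{0, 1, 2\}$ by summing and saturating at $\beta$; one sets $Y' = Y \times Q \times \{0, 1, 2\}$ and $N' = N \times Q \times \{0, 1, 2\}$; and for the halting class $\mathtt{a}$ one additionally forces $\delta'$ to fix every triple whose first component lies in $Y \cup N$, so that accepting and rejecting states remain fixpoints as required.
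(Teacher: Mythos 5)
First, a remark on the comparison itself: the paper does not prove this proposition --- it is imported verbatim from \cite{classification} (Lemma~3(5) and Theorem~5) --- so your attempt is measured against a citation rather than an in-paper argument. Your first direction (liberal $\to$ synchronous, by observing that the unique synchronous schedule is itself a weakly fair liberal schedule and invoking consistency) is correct. Your second direction (synchronous $\to$ liberal via a three-phase synchronizer) is the standard and correct idea for the stable-consensus classes $x\mathtt{A}$: the gating rule keeps neighbouring rounds within one of each other, the mod-$3$ phase disambiguates ahead from behind, and the detection parameter transfers by summing saturated counts as you say.

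The gap is in the halting-acceptance case $x\mathtt{a}$, which is precisely the case needed for $\mathtt{Da\texttt{\$}f} \to \mathtt{Daf}$. To comply with halting acceptance you must make every state in $Y' \cup N' = (Y \cup N) \times Q \times \{0,1,2\}$ a fixpoint of $\delta'$, and you do so by freezing the whole triple, phase included. But then an agent $v$ that halts at round $r_v$ is stuck at phase $r_v \bmod 3$ forever, and any neighbour $w$ whose own halting round exceeds $r_v + 1$ deadlocks: once $w$ reaches phase $(r_v+1) \bmod 3$ it can never again satisfy the gating condition, because the frozen $v$ permanently appears one phase behind it. Concretely, take a two-node graph where one agent enters $Y$ after one synchronous round and the other only after five; your simulation strands the second agent at round two in a state that is neither accepting nor rejecting, so $A'$ violates the consistency condition and is not even a distributed automaton. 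A repair exists --- e.g., exempt halted neighbours from the gating test and always read their current first component, which is sound because halting states are absorbing --- but it is a genuinely missing idea rather than a routine technicality, and without it the construction fails in exactly the case the paper relies on.
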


\noindent Therefore, in order to prove the undecidability of the emptiness problem for  $\mathtt{DAf}$, $\mathtt{dAf}$, and $\mathtt{Daf}$, it suffices to prove it for $\mathtt{DA}\texttt{\$}\mathtt{f}$, $\mathtt{dA}\texttt{\$}\mathtt{f}$, and $\mathtt{Da}\texttt{\$}\mathtt{f}$.
We will make use of this.

\section{Simulating Turing Machines with Distributed Automata}\label{sec:groundwork}
The emptiness problem for $xyz$-distributed automata, where $xyz \in \{\mathtt{d},\mathtt{D}\}\{\mathtt{a},\mathtt{A}\}\{\mathtt{f},\mathtt{F}\}$, consists of deciding whether a given $xyz$-distributed automaton $A$ satisfies $L(A) = \emptyset$.
We prove undecidability of the emptiness problem for $xy \neq \mathtt{da}$ by  reduction from the halting problem for Turing machines on blank tape.
Formally, we show:

\begin{restatable}[Undecidability of the Emptiness Problem for Distributed Automata]{theorem}{reduction}\label{thm:reduction}
	Let $xyz \in \{\mathtt{d},\mathtt{D}\}\{\mathtt{a},\mathtt{A}\}\{\mathtt{f},\mathtt{F}\}$ with $xy \neq \mathtt{da}$.
	Given a Turing machine $T$, one can effectively construct an $xyz$-distributed automaton $A^T$, such that $L(A^T) \neq \emptyset$ if{}f $T$ halts on blank tape.
	The emptiness problem for $xyz$-distributed automata is thus undecidable.
\end{restatable}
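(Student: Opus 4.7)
The plan is to reduce from the halting problem of Turing machines on blank tape. Given a TM $T$ and a class $xyz$, I would construct an $xyz$-distributed automaton $A^T$ whose language consists of exactly those graphs that encode a one-sided finite tape of some length $n$ and on which $T$ halts without leaving this segment. Since arbitrarily large tape segments are allowed, $L(A^T) \neq \emptyset$ holds if{}f $T$ halts on blank tape, yielding undecidability.

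The first step is to fix, for each class $xyz$, a family $\mathcal{F}_{xyz}$ of labelled graphs such that (i) each member encodes a finite tape of some length $n$, (ii) membership in $\mathcal{F}_{xyz}$ is decidable by an $xyz$-distributed automaton, and (iii) the structure lets the agents jointly simulate $T$ on the encoded tape. For the more expressive stable-consensus classes $\mathtt{DA}z$ and $\mathtt{dA}z$, natural path-like encodings suffice, and I would handle them as in \Cref{sec:groundwork}. For the halting-acceptance classes $\mathtt{Da}z$, agents freeze as soon as they enter an accepting or rejecting state, so a much more delicate "snowball" family must be used, as developed in \Cref{sec:snowball}.

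The second step is to assemble $A^T$. Using closure under intersection (announced in the introduction), I would form the product of two sub-automata: a \emph{validator} that accepts precisely the graphs in $\mathcal{F}_{xyz}$, and a \emph{simulator} that presupposes a valid tape and stores in each agent the current tape symbol of the cell it represents, together with a marker indicating whether it currently carries the TM head and, if so, the TM state. Transitions propagate the head along the encoded tape according to the transition function of $T$, accept globally when the head enters a halting state, and reject when the head tries to step past the boundary of the encoded tape. For $z = \mathtt{f}$, \Cref{prop:sync} lets me use a synchronous scheduler, so the simulation is deterministic and the consistency condition is automatic. For $z = \mathtt{F}$, strong fairness guarantees every permitted selection eventually occurs, so the head always makes progress, and I would then verify consistency by arguing that the produced verdict depends only on $T$ and on $n$, not on the schedule. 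Correctness follows in both directions: if $T$ halts within $n$ tape cells, then the $\mathcal{F}_{xyz}$-graph of length $n$ is accepted; if $T$ never halts, then on every graph in $\mathcal{F}_{xyz}$ the simulator eventually overflows (or never accepts), while every graph outside $\mathcal{F}_{xyz}$ is killed by the validator.

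The hard part, as the introduction already flags, is designing $\mathcal{F}_{\mathtt{Da}z}$ together with its validator and simulator. Here agents can only detect the existence of neighbours in a given state and can never revise a decision, which rules out most obvious encodings (for instance, paths are not even recognisable in this class). The snowball construction of \Cref{sec:snowball} is specifically engineered to work around these restrictions, and I expect the bulk of the technical work to be concentrated there; by contrast, the constructions for $\mathtt{DA}z$ and $\mathtt{dA}z$, the validator/simulator decomposition, and the consistency bookkeeping should be routine once the appropriate graph family is in hand.
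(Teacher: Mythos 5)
Your overall architecture matches the paper exactly: reduce from halting on blank tape, pick per-class graph families ($\mathsf{NLG}$, $\mathsf{NQLG}$, and the snowball subfamily), decompose $A^T$ into a validator and a head simulator, and combine them via closure under intersection. However, there are two concrete gaps in how you discharge the details.

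First, your treatment of the strongly fair classes $xy\mathtt{F}$ would fail as described. You propose to run the same simulation under a strongly fair (liberal-selection) scheduler and ``verify consistency by arguing that the produced verdict depends only on $T$ and on $n$.'' But the constructions you are relying on are built around synchrony: the $\mathsf{NQLG}$ validator needs all agents at distance $d$ from the origin set to change stage \emph{simultaneously} in order to detect numbering faults, and the snowball fight needs throws and catches to happen in lockstep. Under liberal asynchronous selection these simultaneity arguments break and the verdict genuinely does depend on the schedule, so consistency cannot be argued this way. The paper avoids this entirely: it constructs everything for $xy\texttt{\$}\mathtt{f}$, converts to $xy\mathtt{f}$ via \Cref{prop:sync}, and then lifts to $xy\mathtt{F}$ using the fact that the inclusions in the hierarchy of \Cref{fig:diagram} are \emph{constructive} -- no new consistency argument is needed. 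You should either invoke that constructive lifting or design genuinely asynchronous-robust automata, which is a much larger task.

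Second, your case analysis for a non-halting $T$ -- ``the simulator eventually overflows (or never accepts)'' -- hides a real problem. A TM can loop forever inside a bounded tape segment, in which case the simulator neither accepts nor overflows; ``never accepts'' is not the same as ``rejects,'' and a run that is neither accepting nor rejecting violates the consistency condition, so $A^H$ would not even be an $\mathcal{L}$-distributed automaton. The paper resolves this by simulating not $T$ but an effectively constructed $T_\infty$ (\Cref{lem:always_halt/overflow}) that halts iff $T$ halts and otherwise visits every tape cell, guaranteeing that every run on a finite segment ends in a halt or an overflow. Without this (or an equivalent device) the reduction is not correct.
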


The proof of the theorem uses the same idea for every class $xyz$, which we proceed to describe.
First, we observe that distributed automata are closed under intersection:

\begin{restatable}[Closure under Intersection]{lemma}{intersection}\label{lem:intersection}
	Let $xywz \in \{\mathtt{d},\mathtt{D}\}\{\mathtt{a},\mathtt{A}\}\{\emph{\texttt{\$}},\mathtt{s},\mathtt{S}\}\{\mathtt{f},\mathtt{F}\}$.
	Given two $xywz$-distributed automata $A_1, A_2$ operating on the same labelling alphabet $\Lambda$, we can effectively construct an $xywz$-distributed automaton $A$, such that $L(A) = L(A_1) \cap L(A_2)$.
	Moreover, this result remains valid even when $A_2$ is only an $L(A_1)$-distributed automaton.
\end{restatable}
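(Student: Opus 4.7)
The plan is to use a product construction on the distributed machines underlying $A_1$ and $A_2$. Write $M_i = (Q_i, \delta_0^i, \delta^i, Y_i, N_i)$ with counting bound $\beta_i$ and $H_i \coloneq Y_i \cup N_i$. I would build $M$ with state set $Q = Q_1 \times Q_2$ (possibly extended by a fresh rejecting state; see below), componentwise initialisation $\delta_0(\lambda) = (\delta_0^1(\lambda), \delta_0^2(\lambda))$, counting bound $\beta \coloneq \max(\beta_1, \beta_2)$, accepting set $Y \coloneq Y_1 \times Y_2$, and the scheduler $\Sigma$ inherited from $A_1$ and $A_2$. A small case analysis shows that the projected neighbourhood views $\mathcal{N}_i$ needed to invoke $\delta^i$ can be recovered from $\mathcal{N}$ via the sum $\mathcal{N}_v^C(q_1, *) = \sum_{q_2} \mathcal{N}_v^C(q_1, q_2)$ capped at $\beta_i$: either some term of the sum already attains the cap $\beta \ge \beta_i$, or every term is strictly below $\beta$ and the sum equals the true neighbour count $n_{q_1}(v)$.

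The rejecting set and transition function depend on the acceptance mode. For stable consensus ($y = \mathtt{A}$) I would take $N \coloneq (N_1 \times Q_2) \cup (Q_1 \times N_2)$ together with the purely componentwise transition $\delta((q_1, q_2), \mathcal{N}) \coloneq (\delta^1(q_1, \mathcal{N}_1), \delta^2(q_2, \mathcal{N}_2))$. For halting acceptance ($y = \mathtt{a}$) this broad $N$ breaks the halting identity $\delta(q, \cdot) = q$ on $Y \cup N$, and the narrow alternative $N \coloneq (H_1 \times H_2) \setminus Y$ may simply fail to be reached when $A_2$ is only $L(A_1)$-consistent and $M_2$ does not halt on some $G \notin L(A_1)$. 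The remedy is to adjoin a fresh state $n^*_2 \in N_2$ and short-circuit the transition: whenever $q_1 \in N_1$ and $q_2 \notin H_2$, send the agent to $(q_1, n^*_2) \in N_1 \times N_2 \subseteq N$; otherwise act componentwise. This preserves halting, since every state of $Y \cup N$ then has both components in $H_i$, on which $\delta^i$ is the identity.

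Finally, I would verify $L(A) = L(A_1) \cap L(A_2)$ together with consistency of $A$ on every graph. On $G \in L(A_1) \cap L(A_2)$, the halting property of each $M_i$ together with consistency of $A_i$ forbids any fair projected $M_i$ run from visiting $N_i$, so no short-circuit ever fires, both projections of any fair product run are fair $M_i$ runs, each stabilises in $Y_i$, and $M$ lands in $Y$. On $G \notin L(A_1)$, consistency of $A_1$ (a full distributed automaton) drives component $1$ of every fair product run into $N_1$, whereupon the short-circuit forces component $2$ into $n^*_2$ and the configuration rejects, independently of how $A_2$ behaves outside $L(A_1)$. The remaining case $G \in L(A_1) \setminus L(A_2)$ is precisely where the weakened hypothesis on $A_2$ is used: component $1$ stabilises in $Y_1$ without ever visiting $N_1$, the short-circuit never triggers, component $2$ evolves as a fair $M_2$ run on $G \in L(A_1)$, and $L(A_1)$-consistency of $A_2$ pushes it into $N_2$, landing $M$ in $Y_1 \times N_2 \subseteq N$. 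The main obstacle is precisely this halting case combined with the weakened hypothesis on $A_2$, which makes the short-circuit modification necessary rather than a bare componentwise product sufficing.
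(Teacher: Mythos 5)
Your proposal is correct and follows essentially the same route as the paper: a product construction with $Y = Y_1 \times Y_2$, an asymmetric rejecting set in which $A_1$'s verdict dominates, and --- for halting acceptance --- an override of the componentwise transition once the first component enters $N_1$, which is exactly where the weakened hypothesis on $A_2$ bites. The only difference is cosmetic: the paper declares all of $N_1 \times Q_2$ rejecting and freezes the whole pair in place there, whereas you redirect the second component to a fresh halting sink $n_2^*$; both resolve the same obstacle in the same way.
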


Note that we should not omit the selection parameter $w \in \{\texttt{\$},\mathtt{s},\mathtt{S}\}$ here, as the results of \cite{classification} (which justified omitting the selection parameter elsewhere) do not necessarily hold for $L(A_1)$-distributed automata.

\begin{claimproof}[Proof sketch. Full proof in \Cref{app:intersection}]
	We outline the proof of the first claim.
	The construction is very similar to the product construction known from DFAs.
	Intuitively, $A$ executes $A_1,A_2$ in parallel, transitioning between states in $Q_1 \times Q_2$ according to $\delta((q_1, q_2), \mathcal{N})) = (\delta_1(q_1, \mathcal{N}), \delta_2(q_2, \mathcal{N}))$.
	Further, $(q_1,q_2) \in Y$ if{}f $q_1 \in Y_1$ and $q_2 \in Y_2$, and $(q_1,q_2) \in N$ if{}f $q_1 \in N_1$ or $q_2 \in N_2$.
\end{claimproof}

Given a Turing machine $T$, the proof of \Cref{thm:reduction} consists of exhibiting a family of labelled graphs $\mathcal{L} = \{G_n\}_{n \geq 1}$ together with a distributed automaton $A^L$ and an $\mathcal{L}$-distributed automaton $A^H$ satisfying
(1)\label{itm:sat_1} $A^L$ is an $xyz$-distributed automaton with $L(A^L) = \mathcal{L}$, and
(2)\label{itm:sat_2} $A^H$, for a graph $G_n \in \mathcal{L}$, accepts if{}f $T$ halts after visiting at most $n$ tape cells, and rejects otherwise; intuitively, a run of $A^H$ on $G_n$ simulates the execution of $T$ as long as the TM head never exceeds the first $n$ tape cells.

Notice that this implies the existence of the distributed automaton $A^T$ of \Cref{thm:reduction}.
For this, let $A^T$ be an automaton deciding the language $L(A^L) \cap L(A^H)$, which exists by \Cref{lem:intersection}.
We show that $L(A^T) \neq \emptyset$ if{}f $T$ halts:
If $L(A^T) \neq \emptyset$, let $G \in L(A^L) \cap L(A^H)$.
By \hyperref[itm:sat_1]{(1)}, $G = G_n$ for some $n \geq 1$.
By \hyperref[itm:sat_2]{(2)}, $T$ halts visiting at most $n$ tape cells.
In particular, $T$ halts.
Conversely, if $T$ halts, then it does so after a finite number of steps, visiting a finite number $n$ of tape cells.
So by \hyperref[itm:sat_1]{(1)} and \hyperref[itm:sat_2]{(2)} both $A^L$ and $A^H$ accept $G_n$, yielding $G_n \in L(A^L) \cap L(A^H)$.
In particular, $L(A^T) \neq \emptyset$.

We will construct $A^L$ and $A^H$ for the classes $xy\texttt{\$}\mathtt{f}$ with $xy \in \{\mathtt{DA},\mathtt{dA},\mathtt{Da}\}$.
By \Cref{prop:sync}, this implies that we can effectively construct an equivalent automaton for each class $xy\mathtt{f}$, and by the hierarchy of expressive power this can be lifted to the respective $xy\mathtt{F}$ class as well.
This is thus sufficient to prove \Cref{thm:reduction} for all classes%
\footnote{In fact, it would even suffice to only construct $A^L$ and $A^H$ for the classes $xy\texttt{\$}\mathtt{f}$ with $xy \in \{\mathtt{dA},\mathtt{Da}\}$, omitting $xy = \mathtt{DA}$.
Including the construction for $xy = \mathtt{DA}$ was a didactic choice, as it lays the groundwork and helps motivate the other two constructions.}.

The rest of the section is structured as follows.
We first consider the class $\mathtt{DA\texttt{\$}f}$, for which Subsection \ref{sec:nlg} defines a suitable family $\mathcal{L}$ and the automaton $A^L$, and Subsection \ref{sec:tm_head} defines the automaton $A^H$.
In Subsection \ref{sec:nqlg}, we find a slightly more complex family and automaton for the $\mathtt{dA\texttt{\$}f}$ class.

\subsection{Representing the Tape}\label{sec:nlg}
The simplest possible representation of a finite TM tape section is a linear graph where each node represents a tape cell and the labelling gives rise to a notion of positive/right and negative/left directions.
A labelling achieving this is numbering the nodes modulo $3$, starting with $0$; an agent with numbering $i$ can identify its left and right neighbour as the unique neighbour with numbering $(i-1)\bmod{3}$ and $(i+1)\bmod{3}$, respectively.
We call this family of graphs \textit{numbered linear graphs} (NLGs) and denote it by $\mathsf{NLG}$.
\Cref{fig:nlg} shows an example for an NLG.
The \textit{origin node} of an NLG is the unique node that has numbering $0$ and no left neighbour, i.e., no neighbour numbered $2$.

\begin{figure}[h]
	\centering
	\begin{tikzpicture}[semithick,
	on grid, every node/.style={circle, draw, minimum width=4.2ex}]
	\def\d{8ex}
	
	\pgfmathsetmacro{\num}{7}
	
	\node (0) {$0$};
	\foreach \i in {1,...,\the\numexpr\num-1\relax} {
		\pgfmathtruncatemacro{\label}{mod(\i, 3)};
		\node (\i) [right=\d of \the\numexpr\i-1\relax] {$\label$};
	}
	
	\foreach \i in {1,...,\the\numexpr\num-1\relax} {
		\draw (\the\numexpr\i-1\relax) -- (\i);
	}
\end{tikzpicture}%
	\caption{A numbered linear graph of length $7$.}
	\label{fig:nlg}
\end{figure}
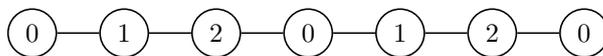

\begin{restatable}[Decidability of Numbered Linear Graphs]{lemma}{lemNlg}\label{lem:nlg}
	We can effectively construct a \emph{$\mathtt{DA\texttt{\$}f}$}-distributed automaton that decides $\mathsf{NLG}$.
\end{restatable}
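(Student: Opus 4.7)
I will build a $\mathtt{DA\texttt{\$}f}$-distributed machine with state set $Q = \{0,1,2\} \times \{\bot, \top\} \cup \{R\}$ and counting bound $\beta = 2$, initialising a node of label $l \in \{0,1,2\}$ to $(l, \bot)$ and every other label to the sink state $R$. The accepting and rejecting sets are $Y = \{0,1,2\} \times \{\top\}$ and $N = \{0,1,2\} \times \{\bot\} \cup \{R\}$. The transition function combines three mechanisms, executed synchronously by every agent in every round. First, a \emph{local consistency check}: an agent with label $l$ requires zero neighbours labelled $l$, at most one neighbour labelled $(l{+}1) \bmod 3$, at most one labelled $(l{-}1) \bmod 3$, and no neighbours with any other label; this is expressible because $\beta = 2$ lets each agent distinguish ``zero, one, or more'' of each label. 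Second, an \emph{$R$-propagation rule}: any agent whose local check fails, or whose neighbourhood contains an $R$-state, moves to $R$. Third, an \emph{origin wave}: an agent in state $(l, \bot)$ switches to $(l, \top)$ whenever it is locally an \emph{origin candidate}---label $0$ and no neighbour labelled $2$---or already sees some neighbour in a $\top$-state, and $\top$ is absorbing at node level (once set, the flag can only be lost by moving to $R$).

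\textbf{Correctness.} I would argue by three cases. On a valid NLG no local check fails, the unique origin is the only origin candidate and becomes $\top$ after one round, and by a distance-from-origin induction the $\top$-wave reaches every agent within $|V|$ rounds, so the configuration stably sits in $Y$. On any graph with a local defect at least one agent enters $R$ after one round, $R$-propagation then floods the connected graph with $R$, and the configuration stabilises in $N$. Finally, a graph that satisfies the local checks but is not an NLG must, being connected with all degrees $\leq 2$ and a consistent ``left/right'' labelling, be a cycle labelled $0, 1, 2, 0, 1, 2, \ldots$ of length divisible by $3$; no node in such a cycle is an origin candidate (every $0$-node has a $2$-neighbour), so the $\top$-wave is never seeded and the run remains in $N$ from step $0$ onwards. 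In each case every fair run is accepting or rejecting, witnessing the consistency condition.

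\textbf{Main obstacle.} The conceptually delicate step is separating NLGs from the periodic cycle, which is locally indistinguishable from a long NLG. The resolution exploits the stable-consensus mode $\mathtt{A}$ decisively: every agent defaults to the rejecting flag $\bot$ and only ever flips to $\top$ after being reached by a wave initiated at a locally recognisable origin candidate; since cycles possess no such candidate, the wave is never seeded and rejection is already the stable verdict. A secondary technical issue is arranging priorities so that the $R$-wave strictly dominates the $\top$-wave on inputs combining a local defect with a valid origin; otherwise the two waves could leave a mixed configuration forever, violating consistency.
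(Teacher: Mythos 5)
Your construction is essentially the paper's: a static numbering component, a counting-based local check that sends any node with a same-labelled neighbour or two equally-labelled neighbours into an absorbing, flooding error state, and an acceptance wave seeded only at a locally recognisable origin candidate, with correctly numbered cycles rejected passively because the wave is never seeded. The automaton you describe does decide $\mathsf{NLG}$.

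However, the correctness argument has a concrete flaw in its third case. You claim that a connected graph passing all local checks but not in $\mathsf{NLG}$ must be a cycle labelled periodically with period $3$. This is false: the path labelled $1 - 2 - 0$ passes every local check (the middle node sees exactly one neighbour labelled $l+1$ and one labelled $l-1$; each endpoint sees a single neighbour with a label different from its own), yet it is neither a cycle nor an NLG, since an NLG must satisfy $\lambda(v_i) = i \bmod 3$ and hence start at $0$. In general, local consistency forces the labels along a path to change by a constant $\pm 1 \bmod 3$ at each step, but it does not force the correct offset, so every ``shifted'' path survives your checks. These graphs are still rejected by your automaton -- in a locally consistent path the only possible origin candidate is an endpoint labelled $0$ whose sole neighbour is labelled $1$, which exists if and only if the path is an NLG, so on a shifted path the $\top$-wave is never seeded and the default verdict $\bot$ stands -- but this case is simply absent from your case analysis and must be added. (The paper's proof splits the same territory differently: a misnumbered linear graph that does possess an origin candidate is shown to contain a node with two equally-labelled neighbours, hence errors out, while one without an origin candidate is rejected passively.) So the construction is sound, but the proof of the third case needs to be repaired before the argument is complete.
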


\begin{claimproof}[Proof sketch. Full proof in \Cref{app:nlg}.]
	We sketch the relevant construction.	
	The labelling alphabet is $\Lambda^L = \mathbb{Z}_3$ and the agents have states in $Q^L = (\Lambda^L \times \{0,1\}) \cup \{\bot\}$.
	The first component of the state is the numbering, which is taken directly from the labelling and stays static, and the second component is the agent's current guess whether the graph has an origin node.
	$\bot$ is an error state that, once it occurs, propagates through the graph, that is, if any neighbour of an agent $v$ is in state $\bot$, $v$ transitions to $\bot$ as well.
	The only accepting states are those with guess $1$; all other states, most notably $\bot$, are rejecting.
	
	If an agent $v$ has the same numbering as one of its neighbours, it transitions to the error state $\bot$ immediately; the same happens if $v$ detects two of its neighbours having the same numbering as each other.
	A graph with faulty numbering or nodes of degree greater than $2$ will therefore always produce an error.
	
	Lastly, we have to distinguish numbered linear graphs from circular graphs with correct numbering.
	This is where the guess component comes in: If an origin node exists, it is the first to change its guess to $1$, which then propagates through the graph to accept.
	If no origin node exists, no agent will be the first to guess $1$ and thus all agents will stay in rejecting states indefinitely.
\end{claimproof}

\subsection{Simulating the Head}\label{sec:tm_head}
Given an NLG of length $n \geq 1$ to represent a finite TM tape section, where each agent represents a tape cell, we can simulate the behaviour of the TM head by augmenting the states.
Every agent carries a symbol of the tape alphabet in an additional component.
Further, exactly one agent indicates that the TM head is currently on its corresponding tape cell and saves the TM state.
When the TM head performs a transition, it changes the agent's tape symbol and the TM state and indicates in which direction it will move next.
The agent in the respective direction detects this and takes over as TM head.
This clearly mimics the behaviour of an actual Turing machine on a tape of finite length $n$.

Recall that in our model, the TM tape is only unbounded to the right and the TM head starts on the leftmost cell.
So, the simulation starts with all agents carrying a blank symbol and the TM head on the unique origin node's tape cell.
The TM head state will move through the NLG until the TM halts, producing an accepting state, or the TM runs out of space in the finite TM tape section, producing a rejecting state.
The accepting (rejecting) state then propagates through the graph to accept (reject).

For any given Turing machine $T$, we can construct an equivalent Turing machine $T_\infty$ that either halts or visits every tape cell; in particular, it can never happen that the TM gets stuck in a loop.
By simulating $T_\infty$ rather than $T$ itself, we can be sure that the simulation always either accepts by halting or rejects by running out of space.

Carrying all of this out formally in \Cref{app:tm_head} yields the following result.

\begin{restatable}[Simulating the Head]{lemma}{simulation}\label{cor:always_acc/rej}
	Given a Turing machine $T$, we can effectively construct an $\mathsf{NLG}$-distributed automaton $A^H$ in the class \emph{$\mathtt{da\texttt{\$}f}$} such that:
	\begin{itemize}
		\item If $T$ does not halt on blank tape, then $A^H$ rejects all numbered linear graphs.
		\item If $T$ halts on blank tape, then there exists a threshold $n_0 \in \mathbb{N}$, such that $A^H$ accepts all numbered linear graphs of length $n \geq n_0$ and rejects all numbered linear graphs of length $n < n_0$.
	\end{itemize}%
	
\end{restatable}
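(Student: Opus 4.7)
The plan is to formalize the simulation idea sketched in the paper, placing the resulting automaton in the class $\mathtt{da\texttt{\$}f}$ (existence detection, halting acceptance, synchronous selection, weak fairness). First I would replace $T$ with an equivalent machine $T_\infty$ that on blank tape either halts or visits every tape cell: a standard construction has $T_\infty$ maintain and increment an auxiliary counter between simulated $T$-steps, so that the head grows without bound, while halting on blank tape precisely when $T$ does.

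Next I would define $A^H$ over the labelling alphabet $\mathbb{Z}_3$ of $\mathsf{NLG}$. Each agent's state packages (i) its static numbering, (ii) a tape-symbol in $\Gamma$, and (iii) a head-marker in $\{\bot\} \cup (Q_{T_\infty} \times \{-1,0,+1\})$, where $\bot$ means \emph{no head}, $(q,0)$ means \emph{I am the head in TM state $q$}, and $(q,d)$ with $d \neq 0$ is a release mode broadcasting that the head, now in state $q$, is passing to my direction-$d$ neighbour. Two absorbing states $\mathtt{halt}_\mathtt{acc}$ and $\mathtt{halt}_\mathtt{rej}$ serve as the unique accepting and rejecting states. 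Since $\delta_0$ cannot see the topology, every label-$0$ agent starts tentatively as head in mode $(q_0,0)$; the first transition demotes any $0$-agent whose neighbours include a $2$-agent, so that only the true origin retains the head.

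Under the synchronous, deterministic transition, a head in mode $(q,0)$ reading $\gamma$ applies $\delta_{T_\infty}(q,\gamma)=(q',\gamma',d)$, writes $\gamma'$, and switches to release mode $(q',d)$; if $\delta_{T_\infty}(q,\gamma)=\bot$, the agent enters $\mathtt{halt}_\mathtt{acc}$. An agent in release mode transitions back to $\bot$ at the next step. A non-head agent with numbering $i$ checks, for each $q'$, whether some neighbour with numbering $(i-1) \bmod 3$ is in mode $(q',+1)$ or some neighbour with numbering $(i+1) \bmod 3$ is in mode $(q',-1)$; if so it adopts $(q',0)$, and since the head is unique, existence detection ($\beta=1$) suffices for this check. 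An agent stuck in release mode $(q,d)$ that lacks an appropriate neighbour — i.e., the head is trying to leave the finite tape — enters $\mathtt{halt}_\mathtt{rej}$. Both terminal states are absorbing, which is the halting-acceptance requirement of class $\mathtt{a}$, and they propagate unconditionally to any neighbour, reaching the whole connected graph in at most $n-1$ further steps.

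Correctness then follows by a routine induction on simulated TM steps: on an NLG of length $n$, the synchronous run of $A^H$ mirrors $T_\infty$ step-for-step with agent $j$ representing tape cell $j$, until $T_\infty$ halts (the run eventually collapses to $\mathtt{halt}_\mathtt{acc}$ everywhere) or the head would step past cell $n-1$ (it collapses to $\mathtt{halt}_\mathtt{rej}$ everywhere). Synchronous determinism makes every fair run reach the same terminal configuration, so the consistency condition is automatic, and the two bullet points drop out once combined with the defining property of $T_\infty$. The main subtlety I anticipate is confirming that existence detection truly suffices; this rests on two invariants — the head is unique throughout the simulation, and the numbering modulo $3$ unambiguously distinguishes left and right neighbours — together with careful boundary handling at the origin (via the first-step demotion trick above) and at the rightmost cell. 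Verifying these invariants and cleanly formalizing the two-step release-then-adopt handoff are where the bookkeeping becomes delicate.
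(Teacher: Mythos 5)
Your proposal follows essentially the same route as the paper: replace $T$ by a machine $T_\infty$ that halts on blank tape if{}f $T$ does and otherwise visits every tape cell, simulate the head on an NLG via a two-step release/adopt handoff keyed to the numbering modulo $3$, detect halting (accept) versus overflow past the last cell (reject), and propagate absorbing terminal states; synchronous determinism gives consistency for free. There is one genuine wrinkle in your construction as literally stated: the rule ``in the first transition, demote any $0$-agent whose neighbours include a $2$-agent'' cannot be implemented, because an agent has no state component telling it that the current step is the first one -- and if the same rule is applied at every step, it wrongly destroys the head whenever the simulation legitimately places it on a cell numbered $0$ other than the origin (every $v_{3k}$ with $k \geq 1$ has a $2$-neighbour). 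The paper resolves exactly this by equipping all initial states with an explicit uninitialized marker $\circ$, removed in the first step, so that head-spawning at the origin and passive initialization elsewhere are enabled only for marked agents and can never recur. Adding the analogous flag to your state space repairs the construction, after which your induction on simulated TM steps and the case analysis yielding the threshold $n_0$ (the number of cells $T_\infty$ visits before halting) coincide with the paper's argument.
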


The threshold $n_0$ will turn out to be the finite number of tape cells that $T_\infty$ visits before it halts.
We construct $A^H$ in the trivial class $\mathtt{da\texttt{\$}f}$, as this enables us to lift the construction to any of the non-trivial classes of distributed automata by the hierarchy of expressive power%
\footnote{Note that \Cref{cor:always_acc/rej} does not constitute a contradiction to $\mathtt{da\texttt{\$}f}$-distributed automata having trivial expressive power, as $A^H$ is only an $\mathsf{NLG}$-distributed automaton, not a distributed automaton.}.
By doing so for the class $\mathtt{DA\texttt{\$}f}$, we now have all the puzzle pieces to execute the proof of \Cref{thm:reduction} for $\mathtt{DA\texttt{\$}f}$-distributed automata, as outlined at the beginning of \Cref{sec:groundwork}.
The details can be found in \Cref{app:proof_1}.

\subsection{Weakly Representing the Tape}\label{sec:nqlg}
It is easy to see that $\mathtt{dA\texttt{\$}f}$-automata cannot decide $\mathsf{NLG}$.
A formal proof can be given by adapting \cite[Proposition D.1]{decision_power}, but we sketch the argument.

Take an NLG of length at least $3$ and duplicate its origin node $v_0$ into two nodes $v_{01}$ and $v_{02}$, which become identical left neighbours of the second node $v_1$.
The new graph is not a linear graph and should be rejected.
Let $A$ be a $\mathtt{dA\texttt{\$}f}$-distributed automaton.
The agents at $v_{01}$ and $v_{02}$ start in the same state and both have $v_1$ as their only neighbour.
As $A$ selects all agents in every step (\texttt{\$}), $v_{01}$ and $v_{02}$ will always have the same state as each other.
Since agents of $A$ cannot count ($\mathtt{d}$), $v_1$ will thus never be able to detect that it has two left neighbours.

We exhibit a new family of suitable graphs for $\mathtt{dA\texttt{\$}f}$-automata, namely \emph{numbered quasi-linear graphs} (NQLGs).
Intuitively, an NQLG is obtained similarly to the counterexample above: take an NLG and replicate each node an arbitrary number of times where every replica of a node has at least one predecessor (successor) if{}f the original node had a predecessor (successor).
\Cref{fig:nqlg_ex} shows an example of an NQLG with two replicas of the first node, three of the second node, etc.
\begin{figure}[h]
	\centering
	\begin{tikzpicture}[semithick,
	on grid, every node/.style={minimum width=4.2ex}]
	\def\dv{3ex}
	\def\dh{8ex}
	
	\tikzstyle{node}=[draw,circle]
	
	\node[node] (00) 						{$0$};
	\node[node] (01) [below=2*\dv of 00]	{$0$};
	
	\node[node] (10) [right=\dh of 00]		{$1$};
	\node[node] (11) [below=2*\dv of 10]	{$1$};
	\node[node] (12) [below=2*\dv of 11]	{$1$};
	
	\node[node] (20) [right=\dh of 10]		{$2$};
	\node[node] (21) [below=2*\dv of 20]	{$2$};
	
	\node[node] (30) [right=\dh of 20]		{$0$};
	
	\node[node] (40) [right=\dh of 30]		{$1$};
	\node[node] (41) [below=2*\dv of 40]	{$1$};
	
	\draw
		(00) -- (10)
		(00) -- (11)
		(01) -- (10)
		(01) -- (11)
		(01) -- (12)
		
		(10) -- (20)
		(11) -- (20)
		(12) -- (21)
		
		(20) -- (30)
		(21) -- (30)
		
		(30) -- (40)
		(30) -- (41)
		;
\end{tikzpicture}%
	\caption{A numbered quasi-linear graph of length $5$.}
	\label{fig:nqlg_ex}
\end{figure}
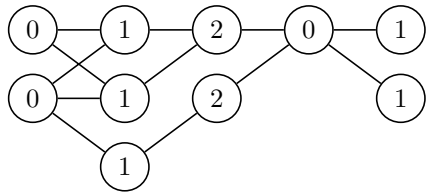
The formal definition takes a bit of a different angle on the same family of graphs:

\begin{definition2}[Numbered Quasi-Linear Graphs]\label{def:nqlg}
	Let $G = (V,E,\lambda)$ with $\lambda \colon {V \to \mathbb{Z}_3}$.
	Further, let $O = \{v \in V \mid \lambda(v) = 0,\,\lambda(w) \neq 2 \text{ for all neighbours $w$ of $v$}\}$.
	We call $G$ a \emph{numbered quasi-linear graph} of length $n \in \mathbb{N} \setminus \{0\}$ if{}f $O$ is non-empty and
	\begin{description}
		\item[(QL1)] $\forall v \in V \colon {\lambda(v) = \dist(O,v)\bmod{3}}$,\label{itm:nqlg_numbering}
		\item[(QL2)] $\forall \{v,w\} \in E \colon {\lambda(v) \neq \lambda(w)}$, and\label{itm:nqlg_neighbours}
		\item[(QL3)] for all $v \in V$\label{itm:nqlg_length}: $\big(\exists \{v,w\} \in E \colon {\dist(O,w) = \dist(O,v)+1}\big) \iff \dist(O,v) < n-1.$
	\end{description}
	The set $O$ is called the \emph{origin set}; the elements of $O$ are called \emph{origin nodes}.
	We denote the family of numbered quasi-linear graphs by $\mathsf{NQLG}$.
\end{definition2}

Observe that every NLG is an NQLG, i.e., $\mathsf{NLG} \subseteq \mathsf{NQLG}$.
Notice also that \hyperref[itm:nqlg_length]{(QL3)} requires every node with a distance to the origin strictly smaller than $n-1$ to have a successor and prohibits nodes with distance $n-1$ to do so.
This enables us to unambiguously assign a length to an NQLG.

In the rest of the section, we show that
(1)\label{itm:nqlg_1} $\mathtt{dA\texttt{\$}f}$-distributed automata can decide $\mathsf{NQLG}$ (the counterpart to \Cref{lem:nlg}), and
(2)\label{itm:nqlg_2} the TM head simulation $A^H$ of \Cref{cor:always_acc/rej} works on NQLGs as well.
We start with part \hyperref[itm:nqlg_1]{(1)}:

\begin{restatable}[Decidability of Numbered Quasi-Linear Graphs]{lemma}{lemNqlg}\label{lem:nqlg}
	We can effectively construct a \emph{$\mathtt{dA\texttt{\$}f}$}-distributed automaton that decides $\mathsf{NQLG}$.
\end{restatable}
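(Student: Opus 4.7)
The plan is to construct a $\mathtt{dA\texttt{\$}f}$-distributed automaton for $\mathsf{NQLG}$ that combines local sanity checks with two synchronous BFS waves. The labelling alphabet is $\mathbb{Z}_3$, and each agent's state records its static label $\ell = \lambda(v)$, a \emph{forward depth} $d_O \in \{?, 0, 1, 2\}$ intended to equal $\dist(O,v) \bmod 3$, a \emph{backward depth} $d_L \in \{?, 0, 1, 2\}$ intended to equal $\dist(T,v) \bmod 3$ where $T$ is the set of \emph{tails} (nodes with no neighbour labelled $(\ell+1) \bmod 3$), and a contagious \emph{error} flag. The accepting states are the error-free ones in which both $d_O$ and $d_L$ are already set; all remaining states, in particular those with the error flag on, are rejecting. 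Whenever a neighbour carries the error flag, the agent raises its own in the next step.

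First I install the local sanity checks: an agent raises the error flag if a neighbour shares its label (handling \hyperref[itm:nqlg_neighbours]{(QL2)}), or if it is not an origin (i.e.\ $\ell \neq 0$ or some neighbour has label $2$) yet has no neighbour labelled $(\ell - 1) \bmod 3$. I then launch two BFS waves in parallel: origin agents set $d_O := 0$, and in each subsequent step any agent with $d_O = {?}$ that sees some neighbour with $d_O = k \in \mathbb{Z}_3$ adopts $d_O := (k+1) \bmod 3$; by synchronicity this yields $d_O(v) = \dist(O,v) \bmod 3$ in step $\dist(O,v) + 1$. An analogous wave from the tails populates $d_L$. Once $d_O$ is set, the local check $d_O = \ell$ enforces \hyperref[itm:nqlg_numbering]{(QL1)}; once both $d_O$ and $d_L$ are set for an agent and for at least one of its neighbours, the local check that $(d_O + d_L) \bmod 3$ agrees across the edge enforces \hyperref[itm:nqlg_length]{(QL3)}. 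Any violation raises the error flag.

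The main obstacle is showing that the modular check $(d_O + d_L) \bmod 3$ along every edge faithfully captures \hyperref[itm:nqlg_length]{(QL3)} in the presence of \hyperref[itm:nqlg_numbering]{(QL1)} and \hyperref[itm:nqlg_neighbours]{(QL2)}. The forward direction follows from the identity $d_O(v) + d_L(v) = n - 1$ valid in any NQLG of length $n$. For the converse, suppose there is a tail $t^{*} \in T$ at depth $d^{*} < n - 1$: then the integer-valued function $d_O + d_L$ takes the value $d^{*}$ at $t^{*}$ but $n - 1$ at some deepest tail, hence is not constant on $V$. By \hyperref[itm:nqlg_neighbours]{(QL2)} the value $d_O$ changes by exactly $\pm 1$ along every edge, and by the triangle inequality $d_L$ changes by at most $1$ in absolute value, so every edge difference of $d_O + d_L$ lies in $\{-2, -1, 0, 1, 2\}$. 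Connectivity then forces some edge to carry a nonzero such difference, which is automatically nonzero modulo $3$, so the check fires. A short final verification that the unique synchronous run stabilises---to the all-accepting consensus on genuine NQLGs and, once the error flag has spread, to the all-rejecting consensus on any other $\mathbb{Z}_3$-labelled graph (including the degenerate case $O = \emptyset$, where no $d_O$ is ever set and hence every agent remains rejecting)---completes the construction.
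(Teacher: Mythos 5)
Your construction is correct, and it diverges from the paper's proof precisely where the paper has to work hardest. Both arguments run a synchronous forward wave from the origin set and a backward wave from the successor-less nodes, and both reject passively when $O=\emptyset$. The paper, however, detects violations of (QL1) and (QL3) purely through the \emph{timing} of these waves --- a node errs if a presumed successor reaches stage~1 before it does, or a presumed predecessor reaches stage~2 before it does --- and exhibiting a node that actually witnesses such a timing violation when (QL3) fails is the delicate part: the paper introduces differing limits-paths and turning nodes and argues about a turning node of maximal depth. You instead store $\dist(O,v)\bmod 3$ and $\dist(T,v)\bmod 3$ explicitly and use two purely local checks, $d_O=\ell$ for (QL1) and constancy of $(d_O+d_L)\bmod 3$ across edges for (QL3); the correctness of the latter reduces to the observation that the integer function $\dist(O,\cdot)+\dist(T,\cdot)$ equals $n-1$ everywhere on a genuine NQLG, changes by at most $2$ across any edge, and hence, if non-constant on a connected graph, must have an edge difference in $\{-2,-1,1,2\}$, which survives reduction modulo $3$. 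This replaces the paper's most involved argument with a few lines, at the cost of a somewhat larger state space. Three routine points should still be recorded in a full write-up: the transition function must be totalized for the case (which never occurs in the synchronous run) that a waiting node sees neighbours carrying conflicting depth values; once (QL1), (QL2) hold and $O\neq\emptyset$, the tail set is non-empty and both waves reach every node, so the unique run always stabilises to a consensus and the pair is indeed a distributed automaton; and ``$d_O$ changes by exactly $\pm1$'' needs (QL1) in addition to (QL2), though your argument only uses the bound $\lvert\Delta d_O\rvert\le 1$, which holds for any distance function.
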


\begin{claimproof}[Proof sketch. Full proof in \Cref{app:nqlg}.]
	We sketch the necessary construction.	
	It builds upon the one in \Cref{lem:nlg}, using the same labelling alphabet $\Lambda^L$ and states in $Q^L = (\Lambda^L \times \{0,1,2\}) \cup \{\bot\}$.
	The first component of the state and the $\bot$ state work exactly as in \Cref{lem:nlg}.
	The second component indicates one of three stages: $0$ -- the initial stage, $1$ -- origin set detected, or $2$ -- origin set and end of graph detected.
	The only accepting states are those in stage $2$, all other states are rejecting.
	
	If the given graph is an NQLG, first, all agents in the origin set $O$ transition to stage $1$ at the same time.
	In the $d^\text{th}$ subsequent step, all agents at distance $d$ from the origin set $O$ transition to stage $1$ simultaneously, until finally the agents at distance $n-1$ reach stage $1$, which are exactly the agents that do not have a successor.
	These agents at the end of the graph then continue to stage $2$.
	Again, in the $d^\text{th}$ step after that, all nodes at distance $n-1-d$ reach stage $2$ simultaneously, ultimately accepting the graph.
	
	If, however, the graph is not an NQLG, the automaton can always detect this.
	As for circular graphs in \Cref{lem:nlg}, if $O$ is empty, all agents will passively reject by never leaving stage $0$.
	If \hyperref[itm:nqlg_neighbours]{(QL2)} is violated, this will be detected and rejected immediately.
	A violation of \hyperref[itm:nqlg_numbering]{(QL1)} breaks simultaneity during the stage $1$ transitions, while a violation of \hyperref[itm:nqlg_length]{(QL3)} breaks simultaneity during the stage $2$ transitions.
	In the full proof, we show that agents that can detect these faults exist in any non-NQLG.
\end{claimproof}

For part \hyperref[itm:nqlg_2]{(2)}, we show that for any $\mathtt{dA\texttt{\$}f}$-distributed automaton $A$, running on an NLG $G \in \mathsf{NLG}$ and running on some NQLG $\tilde{G} \in \mathsf{NQLG}$ of the same length are equivalent.
In particular, this holds if we choose $A$ to be equivalent to $A^H$ from \Cref{cor:always_acc/rej}, which is possible by the hierarchy of expressive power.
Consider the runs of $A$ on $G$ and $\tilde{G}$.
As in the intuitive description of NQLGs, $\tilde{G}$ can be seen as arising from $G$ by replicating its nodes.
Since agents of $A$ cannot count ($\mathtt{d}$), they can only distinguish between no predecessor (successor) and at least one predecessor (successor).
It follows that, in the runs of $A$ on $G$ and $\tilde{G}$, an agent at a node $v$ in $G$ and all of its replicas in $\tilde{G}$ visit exactly the same sequence of states.
The next proposition formalizes this.
The proof can be found in \Cref{app:nqlg_eq_nlg}.

\begin{restatable}[Correspondence of NQLGs and NLGs]{proposition}{propNqlgEqNlg}\label{prop:nqlg_eq_nlg}
	Let $\tilde{G} = (\tilde{V},\tilde{E},\tilde{\lambda}) \in \mathsf{NQLG}$ be a numbered quasi-line graph with origin set $O$ and length $n \in \mathbb{N} \setminus \{0\}$, and let $G = (V = \{v_0,\dots,v_{n-1}\},E,\lambda) \in \mathsf{NLG}$ be the numbered line graph of the same length $n$.
	Further, let $M$ be a \emph{$\mathtt{dA\texttt{\$}f}$}-distributed machine, and let $\tilde{\rho} = {(\tilde{C}_i)}_{i \in \mathbb{N}}$ and $\rho = {(C_i)}_{i \in \mathbb{N}}$ be the (unique) fair runs of $M$ on $\tilde{G}$ and $G$, respectively.
	Then, for every $i \in \mathbb{N}$ and every $\tilde{v} \in \tilde{V}$, we have $\tilde{C}_i(\tilde{v}) = C_i(v_{\dist(O,\tilde{v})})$.
\end{restatable}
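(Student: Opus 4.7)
The plan is to prove the statement by induction on $i$, exploiting two structural facts: (a) in the synchronous setting ($\texttt{\$}$), the only permitted selection in each step is the entire vertex set, so both runs are uniquely determined and every agent transitions simultaneously at every step; (b) with existence detection ($\mathtt{d}$), the counting bound is $\beta = 1$, so $\mathcal{N}_v^C(q) \in \{0,1\}$ records only the presence of state $q$ among neighbours, not multiplicities. Together, these mean that a replicated neighbour contributes nothing beyond a single neighbour in the same state.

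For the base case $i=0$, I need $\tilde{C}_0(\tilde{v}) = \delta_0(\tilde{\lambda}(\tilde{v}))$ to equal $C_0(v_{\dist(O,\tilde{v})}) = \delta_0(\lambda(v_{\dist(O,\tilde{v})}))$. Using property \hyperref[itm:nqlg_numbering]{(QL1)}, $\tilde{\lambda}(\tilde{v}) = \dist(O,\tilde{v}) \bmod 3$, which is precisely the label of $v_{\dist(O,\tilde{v})}$ in the NLG of length $n$. So the initial configurations already satisfy the claim.

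The inductive step is the heart of the argument. Assume the claim holds at step $i$, and fix $\tilde{v} \in \tilde{V}$ with $d \coloneq \dist(O,\tilde{v})$. Because the run is synchronous, $\tilde{C}_{i+1}(\tilde{v}) = \delta(\tilde{C}_i(\tilde{v}), \mathcal{N}_{\tilde{v}}^{\tilde{C}_i})$, and analogously for $v_d$ in $G$. The first argument matches by the induction hypothesis, so it remains to show $\mathcal{N}_{\tilde{v}}^{\tilde{C}_i} = \mathcal{N}_{v_d}^{C_i}$. The key geometric observation is that every neighbour $\tilde{w}$ of $\tilde{v}$ satisfies $\dist(O,\tilde{w}) = d \pm 1$: by \hyperref[itm:nqlg_neighbours]{(QL2)} and \hyperref[itm:nqlg_numbering]{(QL1)}, $\tilde{\lambda}(\tilde{w}) \neq \tilde{\lambda}(\tilde{v})$ forces $\dist(O,\tilde{w}) \not\equiv d \pmod 3$, while the triangle inequality forces $|\dist(O,\tilde{w}) - d| \leq 1$. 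Conversely, $\tilde{v}$ has at least one neighbour at distance $d-1$ whenever $d \geq 1$ (take the predecessor on any shortest path from $O$), and at least one neighbour at distance $d+1$ whenever $d \leq n-2$, by \hyperref[itm:nqlg_length]{(QL3)}; and by \hyperref[itm:nqlg_length]{(QL3)} again, $\tilde{v}$ has no neighbour at distance $d+1$ when $d = n-1$. Thus the set of distances realized by the neighbours of $\tilde{v}$ in $\tilde{G}$ is exactly the same as the set of distances realized by the neighbours of $v_d$ in $G$, namely $\{d-1,d+1\} \cap [n-1]$.

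Combining these, for each state $q \in Q$, $\mathcal{N}_{\tilde{v}}^{\tilde{C}_i}(q) = 1$ iff some neighbour $\tilde{w}$ of $\tilde{v}$ has $\tilde{C}_i(\tilde{w}) = q$; by the induction hypothesis this rewrites to the existence of some $d' \in \{d-1,d+1\} \cap [n-1]$ with $C_i(v_{d'}) = q$, which is exactly $\mathcal{N}_{v_d}^{C_i}(q) = 1$. Hence the neighbourhood functions coincide and the inductive step closes. The main obstacle is the third paragraph above: one must pin down precisely why the existence of neighbours at distance $d \pm 1$ is forced in both directions, and this is where all three NQLG properties get used simultaneously; once that geometric fact is in hand, the collapse from multiplicities to existence afforded by $\beta = 1$ does the rest and the induction goes through mechanically.
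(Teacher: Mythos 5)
Your proof is correct and follows essentially the same route as the paper's: induction on $i$ with the base case from (QL1), and the inductive step by showing that the set of distances realized by the neighbours of $\tilde{v}$ coincides with that of the neighbours of $v_{\dist(O,\tilde{v})}$ (via the triangle inequality, (QL2) with (QL1), the shortest-path predecessor, and (QL3)), after which the counting bound $\beta=1$ and synchrony close the argument exactly as in the paper. The only detail you leave implicit is that $\dist(O,\tilde{v}) < n$ for every $\tilde{v}$, so that $v_{\dist(O,\tilde{v})}$ exists at all; the paper derives this from (QL3) by the same shortest-path argument you already use for predecessors.
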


\section{Snowball Fight!}\label{sec:snowball}
We turn our attention to the class $\mathtt{Da\texttt{\$}f}$.
Like $\mathtt{dA\texttt{\$}f}$-automata, these cannot decide $\mathsf{NLG}$ either, but the reason is more fundamental:
Distributed automata with halting acceptance ($\mathtt{a}$) cannot distinguish between a circular graph with correct numbering modulo $3$ (we will call this a \emph{numbered circular graph} (NCG)), and a sufficiently long NLG.
This can be proven analogously to \cite[Proposition 12]{classification}, as we will now sketch.

Let $A = (M,\Sigma)$ be a distributed automaton with halting acceptance ($\mathtt{a}$) that rejects an NCG $Z$.
By the consistency condition, any fair run of $M$ on $Z$ is rejecting.
Let $\rho = (C_i)_{i \in \mathbb{N}}$ be such a fair run, induced by a fair schedule $\sigma$, and $I \in \mathbb{N}$ such that $C_I$ is rejecting.
We construct an NLG $Z'$ by deleting an edge from $Z$ between a pair of nodes with numbering $2$ and $0$, and then concatenate $2I+1$ copies of $Z'$ to obtain the NLG $L$.
Consider the run of $M$ on $L$ induced by a schedule $\sigma'$ that replicates the first $I$ selections of $\sigma$ on all copies of $Z'$ (note that this can easily be turned into a fair schedule as we only fix finitely many selections).
It is easy to see that, for an agent at distance $k$ from the two endpoints of $L$, at least the first $k$ transitions are exactly those that the agent at the corresponding original node in $Z$ undergoes.
The agents of $L$ in the middle copy of $Z'$ will therefore reach rejecting states after at most $I$ transitions and halt due to halting acceptance ($\mathtt{a}$).
By the consistency condition, $A$ thus has to reject the NLG $L$ too.

This shows that an automaton with halting acceptance ($\mathtt{a}$) that accepts all NLGs necessarily also accepts NCGs.
However, NCGs lack an origin node, which is essential for our TM simulation as that is where the TM head starts.
Without an origin node, there is no simulated TM head, breaking the simulation%
\footnote{Requiring that initially exactly one node has a special label, say $h$, modelling the head does not work either.
	Again, analogously to \cite[Proposition 12]{classification}, one can show that an automaton with halting acceptance ($\mathtt{a}$) cannot accept all graphs with exactly one $h$-labelled node and reject all graphs with none.}.

To solve this problem, we observe that for our purpose of representing arbitrarily long TM tape sections, we do not need to accept all NLGs; rather, it suffices to accept infinitely many NLGs.
This guarantees that for every TM that halts (after finitely many steps), we accept an NLG that is long enough to simulate this run correctly.
We will thus construct a $\mathtt{Da\texttt{\$}f}$-automaton that accepts infinitely many, but not all, NLGs.
For this purpose, we introduce a new family of graphs, which can be seen as a subfamily of NLGs.
In the case of NLGs and NQLGs in \Cref{sec:nlg,sec:nqlg}, we first defined the families, and then constructed automata recognizing them.
In this section, we proceed differently: we construct the automaton first, thereby implicitly defining a family of graphs as the automaton's accepted language.

Intuitively, the automaton lets the agents engage in a \emph{snowball fight}!
For this, we augment the labelling of all agents with a direction, positive or negative, modelling the direction the agent is facing at the beginning of the fight.
Additionally, every other agent initially holds a snowball.
The labelling alphabet hence is $\Lambda^L = \mathbb{Z}_3 \times \{-1,+1\} \times \{0,1\}$: the first component is the numbering, the second is the direction, and the third indicates whether the agent starts with a snowball.
A \emph{snowball fight NLG} (SFNLG) is a $\Lambda^L$-labelled that becomes an NLG after projecting all labels onto their first component; SFNCGs are defined analogously.
We denote the family of SFNLGs by $\mathsf{SFNLG}$.
For example, the third graph of \Cref{fig:snowball_harmonious} is an SFNLG of length 7.
The agents carry out a snowball fight following a fixed set of rules.
Agents can throw snowballs, and catch or get hit by snowballs thrown at them.
Throughout the course of the fight, the number of snowballs will decrease.
The graph is rejected if an agent gets hit, and is accepted if this does not happen until eventually no snowballs are left.
We will show that the former happens for all SFNCGs, while the latter happens in infinitely many SFNLGs.
More details follow below.
The formal description can be found in \Cref{app:snowball}.

\begin{construction}[Snowball Fight!\ (sketch)]\label{pre_cstr:snowball}
	We construct a $\mathtt{Da\texttt{\$}f}$-distributed automaton with the labelling alphabet $\Lambda^L$ as described above.
	The set of states comprises $\Lambda^L$ and some auxiliary states: the accepting state $\checked$, the rejecting error state $\bot$, and a state $\square$ to indicate the intention to accept.
	
	We give an informal description of the behaviour of the automaton.
	First, the automaton uses counting detection ($\mathtt{D}$) as in \Cref{lem:nlg} to check that all nodes have at most degree $2$ and the numbering is correct, producing a $\bot$ state if the check fails.
	This already guarantees that the automaton can only accept SFNLGs or SFNCGs.
	Further, each agent ensures that either itself or both of its neighbours are starting with a snowball.
	Then, the automaton performs the snowball fight, following four simple rules (and some special rules for the agents at the endpoints of an SFNLG, which we will state separately):
	\begin{bracketenumerate}
		\setcounter{enumi}{-1}
		\item If an agent $v$ is holding a snowball, $v$ throws the snowball in the direction it is facing.\label{itm:sf_rule_throw}
		\item If exactly one snowball is thrown at an agent $v$ and $v$ faces towards it, then $v$ catches the snowball and turns around.\label{itm:sf_rule_catch}
		\item If exactly one snowball is thrown at an agent $v$ and $v$ faces away from it, then $v$ gets hit and transitions to the $\bot$ state.\label{itm:sf_rule_hit}
		\item If two snowballs are thrown at an agent $v$, then $v$ catches both of them, merges them into one, and turns around.\label{itm:sf_rule_double}
	\end{bracketenumerate}
	The possible scenarios that can arise from these rules are illustrated in \Cref{fig:snowball_tra}.
	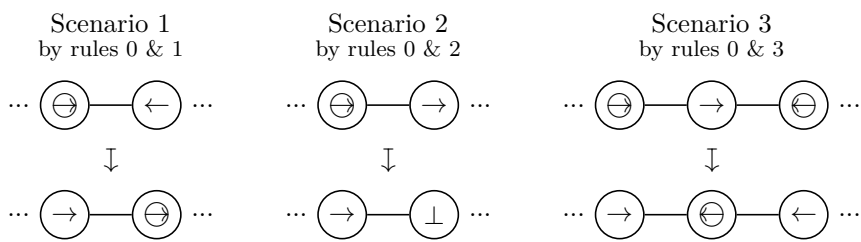
\begin{figure}[h]
		\centering
		\def\dh{8ex}

\tikzstyle{node}=[draw, circle, minimum width=4.2ex, inner sep=0pt]

\begin{tikzpicture}[semithick, on grid]
	\node			(i_0)							{$...$};
	\node[node]		(i0)	[right=0.5*\dh of i_0]	{$\rsnowball$};
	\node[node]		(i1)	[right=\dh of i0]		{$\lface$};
	\node			(i_1)	[right=0.5*\dh of i1]	{$...$};
	
	\draw
		(i0) -- (i1)
		;
	
	\node 			(to)	[below=0.6*\dh of i_0, xshift=\dh]	{\mapsdown};
	
	\node			(o_0)	[below=1.2*\dh of i_0]	{$...$};
	\node[node]		(o0)	[right=0.5*\dh of o_0]	{$\rface$};
	\node[node]		(o1)	[right=\dh of o0]		{$\rsnowball$};
	\node			(o_1)	[right=0.5*\dh of o1]	{$...$};
	
	\draw
		(o0) -- (o1)
		;
	
	\node		(title)		[above=1.5*\dh of to]		{Scenario 1};
	\node		(subtitle)	[below=0.3*\dh of title]	{\footnotesize by rules \ref{itm:sf_rule_throw} \& \ref{itm:sf_rule_catch}};
\end{tikzpicture}\qquad
\begin{tikzpicture}[semithick, on grid]
\node			(i_0)							{$...$};
\node[node]		(i0)	[right=0.5*\dh of i_0]	{$\rsnowball$};
\node[node]		(i1)	[right=\dh of i0]		{$\rface$};
\node			(i_1)	[right=0.5*\dh of i1]	{$...$};

\draw
(i0) -- (i1)
;

\node (to)				[below=0.6*\dh of i_0, xshift=\dh]	{\mapsdown};

\node			(o_0)	[below=1.2*\dh of i_0]	{$...$};
\node[node]		(o0)	[right=0.5*\dh of o_0]	{$\rface$};
\node[node]		(o1)	[right=\dh of o0]		{$\bot$};
\node			(o_1)	[right=0.5*\dh of o1]	{$...$};

\draw
(o0) -- (o1)
;

\node		(title)		[above=1.5*\dh of to]		{Scenario 2};
\node		(subtitle)	[below=0.3*\dh of title]	{\footnotesize by rules \ref{itm:sf_rule_throw} \& \ref{itm:sf_rule_hit}};
\end{tikzpicture}\qquad
\begin{tikzpicture}[semithick, on grid]
	\node			(i_0)							{$...$};
	\node[node]		(i0)	[right=0.5*\dh of i_0]	{$\rsnowball$};
	\node[node]		(i1)	[right=\dh of i0]		{$\rface$};
	\node[node]		(i2)	[right=\dh of i1]		{$\lsnowball$};
	\node			(i_2)	[right=0.5*\dh of i2]	{$...$};
	
	\draw
		(i0) -- (i1)
		(i1) -- (i2)
		;
	
	\node (to)				[below=0.6*\dh of i_0, xshift=1.5*\dh]	{\mapsdown};
	
	\node			(o_0)	[below=1.2*\dh of i_0]	{$...$};
	\node[node]		(o0)	[right=0.5*\dh of o_0]	{$\rface$};
	\node[node]		(o1)	[right=\dh of o0]		{$\lsnowball$};
	\node[node]		(o2)	[right=\dh of o1]		{$\lface$};
	\node			(o_1)	[right=0.5*\dh of o2]	{$...$};
	
	\draw
		(o0) -- (o1)
		(o1) -- (o2)
		;
	
	\node		(title)		[above=1.5*\dh of to]		{Scenario 3};
	\node		(subtitle)	[below=0.3*\dh of title]	{\footnotesize by rules \ref{itm:sf_rule_throw} \& \ref{itm:sf_rule_double}};
\end{tikzpicture}%
		\caption{The three scenarios arising from rules \ref{itm:sf_rule_throw} to \ref{itm:sf_rule_double} --- The arrow in each node indicates the direction the agent is facing.
			A circle indicates that the agent is holding a snowball.
			Note that scenario 2 assumes that we are not in scenario 3, otherwise rule \ref{itm:sf_rule_hit} would not apply.}
		\label{fig:snowball_tra}
	\end{figure}
	Lastly, the agents at the endpoints of SFNLGs can throw a snowball into the void (as they only have one neighbour).
	If this happens at the origin node, it indicates its intention to accept by a $\square$ state; otherwise, this produces a $\bot$ state.
	Intuitively, the $\square$ state means that the agent wants to accept (and thus halt) but cannot do so yet, as it first has to ensure that no other agent has halted in the rejecting $\bot$ state.
	For this, the $\square$ state propagates until it either reaches the other end of the SFNLG and turns into an accepting $\checked$ state, or it is intercepted by a rejecting $\bot$ state.
	Both the $\checked$ and the $\bot$ state propagate unconditionally, ultimately causing acceptance or rejection, respectively.
\end{construction}

We show that \Cref{pre_cstr:snowball} only accepts SFNLGs, and accepts infinitely many SFNLGs, which is sufficient for our purpose.
We split this up into three lemmas:
\begin{bracketenumerate}
	\item If at least one rejecting $\bot$ state occurs, the graph is rejected.\label{itm:sf_error}
	\item If no $\bot$ state occurs, the graph is in $\mathsf{SFNLG}$ and it is accepted.\label{itm:sf_nerror}
	\item There are infinitely many SFNLGs that get accepted.\label{itm:sf_existence}
\end{bracketenumerate}
These lemmas are proved as  \Cref{lem:sf_error,lem:sf_nerror,lem:sf_existence} in \Cref{app:snowball}.
We sketch the proof ideas for the first two and present the relevant graph construction for the third.

\begin{claimproof}[\text{Proof sketch of \hyperref[itm:sf_error]{(1)}}]
	We have to show that after an error state occurred, no accepting $\checked$ state can arise any more.
	Notice that for a $\checked$ state to be produced, the $\square$ state has to propagate from the origin node to the other end of the graph.
	Further notice that after an agent takes on the $\square$ state, it cannot be the first to produce an $\bot$ state, it can only adopt a $\bot$ state from neighbouring agents.
	So, the first time that an error occurs has happen at an agent that has not yet indicated its intention to accept by a $\square$ state.
	In that case however, the $\bot$ state would intercept the propagation of the $\square$ state, preventing it from reaching the other end of the graph to produce an accepting $\checked$ state.
\end{claimproof}

\begin{claimproof}[\text{Proof sketch of \hyperref[itm:sf_nerror]{(2)}}]
	For no error to occur the numbering has to be correct, so the given graph can only be an SFNLG or SFNCG.
	Intuition suggests that on a finite graph, the snowballs should eventually meet and get merged, ultimately decreasing their number to one.
	This is indeed correct.
	In an SFNCG, this last snowball inevitably hits an agent, producing an error by rule \ref{itm:sf_rule_hit} (at the latest after doing a whole round of the circular graph and turning all agents to face the same direction).
	As we assumed no errors to occur, the given graph has to be in $\mathsf{SFNLG}$.
	The last snowball eventually reaches the origin node, producing a $\square$ state, which propagates through the graph unhindered and produces an accepting $\checked$ state.
\end{claimproof}

We conclude that \Cref{pre_cstr:snowball} accepts or rejects every graph and is therefore indeed a distributed automaton.
Moreover, any accepted graph has to be in $\mathsf{SFNLG}$.
Lastly, we construct an infinite (non-exhaustive) family of accepted SFNLGs.

\begin{claimproof}[\text{Construction for \hyperref[lem:sf_existence]{(3)}}]
	We sketch the iterative construction of the direction and snowball labelling (see \Cref{fig:snowball_harmonious}):
	We start with a single agent facing left and holding a snowball.
	Clearly, the last (and only) snowball will get thrown into the void to the left.
	To construct the $(n+1)^\text{th}$ iteration of the construction, we take two copies of the $n^\text{th}$ construction, mirror one of them and connect both to a new right-facing agent such that the last two snowballs, one from each copy of the $n^\text{th}$ construction, meet and merge at this new middle agent.
	From there, the one remaining snowball will be passed to the left until it gets thrown into the void to the left.
	\begin{figure}[h]
		\centering
		\def\dh{8ex}
\def\das{0.75ex}
\def\dn{1.2ex}

\tikzstyle{node}=[draw, circle, minimum width=4.2ex, inner sep=0pt]

\pgfmathsetmacro{\num}{7}

\begin{tikzpicture}[semithick, on grid]
	\node (C) {\phantom{$C_1$:}};
	
	\def\styles{{"", "", "", "", "", "", "node"}}
	\def\arrowsnowball{{"\phantom{\rsnowball}", "\phantom{\lface}", "\phantom{\lsnowball}", "\phantom{\rface}", "\phantom{\rsnowball}", "\phantom{\rface}", "\lsnowball"}}
	
	\pgfmathparse{\styles[0]}
	\node[\pgfmathresult] (0) [right=\dh of C] {\pgfmathparse{\arrowsnowball[0]}$\pgfmathresult$};
	\foreach \i in {1,...,\the\numexpr\num-1\relax} {
		\pgfmathtruncatemacro{\label}{mod(\i, 3)};
		\pgfmathparse{\styles[\i]}
		\node[\pgfmathresult] (\i) [right=\dh of \the\numexpr\i-1\relax] {\pgfmathparse{\arrowsnowball[\i]}$\pgfmathresult$};
	}
\end{tikzpicture}%
\vspace{2ex}

\begin{tikzpicture}[semithick, on grid]
	\node (C) {\phantom{$C_1$:}};
	
	\def\styles{{"", "", "", "", "node", "node", "node"}}
	\def\arrowsnowball{{"\phantom{\rsnowball}", "\phantom{\lface}", "\phantom{\lsnowball}", "\phantom{\rface}", "\rsnowball", "\rface", "\lsnowball"}}
	
	\pgfmathparse{\styles[0]}
	\node[\pgfmathresult] (0) [right=\dh of C] {\pgfmathparse{\arrowsnowball[0]}$\pgfmathresult$};
	\foreach \i in {1,...,\the\numexpr\num-1\relax} {
		\pgfmathtruncatemacro{\label}{mod(\i, 3)};
		\pgfmathparse{\styles[\i]}
		\node[\pgfmathresult] (\i) [right=\dh of \the\numexpr\i-1\relax] {\pgfmathparse{\arrowsnowball[\i]}$\pgfmathresult$};
	}
	
	\foreach \i in {5,6} {
		\draw (\the\numexpr\i-1\relax) -- (\i);
	}
\end{tikzpicture}%
\vspace{2ex}

\begin{tikzpicture}[semithick, on grid]
	\node (C) {\phantom{$C_1$:}};
	
	\def\arrowsnowball{{"\rsnowball", "\lface", "\lsnowball", "\rface", "\rsnowball", "\rface", "\lsnowball"}}
	
	\node[node] (0) [right=\dh of C] {};
	\node		(as0) [above left=\das of 0] {\pgfmathparse{\arrowsnowball[0]}$\pgfmathresult$};
	\node		(n0) [below right=\dn of 0] {$0$};
	\foreach \i in {1,...,\the\numexpr\num-1\relax} {
		\pgfmathtruncatemacro{\label}{mod(\i, 3)};
		\node[node] (\i) [right=\dh of \the\numexpr\i-1\relax] {};
		\node		(as\i) [above left=\das of \i] {\pgfmathparse{\arrowsnowball[\i]}$\pgfmathresult$};
		\node		(n\i) [below right=\dn of \i] {$\label$};
	}
	
	\foreach \i in {1,...,\the\numexpr\num-1\relax} {
		\draw (\the\numexpr\i-1\relax) -- (\i);
	}
\end{tikzpicture}%
\vspace{2ex}

\begin{tikzpicture}[semithick, on grid]
	\node (C) {$C_2$:};
	
	\def\arrowsnowball{{"\rface", "\rsnowball", "\lface", "\rface", "\rface", "\lsnowball", "\lface"}}
	
	\node[node] (0) [right=\dh of C] {};
	\node		(as0) [above left=\das of 0] {\pgfmathparse{\arrowsnowball[0]}$\pgfmathresult$};
	\node		(n0) [below right=\dn of 0] {$0$};
	\foreach \i in {1,...,\the\numexpr\num-1\relax} {
		\pgfmathtruncatemacro{\label}{mod(\i, 3)};
		\node[node] (\i) [right=\dh of \the\numexpr\i-1\relax] {};
		\node		(as\i) [above left=\das of \i] {\pgfmathparse{\arrowsnowball[\i]}$\pgfmathresult$};
		\node		(n\i) [below right=\dn of \i] {$\label$};
	}
	
	\foreach \i in {1,...,\the\numexpr\num-1\relax} {
		\draw (\the\numexpr\i-1\relax) -- (\i);
	}
\end{tikzpicture}%
\vspace{2ex}

\begin{tikzpicture}[semithick, on grid]
	
	\node (C) {$C_3$:};
	
	\def\arrowsnowball{{"\rface", "\rface", "\rsnowball", "\rface", "\lsnowball", "\lface", "\lface"}}
	
	\node[node] (0) [right=\dh of C] {};
	\node		(as0) [above left=\das of 0] {\pgfmathparse{\arrowsnowball[0]}$\pgfmathresult$};
	\node		(n0) [below right=\dn of 0] {$0$};
	\foreach \i in {1,...,\the\numexpr\num-1\relax} {
		\pgfmathtruncatemacro{\label}{mod(\i, 3)};
		\node[node] (\i) [right=\dh of \the\numexpr\i-1\relax] {};
		\node		(as\i) [above left=\das of \i] {\pgfmathparse{\arrowsnowball[\i]}$\pgfmathresult$};
		\node		(n\i) [below right=\dn of \i] {$\label$};
	}
	
	\foreach \i in {1,...,\the\numexpr\num-1\relax} {
		\draw (\the\numexpr\i-1\relax) -- (\i);
	}
\end{tikzpicture}%
\vspace{2ex}

\begin{tikzpicture}[semithick, on grid]
	
	\node (C) {$C_4$:};
	
	\def\arrowsnowball{{"\rface", "\rface", "\rface", "\lsnowball", "\lface", "\lface", "\lface"}}
	
	\node[node] (0) [right=\dh of C] {};
	\node		(as0) [above left=\das of 0] {\pgfmathparse{\arrowsnowball[0]}$\pgfmathresult$};
	\node		(n0) [below right=\dn of 0] {$0$};
	\foreach \i in {1,...,\the\numexpr\num-1\relax} {
		\pgfmathtruncatemacro{\label}{mod(\i, 3)};
		\node[node] (\i) [right=\dh of \the\numexpr\i-1\relax] {};
		\node		(as\i) [above left=\das of \i] {\pgfmathparse{\arrowsnowball[\i]}$\pgfmathresult$};
		\node		(n\i) [below right=\dn of \i] {$\label$};
	}
	
	\foreach \i in {1,...,\the\numexpr\num-1\relax} {
		\draw (\the\numexpr\i-1\relax) -- (\i);
	}
	
	\node[below=0.5*\dh of 3] {$\,\cdots$};
\end{tikzpicture}%

\begin{tikzpicture}[semithick, on grid]
	
	\node (C) {$C_7$:};
	
	\def\arrowsnowball{{"\lsnowball", "\lface", "\lface", "\lface", "\lface", "\lface", "\lface"}}
	
	\node[node] (0) [right=\dh of C] {};
	\node		(as0) [above left=\das of 0] {\pgfmathparse{\arrowsnowball[0]}$\pgfmathresult$};
	\node		(n0) [below right=\dn of 0] {$0$};
	\foreach \i in {1,...,\the\numexpr\num-1\relax} {
		\pgfmathtruncatemacro{\label}{mod(\i, 3)};
		\node[node] (\i) [right=\dh of \the\numexpr\i-1\relax] {};
		\node		(as\i) [above left=\das of \i] {\pgfmathparse{\arrowsnowball[\i]}$\pgfmathresult$};
		\node		(n\i) [below right=\dn of \i] {$\label$};
	}
	
	\foreach \i in {1,...,\the\numexpr\num-1\relax} {
		\draw (\the\numexpr\i-1\relax) -- (\i);
	}
\end{tikzpicture}%
		\caption{Iterative construction of the labelling of an accepted SFNLG of length $7$ and its accepting run --- The first three rows show the construction of the SFNLG; the other rows show how the snowball fight is performed without errors.}
		\label{fig:snowball_harmonious}
	\end{figure}
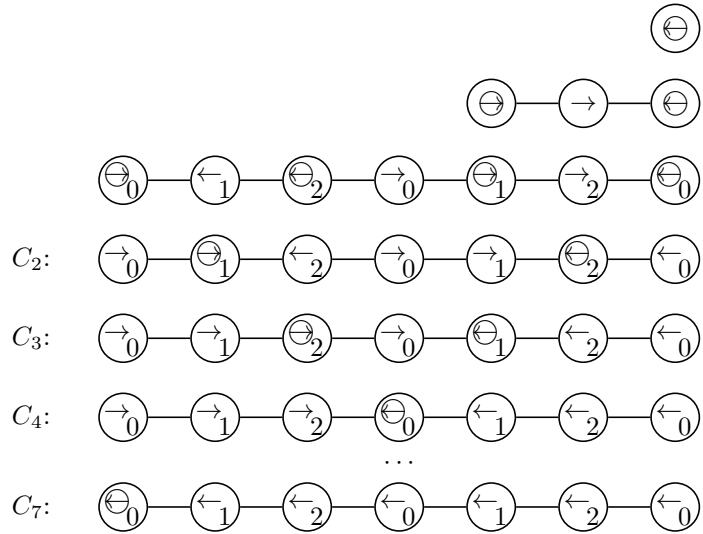
	
	By numbering the nodes such that the leftmost node is the origin node, this ultimately causes acceptance for any iteration of the construction.
	The $n^\text{th}$ construction has length $2^n-1$, yielding infinitely many accepted SFNLGs.
\end{claimproof}

\section{Conclusion}\label{sec:conclusion}
We have initiated the study of verification problems for the classes of distributed automata introduced in \cite{classification}.
We have shown that the emptiness problem, a fundamental verification problem, is undecidable or trivially decidable for all classes of \cite{classification}.
Since distributed automata are closed under intersection (\Cref{lem:intersection}), this means that the safety problem -- given an automaton and a class $\mathcal{D}$ of ``dangerous'' graphs, does the automaton accept some graph of $\mathcal{D}$? -- is undecidable whenever $\mathcal{D}$ is recognized by some automaton.

Our undecidability proofs simulate the execution of a Turing machine on blank tape by means of a distributed automaton running on families of labelled graphs ($\mathsf{NLG}$, $\mathsf{NQLG}$, and a subfamily of $\mathsf{SFNLG}$) having both a specific structure and a specific labelling.
The proofs break down if we restrict ourselves to automata that only allow unlabelled graphs as inputs, so they can only decide purely structural properties of the input graph, or to automata that only decide labelling properties.
The decision power of the latter was studied by Czerner \emph{et al.} in \cite{decision_power}.
In future work, we plan to study these two special cases.
We conjecture that the emptiness problem for automata deciding structural properties remains undecidable, but becomes decidable for some of the classes of \cite{decision_power}.

\bibliographystyle{plainurl}
\bibliography{references}

\begin{thebibliography}{10}

\bibitem{beep_1}
Yehuda Afek, Noga Alon, Ziv Bar{-}Joseph, Alejandro Cornejo, Bernhard Haeupler,
  and Fabian Kuhn.
\newblock Beeping a maximal independent set.
\newblock {\em Distributed Comput.}, 26(4):195--208, 2013.
\newblock URL: \url{https://doi.org/10.1007/s00446-012-0175-7}, \href
  {https://doi.org/10.1007/S00446-012-0175-7}
  {\path{doi:10.1007/S00446-012-0175-7}}.

\bibitem{ni}
Dana Angluin.
\newblock Local and global properties in networks of processors (extended
  abstract).
\newblock In Raymond~E. Miller, Seymour Ginsburg, Walter~A. Burkhard, and
  Richard~J. Lipton, editors, {\em Proceedings of the 12th Annual {ACM}
  Symposium on Theory of Computing, April 28-30, 1980, Los Angeles, California,
  {USA}}, pages 82--93. {ACM}, 1980.
\newblock \href {https://doi.org/10.1145/800141.804655}
  {\path{doi:10.1145/800141.804655}}.

\bibitem{pp_1}
Dana Angluin, James Aspnes, Melody Chan, Michael~J. Fischer, Hong Jiang, and
  Ren{\'{e}} Peralta.
\newblock Stably computable properties of network graphs.
\newblock In Viktor~K. Prasanna, S.~Sitharama Iyengar, Paul~G. Spirakis, and
  Matt Welsh, editors, {\em Distributed Computing in Sensor Systems, First
  {IEEE} International Conference, {DCOSS} 2005, Marina del Rey, CA, USA, June
  30 - July 1, 2005, Proceedings}, volume 3560 of {\em Lecture Notes in
  Computer Science}, pages 63--74. Springer, 2005.
\newblock \href {https://doi.org/10.1007/11502593\_8}
  {\path{doi:10.1007/11502593\_8}}.

\bibitem{pp_2}
Dana Angluin, James Aspnes, Zo{\"{e}} Diamadi, Michael~J. Fischer, and
  Ren{\'{e}} Peralta.
\newblock Computation in networks of passively mobile finite-state sensors.
\newblock {\em Distributed Comput.}, 18(4):235--253, 2006.
\newblock URL: \url{https://doi.org/10.1007/s00446-005-0138-3}, \href
  {https://doi.org/10.1007/S00446-005-0138-3}
  {\path{doi:10.1007/S00446-005-0138-3}}.

\bibitem{beep_2}
Alejandro Cornejo and Fabian Kuhn.
\newblock Deploying wireless networks with beeps.
\newblock In Nancy~A. Lynch and Alexander~A. Shvartsman, editors, {\em
  Distributed Computing, 24th International Symposium, {DISC} 2010, Cambridge,
  MA, USA, September 13-15, 2010. Proceedings}, volume 6343 of {\em Lecture
  Notes in Computer Science}, pages 148--162. Springer, 2010.
\newblock \href {https://doi.org/10.1007/978-3-642-15763-9\_15}
  {\path{doi:10.1007/978-3-642-15763-9\_15}}.

\bibitem{decision_power}
Philipp Czerner, Roland Guttenberg, Martin Helfrich, and Javier Esparza.
\newblock Decision power of weak asynchronous models of distributed computing.
\newblock In Avery Miller, Keren Censor{-}Hillel, and Janne~H. Korhonen,
  editors, {\em {PODC} '21: {ACM} Symposium on Principles of Distributed
  Computing, Virtual Event, Italy, July 26-30, 2021}, pages 115--125. {ACM},
  2021.
\newblock \href {https://doi.org/10.1145/3465084.3467918}
  {\path{doi:10.1145/3465084.3467918}}.

\bibitem{nfsm}
Yuval Emek and Roger Wattenhofer.
\newblock Stone age distributed computing.
\newblock In Panagiota Fatourou and Gadi Taubenfeld, editors, {\em {ACM}
  Symposium on Principles of Distributed Computing, {PODC} '13, Montreal, QC,
  Canada, July 22-24, 2013}, pages 137--146. {ACM}, 2013.
\newblock \href {https://doi.org/10.1145/2484239.2484244}
  {\path{doi:10.1145/2484239.2484244}}.

\bibitem{classification}
Javier Esparza and Fabian Reiter.
\newblock A classification of weak asynchronous models of distributed
  computing.
\newblock In Igor Konnov and Laura Kov{\'{a}}cs, editors, {\em 31st
  International Conference on Concurrency Theory, {CONCUR} 2020, September 1-4,
  2020, Vienna, Austria (Virtual Conference)}, volume 171 of {\em LIPIcs},
  pages 10:1--10:16. Schloss Dagstuhl - Leibniz-Zentrum f{\"{u}}r Informatik,
  2020.
\newblock URL: \url{https://doi.org/10.4230/LIPIcs.CONCUR.2020.10}, \href
  {https://doi.org/10.4230/LIPICS.CONCUR.2020.10}
  {\path{doi:10.4230/LIPICS.CONCUR.2020.10}}.

\bibitem{survey_1}
Ofer Feinerman and Amos Korman.
\newblock Theoretical distributed computing meets biology: {A} review.
\newblock In Chittaranjan Hota and Pradip~K. Srimani, editors, {\em Distributed
  Computing and Internet Technology, 9th International Conference, {ICDCIT}
  2013, Bhubaneswar, India, February 5-8, 2013. Proceedings}, volume 7753 of
  {\em Lecture Notes in Computer Science}, pages 1--18. Springer, 2013.
\newblock \href {https://doi.org/10.1007/978-3-642-36071-8\_1}
  {\path{doi:10.1007/978-3-642-36071-8\_1}}.

\bibitem{wmdc}
Lauri Hella, Matti J{\"{a}}rvisalo, Antti Kuusisto, Juhana Laurinharju, Tuomo
  Lempi{\"{a}}inen, Kerkko Luosto, Jukka Suomela, and Jonni Virtema.
\newblock Weak models of distributed computing, with connections to modal
  logic.
\newblock {\em Distributed Comput.}, 28(1):31--53, 2015.
\newblock URL: \url{https://doi.org/10.1007/s00446-013-0202-3}, \href
  {https://doi.org/10.1007/S00446-013-0202-3}
  {\path{doi:10.1007/S00446-013-0202-3}}.

\bibitem{KuusistoR20}
Antti Kuusisto and Fabian Reiter.
\newblock Emptiness problems for distributed automata.
\newblock {\em Inf. Comput.}, 272:104503, 2020.
\newblock URL: \url{https://doi.org/10.1016/j.ic.2019.104503}, \href
  {https://doi.org/10.1016/J.IC.2019.104503}
  {\path{doi:10.1016/J.IC.2019.104503}}.

\bibitem{survey_2}
Saket Navlakha and Ziv Bar{-}Joseph.
\newblock Distributed information processing in biological and computational
  systems.
\newblock {\em Commun. {ACM}}, 58(1):94--102, 2015.
\newblock \href {https://doi.org/10.1145/2678280} {\path{doi:10.1145/2678280}}.

\bibitem{crn}
David Soloveichik, Matthew Cook, Erik Winfree, and Jehoshua Bruck.
\newblock Computation with finite stochastic chemical reaction networks.
\newblock {\em Nat. Comput.}, 7(4):615--633, 2008.
\newblock URL: \url{https://doi.org/10.1007/s11047-008-9067-y}, \href
  {https://doi.org/10.1007/S11047-008-9067-Y}
  {\path{doi:10.1007/S11047-008-9067-Y}}.

\end{thebibliography}

\appendix

\section{Appendix to \Cref{sec:groundwork}}
\subsection{Proof of \Cref{lem:intersection}}\label{app:intersection}
\intersection*

We give the construction for the second claim and prove its correctness.
This immediately implies that distributed automata are closed under intersection.
We first consider distributed automata with acceptance by stable consensus ($\mathtt{A}$), where the construction is slightly simpler.
Then, we discuss the necessary changes for distributed automata with halting acceptance ($\mathtt{a}$).

\begin{construction}[Product of Distributed Machines ($\mathtt{A}$)]\label{cstr:product_A}
	We are given a distributed automaton $A_1 = (M_1 = (Q_1,\delta_{10},\delta_1,Y_1,N_1), \Sigma)$ with counting bound $\beta_1$ and an $L(A_1)$-distributed automaton $A_2 = (M_2 = (Q_2,\delta_{20},\delta_2,Y_2,N_2), \Sigma)$ with counting bound $\beta_2$.
	Both $A_1,A_2$ are in the class $x\mathtt{A}wz$ with $xwz \in \{\mathtt{d},\mathtt{D}\}\{\texttt{\$},\mathtt{s},\mathtt{S}\}\{\mathtt{f},\mathtt{F}\}$ and operate on the same labelling alphabet $\Lambda$.
	We define an $x\mathtt{A}wz$-distributed machine $M = (Q,\delta_0,\delta,Y,N)$ with:
	\begin{itemize}
		\item The labelling alphabet is $\Lambda$.
		\item The counting bound is $\max\{\beta_1,\beta_2\}$.
		\item $Q = Q_1 \times Q_2$.
		As in the familiar DFA product construction, the states are pairs of states from the two machines $M_1,M_2$.
		\item $\delta_0 \colon {\Lambda \to Q, l \mapsto \big(\delta_{10}(l),\delta_{20}(l)\big)}$.
		The pair of states is initialized with the initialization functions of $M_1,M_2$ in the respective components.
		\item $\delta \colon {Q \times {[\beta]}^Q \to Q, \big((q_1,q_2),\mathcal{N})\big) \mapsto \big(\delta_1(q_1,\mathcal{N}(\cdot,*)),\delta_2(q_2,\mathcal{N}(*.\cdot))\big)}$.
		Again, as for DFAs, a transition in $M$ executes the transitions of $M_1,M_2$ in the respective components.
		Note that we use the notation $\mathcal{N}(\cdot,*)(q_1) = \mathcal{N}(q_1,*)$ (analogously $\mathcal{N}(*,\cdot)$) with the $*$-notation as introduced in the preliminaries (\Cref{sec:prelim}).
		\item The accepting states are $Y = Y_1 \times Y_2$, i.e., the states where both $M_1$ and $M_2$ accept, while the rejecting states are $N = (N_1 \times Q_2) \cup (Y_1 \times N_2)$, i.e., the states where $M_1$ rejects, or $M_1$ accepts and $M_2$ rejects.
	\end{itemize}
	Intuitively, $M$ executes $M_1$ in the first component of the agents' states and $M_2$ in the second.
	A graph is accepted if both machines accept, and rejected if either $M_1$ rejects, or $M_1$ accepts and $M_2$ rejects.
\end{construction}
	
Note that the construction would also work if we chose $(N_1 \times Q_2) \cup (Q_1 \times N_2)$ as rejecting states.
However, our choice of $N$ is made to emphasize that a decision of $A_2$ without $A_1$ accepting is somewhat meaningless, as we cannot guarantee consistency of fair runs of $M_2$ on graphs outside $L(A_1)$.
$A_1$ on the other hand, is a distributed automaton and satisfies the consistency condition on all graphs.
Only once $M_1$ accepts, we can draw reliable conclusions from $M_2$'s decision.
If $M_1$ rejects, $M$ rejects anyway, independent of what $M_2$ does.
Furthermore, choosing $N$ in this way, will be important for \Cref{cstr:product_a} -- the corresponding construction for automata with halting acceptance ($\mathtt{a}$).

\begin{proof}[Proof of \Cref{lem:intersection} for $\mathtt{A}$]\label{prf:product_A}
	Let $A \coloneq (M,\Sigma)$, where $M$ is the $x\mathtt{A}wz$-distributed machine from \Cref{cstr:product_A} and $\Sigma$ a matching scheduler.
	Further, let $G = (V,E,\lambda)$ be any $\Lambda$-labelled graph and $\sigma$ a fair schedule.
	Since $A,A_1,A_2$ all use a scheduler of the same class, $\sigma$ induces fair runs $\rho = (C_i)_{i \in \mathbb{N}}, \rho_1 = (C_{1i})_{i \in \mathbb{N}}, \rho_2 = (C_{2i})_{i \in \mathbb{N}}$ of $M,M_1,M_2$, respectively, on $G$.
	By construction of $M$, we clearly have $C_i(v) = (C_{1i}(v),C_{2i}(v))$ for all $i \in \mathbb{N}, v \in V$.

	Since the distributed automaton $A_1$ satisfies the consistency condition on all graphs, it unambiguously accepts or rejects $G$.
	If $A_1$ accepts (rejects) $G$, there is an index $I_1 \in \mathbb{N}$, such that $C_{1i}$ is accepting (rejecting) for all $i \geq I_1$.
	We thus have
	$$C_i(V) \subseteq C_{1i}(V) \times C_{2i}(V) \subseteq \begin{cases}
		Y_1 \times Q_2 & \text{if $A_1$ accepts $G$,}\\
		N_1 \times Q_2 \subseteq N & \text{if $A_1$ rejects $G$}\\ 
	\end{cases}$$
	for all $i \geq I_1$.
	So if $A_1$ rejects $G$, then so does $A$.
	
	Now, assume that $A_1$ accepts $G$, i.e., $G \in L(A_1)$.
	Since $A_2$ is an $L(A_1)$-distributed automaton, it also unambiguously accepts or rejects $G$.
	There hence is an index $I_2$ for $\rho_2$ analogous to $I_1$ for $\rho_1$.
	We thus have
		$$C_i(V) \subseteq C_{1i}(V) \times C_{2i}(V) \subseteq \begin{cases}
		Y_1 \times Y_2 \subseteq Y & \text{if $A_2$ accepts $G$,}\\
		Y_1 \times N_2 \subseteq N & \text{if $A_2$ rejects $G$}\\ 
	\end{cases}$$
	for all $i \geq \max\{I_1,I_2\}$.
	So, if $A_2$ accepts (rejects) $G$, then so does $A$.
	
	With this case exhaustion, we conclude that $A$ accepts $G$ if{}f $A_1,A_2$ accept $G$, and rejects otherwise, i.e., $L(A) = L(A_1) \cap L(A_2)$.
	In particular, $A$ satisfies the consistency condition on all graphs, making it an $x\mathtt{A}wz$-distributed automaton.
\end{proof}

\begin{construction}[Product of Distributed Machines ($\mathtt{a}$)]\label{cstr:product_a}
	We are given $A_1,A_2$ as before, except this time they are in the class $x\mathtt{a}wz$.
	Based on $M = (Q,\delta_0,\delta,Y,N)$ from \Cref{cstr:product_A}, we make an adjustment.
	We define an $x\mathtt{a}wz$-distributed machine $M' = (Q,\delta_0,\delta',Y,N)$ with:
	\begin{itemize}
		\item The labelling alphabet $\Lambda$, the counting bound $\beta$, the set of states $Q$, the initialization function $\delta_0$, and the accepting and rejecting sets remain unchanged.
		\item We define $\delta' \colon Q \times {[\beta]}^{Q} \to Q$:
		$$(q,\mathcal{N}) \mapsto \begin{cases}
			q & \text{if } q \in N_1 \times Q_2,\\
			\delta(q,\mathcal{N}) & \text{else.}
		\end{cases}$$
	\end{itemize}
	Intuitively, we run $M$, unless an agent $v$ is in a rejecting state in $N_1 \times Q_2 \subseteq N$; in that case, $\delta'$ forces $v$ to halt in order to comply with halting acceptance ($\mathtt{a}$).
	All other accepting or rejecting states are in $Y_1 \times Y_2 = Y$ or $Y_1 \times N_2 \subseteq N$ and thus halting anyway, as $M_1,M_2$ employ halting acceptance ($\mathtt{a}$).
\end{construction}

\begin{proof}[Proof of \Cref{lem:intersection} for $\mathtt{a}$]
	Let $A' \coloneq (M',\Sigma)$, where $M'$ is the $x\mathtt{a}wz$-distributed machine from \Cref{cstr:product_a} and $\Sigma$ a matching scheduler.
	Applying \Cref{cstr:product_A} to $A_1,A_2$, yields an $x\mathtt{A}wz$-distributed automaton $A$ with $L(A) = L(A_1) \cap L(A_2)$ by the previous proof.
	In contrast to $A'$, $A$ does not belong to the class $x\mathtt{a}wz$, as the second component of rejecting states in $N_1 \times Q_2 \subseteq N$ can still change.
	It hence suffices to show that $A'$ accepts if{}f $A$ accepts, and rejects otherwise.
	For this, let $G = (V,E,\lambda)$ be any $\Lambda$-labelled graph and $\sigma$ a fair schedule.
	Since $A,A',A_1$ all use a scheduler of the same class, $\sigma$ induces fair runs $\rho = (C_i)_{i \in \mathbb{N}}, \rho' = (C_i')_{i \in \mathbb{N}}, \rho_1 = (C_{1i})_{i \in \mathbb{N}}$ of $M,M',M_1$, respectively, on $G$.
	
	If no state in $N_1 \times Q_2 \subseteq N$ ever occurs in $\rho$, then $\delta'$ is defined such that $\rho = \rho'$.
	So $\rho'$ is accepting if{}f $\rho$ is accepting, and rejecting otherwise.
	
	Now, assume that an agent $v_0 \in V$ of $A$ eventually reaches a state $(n_1,q_2) \in N_1 \times Q_2 \subseteq N$ in $\rho$, i.e., there exists an $i_0 \in \mathbb{N}$ with $C_{i_0}(v_0) = (n_1,q_2)$.
	This implies $C_{1i_0}(v_0) = n_1 \in N_1$, making $v_0$ as an agent of $A_1$ halt in a rejecting state, as $A_1$ is a distributed automaton with halting acceptance ($\mathtt{a}$).
	$\rho_1$ can therefore only be rejecting, implying that $\rho$ is rejecting too.
	We claim that this implies $\rho'$ is rejecting as well.
	Let $\rho'_{(1)} = ({C_i'(\cdot)}_1)$ denote the projection onto the first component of each state in each configuration of $\rho'$.
	We show by induction on $i$ that $\rho'_{(1)} = \rho_1$.
	We have ${C_0'(v)}_1 = {\delta_0(\lambda(v))}_1 = \delta_{10}(\lambda(v)) = C_{10}(v)$ for all $v \in V$, establishing the base case.
	Now, assume $C_i'(v)_1 = C_{1i}(v)$ for all $v \in V$ to hold for an arbitrary but fixed $i \in \mathbb{N}$.
	Let $v_0 \in V$ be any agent.
	Applying the induction hypothesis to $v_0$ and its neighbours, we have ${C_i'(v_0)}_1 = C_{1i}(v_0) \eqcolon q_1$ and $\mathcal{N}_{v_0}^{C_i'}(\cdot,*) = \mathcal{N}_{v_0}^{C_{1i}} \eqcolon \mathcal{N}_1$.
	If $q_1 \in N_1$, $\delta'$ forces $v_0$ as an agent of $A'$ to halt, yielding ${C_{i+1}'(v_0)}_1 = q_1$.
	Similarly, $v_0$ halts as an agent of $A_1$, yielding $C_{1(i+1)}(v_0) = q_1$.
	On the other hand, if $q_1 \notin N_1$, $\delta'$ just applies $\delta$, yielding
	\begin{flalign*}
		&&{C_{i+1}'(v_0)}_1	&= {\delta\big(C_i'(v_0),\mathcal{N}_{v_0}^{C_i'}\big)}_1 = \delta_1\big({C_i'(v_0)}_1,\mathcal{N}_{v_0}^{C_i'}(\cdot,*)\big)&& \\
		&&					&= \delta_1(q_1,\mathcal{N}_1) = \delta_1\big(C_{1i}(v_0),\mathcal{N}_{v_0}^{C_{1i}}\big) = C_{1(i+1)}(v_0).&&
	\end{flalign*}
	As $v_0 \in V$ was arbitrary, this completes the induction step.
	As $\rho_1$ is rejecting, there is an index $I_1 \in \mathbb{N}$, such that $C_{1i}$ is rejecting for all $i \geq I_1$.
	This implies $C_i'(V)_1 = C_{1i}(V) \subseteq N_1$ and hence $C_i'(V) \subseteq N_1 \times Q_2 \subseteq N$ for all $i \geq I_1$, concluding that $\rho'$ is rejecting.
\end{proof}

\subsection{Proofs of \Cref{sec:nlg,sec:tm_head}}
\subsubsection{Proof of \Cref{lem:nlg}}\label{app:nlg}
\lemNlg*

First, we formalize the definition of $\mathsf{NLG}$ and the construction described in the proof sketch of \Cref{lem:nlg}.

\begin{definition2}[Numbered Linear Graphs]\label{def:nlg}
	Let $G = (V,E,\lambda)$ with $\lambda \colon {V \to \mathbb{Z}_3}$.
	We call $G$ a \emph{numbered linear graph} of length $n \in \mathbb{N} \setminus \{0\}$ if{}f
	\begin{description}
		\item[(L1)] $V = \{v_0,\dots,v_{n-1}\}, E = \{\{v_i,v_{i+1}\} \mid i \in [n-2]\}$ ($G$ is linear), and\label{itm:nlg_line}
		\item[(L2)] $\forall i \in [n-1] \colon {\lambda(v_i) = i\bmod{3}}$ (agent numbering modulo 3).\label{itm:nlg_numbering}
	\end{description}
	Note that we are again using $0$-indexing.
	The node $v_0$ that uniquely has numbering $0$ and no left neighbour, that is, no neighbour numbered $2$, is called the \emph{origin node}.
	We denote the family of numbered linear graphs by $\mathsf{NLG}$.
\end{definition2}

\begin{construction}[Deciding Numbered Linear Graphs]\label{cstr:nlg}
	We construct a $\mathtt{DA\texttt{\$}f}$-distributed machine $M^L = (Q^L,\delta_0^L,\delta^L,Y^L,N^L)$ with:
	\begin{itemize}
		\item The labelling alphabet is $\Lambda^L = \mathbb{Z}_3$.
		\item $Q^L = (\Lambda^L \times \{0,1\}) \cup \{\bot\}$.
		Intuitively, the first component is the (static) numbering, and the second component is the agent's current guess whether the graph has an origin node.
		$\bot$ is an error state.
		\item $\delta_0^L \colon {\Lambda^L \to Q^L, l \mapsto (l,0)}.$
		The numbering is taken directly from the labelling, and each agent's initial guess is that there is no origin node.
		\item The accepting states are $Y^L = \mathbb{Z}_3 \times \{1\}$, i.e., the states where the agent guesses that there is an origin node, while the rejecting states are simply the rest $N^L = Q^L \setminus Y^L$, i.e., the states where the agent guesses that there is no origin node and the error state.
	\end{itemize}
	For an agent $v \in V$ and a configuration $C \colon {V \to Q}$, we define the transition function $\delta^L$:
	\begin{equation}
		\begin{aligned}
			\big((n,g),\mathcal{N}_v^C\big) & \mapsto \bot & \text{for } & \mathcal{N}_v^C(n,*) \geq 1\ \vee \\
			&&& \exists m \in \mathbb{Z}_3 \colon {\mathcal{N}_v^C(m,*) \geq 2}, \\
			\big(q,\mathcal{N}_v^C\big) & \mapsto \bot & \text{for } & \mathcal{N}_v^C(\bot) \geq 1,
		\end{aligned}\labelledtag{L-error}
	\end{equation}
	that is, an agent will transition to the error state if it detects a neighbour with the same numbering as itself or two neighbours with the same numbering as each other, since this means that the graph's numbering is incorrect.
	Furthermore, if any neighbour of $v$ already is in the error state, $v$'s next state is $\bot$, so error states propagate through the graph.
	For all the following transitions, we assume that, in addition to the specified conditions being fulfilled, the conditions for the \ref{L-error} transitions are not met:
	\begin{align}
		\big((0,0),\mathcal{N}_v^C\big) & \mapsto (0,1) & \text{for } & \mathcal{N}_v^C(2,*) = 0 \labelledtag{L-origin}
		\intertext{If the numbering indicates that $v$ does not have a left neighbour, $v$ guesses to be the origin node $v_0$.}
		\big((n,0),\mathcal{N}_v^C\big) & \mapsto (n,1) & \text{for } & \mathcal{N}_v^C(*,1) \geq 1, \labelledtag{L-line}
	\end{align}
	that is, positive guesses propagate through the graph.
	Note that this is still under the condition that no \ref{L-error} transitions apply.
		
	If none of the above transitions apply, $v$ remains in its previous state.
	We call this a \emph{silent} transition.
\end{construction}

\begin{proof}[Proof of \Cref{lem:nlg}]
	Let $A^L = (M^L, \Sigma)$, where $M^L$ is \Cref{cstr:nlg} and $\Sigma$ a matching scheduler.
	
	We start with the case that $G = (V = \{v_0,\dots,v_{n-1}\},E,\lambda)$ is, in fact, an NLG of length $n \in \mathbb{N} \setminus \{0\}$.
	Then, no node has a neighbour with the same numbering as itself or two neighbours with the same numbering.
	Therefore, the first \ref{L-error} transition will never be executed and consequently neither will the second.
	Every agent starts with the guess $0$.
	In the first transition, $v_0$ performs the \ref{L-origin} transition, changing its guess to $1$.
	For all $i \in [n-1] \setminus \{0\}$, if $v_{i-1}$ already guesses $1$, the agent $v_i$ will change its guess to $1$ in the next transition by applying \ref{L-line}.
	Inductively, it follows that all agents eventually guess $1$, accepting $G$.
	\medskip
	
	Now, we assume that $G = (V,E,\lambda)$ is not an NLG, so it violates \hyperref[itm:nlg_line]{(L1)}, i.e., being linear, or \hyperref[itm:nlg_numbering]{(L2)}, i.e., agent numbering modulo $3$.
	
	$G$ not being linear could mean that there is at least one node $v \in V$ with at least three neighbours $v^1,v^2,v^3$.
	By the pigeon hole principle, there are at least two nodes among $v,v^1,v^2,v^3$ with the same numbering.
	If $v$ is one of these two nodes, the first condition of the first \ref{L-error} transition is met; if the same numbering is between $v^i,v^j,i \neq j$, the second condition of the first \ref{L-error} transition is met.
	So, in both cases $v$ will transition to the error state.
	The $\bot$ state will propagate through the graph, so that $G$ is rejected.
	
	If every node has at most two neighbours, but $G$ is still not linear, this means that every node has exactly two neighbours.
	If there is an agent $v$ with neighbours $v^1,v^2$ that can perform \ref{L-origin}, the three nodes $v,v^1,v^2$ can only be numbered with $0$ and $1$.
	As above, we can apply the pigeon hole principle and show that $G$ is rejected.
	On the other hand, if no agent can perform the \ref{L-origin} transition, this circumstance will always be the case, as it only depends on the static numbering.
	Since all agents start with guess $0$, there will hence never be a first agent to guess $1$, meaning that all agents stay in rejecting states indefinitely; $G$ is rejected.
	
	Now for the case that $G$ is a linear graph, but the numbering is incorrect.
	If none of the agents can perform \ref{L-origin}, $G$ is rejected as above.
	Otherwise, starting from the node $v_0$ that can perform the \ref{L-origin} transition, we number the nodes in an ascending manner.
	Since the numbering is incorrect, there is a first node $v_i, i \geq 2$ with the wrong numbering $(i-2)\bmod{3}$ or $(i-1)\bmod{3}$.
	As the numbering of $v_{i-2}$ and $v_{i-1}$ is correct, the node $v_{i-1}$ and its two neighbours $v_{i-2},v_i$ are only numbered with $(i-2)\bmod{3}$ and $(i-1)\bmod{3}$.
	The rejection of $G$ again follows using the pigeon hole principle as before.
\end{proof}

\subsubsection{Proof of \Cref{cor:always_acc/rej}}\label{app:tm_head}
\simulation*

We present the relevant construction that was sketched in \Cref{sec:tm_head}.
Recall that our TM model introduced in \Cref{sec:prelim} has a tape that is bounded to the left and that the TM head starts on the leftmost cell.

\begin{construction}[Turing Machine Head]\label{cstr:tm_head}
	Given a Turing machine $T = (Q,q_0,F,\Gamma,\Sigma,\square,\delta)$, we construct a $\mathtt{da\texttt{\$}f}$-distributed machine $M^H(T) = (Q^H,\delta_0^H,\delta^H,Y^H,N^H)$ with:
	\begin{itemize}
		\item The labelling alphabet is $\Lambda^H = \mathbb{Z}_3$.
		\item $Q^H = Q' \cup (Q' \times Q \times \mathcal{H}) \cup (\{\circ\} \times Q') \cup \{\checked, \bot\}$ with $Q' = \Gamma \times \Lambda^H$ and $\mathcal{H} = \{H,H_{+1},H_{-1}\}$.
		A state in $Q'$ consists of a symbol from the TM's tape alphabet and the node's numbering.
		If an agent represents the TM head, it additionally saves the current TM state from $Q$, and one out of three TM head states from $\mathcal{H}$, which are to be interpreted as the TM head after a move ($H$), or before a move with the intention to move in the positive/right or negative/left direction ($H_{+1},H_{-1}$), respectively.
		States in $Q'$ can carry an uninitialized-marker $\circ$, which will be explained later.
		$\checked$ and $\bot$ are the accepting and rejecting states, respectively.
		\item $\delta_0^H \colon {\Lambda^H \to Q^H, l \mapsto (\circ,\square,l)}.$
		Every agent is initialized with the uninitialized-marker set and a blank symbol, representing blank tape.
		The numbering is again taken directly from the labelling.
		\item As mentioned, the accepting state is $Y^H = \{\checked\}$, while the rejecting state is $N^H = \{\bot\}$.
	\end{itemize}
	For an agent $v \in V$ and a configuration $C \colon {V \to Q}$, we define the transition function $\delta^H$:
	\begin{align}
		\big((\circ,\square,0),\mathcal{N}_v^C\big) & \mapsto (\square,0,q_0,H) && \text{for } \mathcal{N}_v^C(\circ,*,2) = 0 \labelledtag{H-origin} \\
		\big((\circ,\square,n),\mathcal{N}_v^C\big) & \mapsto (\square,n) && \text{otherwise} \labelledtag{H-init}
	\end{align}
	These two transitions initialize the uninitialized agents $v$ for the simulation.
	If $v$ is the unique origin node, it adopts the TM's initial state $q_0$ and the TM head state $H$, while removing its marker.
	Otherwise, $v$ only removes its uninitialized-marker.
	The purpose of the initialization is to ensure that exactly one simulated TM head spawns at the beginning and no further ones can spawn later on.
	
	After the initialization, we can start simulating TM transitions.
	\begin{equation}
		\begin{aligned}
			\big((\gamma,n,q,H),\mathcal{N}_v^C\big) & \mapsto (\gamma',n,q',H_{d}) & \text{for } & (q',\gamma',d) = \delta(q,\gamma) \\
			\big((\gamma,n,q,H_d),\mathcal{N}_v^C\big) & \mapsto (\gamma,n) & \text{for } & \mathcal{N}_v^C(*,(n+d)\bmod{3}) \geq 1 \\
			\big((\gamma,n),\mathcal{N}_v^C\big) & \mapsto (\gamma,n,q,H) & \text{for } & \mathcal{N}_v^C(*,(n-d)\bmod{3},q,H_d) \geq 1
		\end{aligned}\labelledtag{H-tra}
	\end{equation}
	Intuitively, a simulated TM transition consists of three actions, performed in two steps: First, the next TM transition is initiated by the agent $u$ that simulates the TM head by updating its tape symbol and the TM state, and indicating the intended moving direction of the TM head.
	Then, $u$ makes sure that a neighbour $w$ in the intended moving direction exists before dropping the TM state and TM head state, returning to a $Q'$-state.
	Simultaneously, $w$ detects the intention and adopts the TM state from $u$ and the TM head state $H$, simulating the TM head.
	
	If, however, there is no next TM transition to be executed, i.e., the TM halts, or there is no agent in the intended moving direction, i.e., the TM uses more tape cells than the graph represents, the agent simulating the TM head transitions to the accepting or rejecting state, respectively:
	\begin{align}
		\big((\gamma,n,q,H),\mathcal{N}_v^C\big) & \mapsto \checked & \text{for } & \delta(q,\gamma) = \bot, \labelledtag{H-halt} \\
		\big((\gamma,n,q,H_d),\mathcal{N}_v^C\big) & \mapsto \bot & \text{for } & \mathcal{N}_v^C(*,(n+d)\bmod{3}) = 0. \labelledtag{H-overflow}
		\intertext{Note that there is only one simulated TM head, which gets replaced in either of the two transitions above.
			Therefore, only either $\checked$ states or $\bot$ states can occur in a run, never both.
			The following propagation transition is thus well-defined and complies with halting acceptance ($\mathtt{a}$).
			We define for $r \in \{\checked,\bot\}$}
		\big(q,\mathcal{N}_v^C\big) & \mapsto r & \text{for } & \mathcal{N}_v^C(r) \geq 1. \labelledtag{H-prop}
	\end{align}
	
	All other transitions are silent.
\end{construction}

To prove the correctness of our simulation, we first formally introduce TM configurations and transitions.
While we assume that the reader is already familiar with these conceptually, our notation differs from the conventional TM model.
This specific notation will play a crucial role in the subsequent lemma.

A \emph{configuration} $C$ of a TM $T = (Q,q_0,F,\Gamma,\Sigma,\square,\delta)$ is a triple $(q,\theta,p) \in Q \times \Gamma^+ \times \mathbb{N}$, representing the Turing machine's state, the tape word, i.e., the word of symbols on the tape, and the ($0$-indexed) position of the Turing machine's head in the tape word.
Note that the tape word can by definition not be empty as there always has to be a symbol $\theta_p$ for the Turing machine's head to point at, even if it is only a blank symbol $\square$.
For two configurations $(q,\theta,p),(q',\theta',p') \in Q \times \Gamma^+ \times \mathbb{N}$ with $\delta(q,\theta_p) = (q',\gamma,d)$, we denote by $(q,\theta,p) \to_T (q',\theta',p')$ that
\begin{equation}
	\begin{aligned}
		\theta' & = \begin{cases}
			\theta_{0...p-1} \gamma \theta_{p+1...|\theta|-1} & \text{for } p+d \neq |\theta|, \\
			\theta_{0...p-1} \gamma \theta_{p+1...|\theta|-1} \square & \text{for } p+d = |\theta|,
		\end{cases} \\
		p' & = p+d.
	\end{aligned}\labelledtag{TM}
\end{equation}
Note that the tape word can only ever get longer and in particular does not omit blank symbols from the configuration.
Further note what happens in the edge cases of the head's position:
If it exceeds the end of the tape word, a blank symbol is appended to the tape word.
The symbol at the head's position $\theta_p$ is therefore always well-defined.
At the other end, the beginning of the tape word can never be exceeded since $p' \in \mathbb{N}$, realizing the aforementioned left-boundedness of the TM tape.

With this notation the halting problem may be phrased as follows:
Decide whether there are $m \in \mathbb{N}, (q,\theta,p) \in Q \times \Gamma^+ \times \mathbb{N}$, such that $(q_0,\square,0) \to_T^m (q,\theta,p)$ and $\delta(q,\theta_p) = \bot$.

Lastly, we formalize how a configuration of $T$ is represented as a configuration of the distributed machine $M^H(T)$ on an NLG $G = (V = \{v_0,\dots,v_{n-1}\},E,\lambda) \in \mathsf{NLG}$ of length $n \in \mathbb{N} \setminus \{0\}$.
For $(q,\theta,p) \in Q \times \Gamma^+ \times \mathbb{N}, i \in [n-1]$, we define
$$C_{(q,\theta,p)}(v_i) \coloneq
\begin{cases}
	({(\theta\square^\omega)}_i,i\bmod{3}) & \text{for } i \neq p, \\
	({(\theta\square^\omega)}_p,p\bmod{3},q,H) & \text{for } i = p.
\end{cases}$$

\begin{lemma}\label{lem:simulation}
	Let $T = (Q,q_0,F,\Gamma,\Sigma,\square,\delta)$ be a Turing machine and $M^H(T)$ be \Cref{cstr:tm_head} for $T$.
	Further, let $(q,\theta,p) \in Q \times \Gamma^+ \times \mathbb{N}$ be a configuration of $T$ and $\rho = {(C_i)}_{i \in \mathbb{N}}$ be the (unique) run of $M^H(T)$ on a numbered linear graph $G \in \mathsf{NLG}$ of length $n \geq |\theta|$.
	Then, for any $m \in \mathbb{N}$, we have
	$$(q_0,\square,0) \to_T^m (q,\theta,p) \implies C_{2m+1} = C_{(q,\theta,p)}.$$
\end{lemma}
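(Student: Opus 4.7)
\medskip
\noindent\textbf{Proof proposal.}
The plan is to proceed by induction on $m$, tracking how two consecutive synchronous steps of $M^H(T)$ realize one transition of $T$. The base case $m = 0$ handles the initialization: at $C_0$, every agent $v_i$ carries the marker $\circ$ with blank symbol $\square$. In one synchronous step, the origin node $v_0$ is the unique agent meeting the side condition of \ref{H-origin} (since in a numbered linear graph, $v_0$ has no left neighbour numbered $2$), so it transitions to $(\square,0,q_0,H)$; every other $v_i$ applies \ref{H-init} and drops its marker. This is exactly $C_{(q_0,\square,0)}$, establishing $C_1 = C_{(q_0,\square,0)}$.

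For the inductive step, suppose $(q_0,\square,0) \to_T^{m+1} (q',\theta',p')$, and factor this as $(q_0,\square,0) \to_T^m (q,\theta,p) \to_T (q',\theta',p')$ where $\delta(q,\theta_p) = (q',\gamma,d)$. Applying the inductive hypothesis (noting $n \geq |\theta'| \geq |\theta|$), we have $C_{2m+1} = C_{(q,\theta,p)}$. I would now compute the next two configurations by case analysis over the agents. From $C_{2m+1}$ to $C_{2m+2}$, only $v_p$ can move: it satisfies the first \ref{H-tra}, becoming $(\gamma, p\bmod 3, q', H_d)$; all other agents have $Q'$-states and no $H_d$-neighbour, so they remain silent. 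From $C_{2m+2}$ to $C_{2m+3}$, the two active agents are $v_p$ and $v_{p+d}$: the former applies the second \ref{H-tra}, which fires because its neighbour $v_{p+d}$ carries numbering $(p+d)\bmod 3$, dropping to $(\gamma, p\bmod 3)$; the latter applies the third \ref{H-tra}, since its neighbour $v_p$ is in state $(\gamma, p\bmod 3, q', H_d)$, becoming $({(\theta\square^\omega)}_{p+d}, (p+d)\bmod 3, q', H)$. All remaining agents stay put.

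It remains to verify that this coincides with $C_{(q',\theta',p')}$. By definition of the TM transition \ref{TM}, $\theta'$ agrees with $\theta$ (extended by blanks as needed) everywhere except at position $p$, where it takes value $\gamma$. Hence $({(\theta'\square^\omega)}_p, {(\theta'\square^\omega)}_{p'}) = (\gamma, {(\theta\square^\omega)}_{p+d})$, and $({(\theta'\square^\omega)}_i) = ({(\theta\square^\omega)}_i)$ for all $i \notin \{p,p'\}$. Combined with $p' = p+d$, this matches the computed states exactly.

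The main obstacles will be bookkeeping rather than conceptual. I would need to be careful that the neighbour $v_{p+d}$ actually exists in $G$ — this uses the bound $n \geq |\theta'|$ together with $p' \leq |\theta'| - 1$ to conclude $p+d \in [n-1]$ — and that the edge case $p+d = |\theta|$ (where $T$ extends the tape with a blank) is harmlessly absorbed, since the target agent was initialized with tape symbol $\square$ anyway. A small subtlety is checking that no spurious transitions fire in the intermediate configuration $C_{2m+2}$: in particular, no agent other than $v_p, v_{p+d}$ has an $H_d$-neighbour, and no $\circ$-marked state remains at this point (since all markers were removed at step $1$), so neither \ref{H-origin} nor \ref{H-init} can accidentally create a second simulated head.
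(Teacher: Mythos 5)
Your proposal is correct and follows essentially the same route as the paper's proof: induction on $m$, with the base case handling the \ref{H-origin}/\ref{H-init} initialization and the inductive step unrolling one TM transition into the three \ref{H-tra} actions over two synchronous steps, including the same existence check for $v_{p+d}$ via $p' \le |\theta'|-1 \le n-1$. The bookkeeping points you flag (silence of all other agents, absorption of the tape-extension edge case by the blank initialization) are exactly the ones the paper's proof verifies.
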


Note that it is important here that we do not omit blank symbols from the tape word $\theta$.
$|\theta|$ thereby encodes exactly how many tape cells have been visited by the TM head in the transitions leading up to the configuration $(q,\theta,p)$.
This information is crucial to determine the minimal graph size required to simulate the TM's run until the given configuration.

\begin{proof}
	We proceed by induction on $m$.
	
	We start with the base case for $m=0$.
	On the left-hand side, we get $(q_0,\square,0) \to_T^0 (q_0,\square,0)$.
	For $\rho$, we get $C_0(v_i) = \delta_0^H(\lambda(v_i)) = \delta_0^H(i\bmod{3}) = (\circ,\square,i\bmod{3})$ for all $i \in [n-1]$ according to \hyperref[itm:nlg_numbering]{(L2)}.
	In the first transition, all agents perform the respective initialization transitions.
	Since $G$ is numbered correctly, the only agent to perform \ref{H-origin} is at the origin node $v_0$, while all other agents $v_i,i \geq 1$ perform \ref{H-init}.
	This yields
	\begin{flalign*}
		&&C_1(v_0) & = (\square,0,q_0,H) = ({(\square\square^\omega)}_0,0,q_0,H) = C_{(q_0,\square,0)}(v_0) \text{, and}&& \\
		&&C_1(v_i) & = (\square,i\bmod{3}) = ({(\square\square^\omega)}_i,i\bmod{3}) = C_{(q_0,\square,0)}(v_i)&&
	\end{flalign*}
	for $i \geq 1$, concluding the base case.
	
	For the induction step, we assume that the claim holds for $m-1$ and prove
	$$(q_0,\square,0) \to_T^m (q',\theta',p') \implies C_{2m+1} = C_{(q',\theta',p')}.$$
	The left-hand side is equivalent to the existence of $(q,\theta,p) \in Q \times \Gamma^+ \times \mathbb{N}$, such that
	$$(q_0,\square,0) \to_T^{m-1} (q,\theta,p) \to_T (q',\theta',p')$$
	with $\delta(q,\theta_p) = (q',\gamma,d)$, $p' = p+d$ and the dependency between $\theta$ and $\theta'$ as defined in \ref{TM}.
	We can apply the induction hypothesis to the first relation to get $C_{2m-1} = C_{(q,\theta,p)}.$
	When $M^H(T)$ transitions from $C_{(q,\theta,p)}$ to $C_{2m}$, the agent $v_p$ performs the first \ref{H-tra} transition from $({(\theta\square^\omega)}_p,p\bmod{3},q,H) = (\theta_p,p\bmod{3},q,H)$ to $(\gamma,p\bmod{3},q',H_d)$.
	All other agents perform silent transitions.
	Note that $v_p$ will always have a neighbour $v_{p+d}=v_{p'}$ because $p' \geq 0$ and $p' \leq |\theta'|-1 \leq n-1$.
	In the next step, $v_p$ therefore detects its neighbour $v_{p+d}$ with numbering $(p+d)\bmod{3}$ and executes the second \ref{H-tra} transition to arrive at the state $(\gamma,p\bmod{3})$.
	$v_{p'}$ in turn, detects its neighbour $v_p=v_{p'-d}$ with numbering $(p'-d)\bmod{3}$ and a TM-state $q'$, indicating the intended moving direction $d$.
	Thus, applying the third \ref{H-tra} transition, $v_{p'}$ transitions from $({(\theta\square^\omega)}_{p'},p'\bmod{3})$ to $({(\theta\square^\omega)}_{p'},p'\bmod{3},q',H)$.
	All other agents again transition silently.
	Using that $p' = p+d \neq p$, we can conclude
	\begin{flalign*}
		&&C_{2m+1}(v_p) & = (\gamma,p\bmod{3})&& \\
		&&& = ({(\theta_{0...p-1} \gamma \theta_{p+1...|\theta|-1} \square^\omega)}_p,p\bmod{3})&& \\
		&&& = ({(\theta'\square^\omega)}_p,p\bmod{3}) = C_{(q',\theta',p')}(v_p),&&
		\intertext{and}
		&&C_{2m+1}(v_{p'}) & = ({(\theta\square^\omega)}_{p'},p'\bmod{3},q',H)&& \\
		&&& = ({(\theta_{0...p-1} \gamma \theta_{p+1...|\theta|-1} \square^\omega)}_{p'},p'\bmod{3},q',H)&& \\
		&&& = ({(\theta'\square^\omega)}_{p'},p'\bmod{3},q',H) = C_{(q',\theta',p')}(v_{p'}).&&
		\intertext{All other agents only performed silent transitions, so for $i \notin \{p,p'\}$, we get}
		&&C_{2m+1}(v_i) & = ({(\theta\square^\omega)}_i,i\bmod{3})&& \\
		&&& = ({(\theta_{0...p-1} \gamma \theta_{p+1...|\theta|-1} \square^\omega)}_i,i\bmod{3})&& \\
		&&& = ({(\theta'\square^\omega)}_i,i\bmod{3}) = C_{(q',\theta',p')}(v_i),&&
	\end{flalign*}
	completing the induction step.
\end{proof}

This shows that the simulation is indeed correct, as long as the NLG is sufficiently large and the TM has not halted.
Next, we analyse what happens when either of these prerequisites are violated.

\begin{lemma}\label{lem:case_halt/overflow}
	Let $T = (Q,q_0,F,\Gamma,\Sigma,\square,\delta)$ be a Turing machine and $M^H(T)$ be \Cref{cstr:tm_head} for $T$.
	Further, let $\rho_n$ be the (unique) run of $M^H(T)$ on a numbered linear graph $G \in \mathsf{NLG}$ of length $n \in \mathbb{N} \setminus \{0\}$.
	\begin{bracketenumerate}
		\item If $T$ on blank tape visits every tape cell, then $\rho_n$ is rejecting for all $n \in \mathbb{N}$.\label{itm:h/o_1}
		\item If $T$ halts on blank tape, then there exists a threshold $n_0 \in \mathbb{N}$, such that $\rho_n$ is accepting for all $n \geq n_0$, and rejecting otherwise.\label{itm:h/o_2}
	\end{bracketenumerate}
\end{lemma}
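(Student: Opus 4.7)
The plan is to leverage \Cref{lem:simulation}, which establishes a faithful step-by-step correspondence between $T$'s run and $\rho_n$ as long as each reached configuration fits on the graph, and then carefully analyse the boundary behaviour --- either when the simulated head would move off the NLG, or when $T$ halts. In both regimes the simulated head advances via \ref{H-tra}; once \ref{H-halt} or \ref{H-overflow} fires, \ref{H-prop} spreads the resulting $\checked$ or $\bot$ through the connected graph, and halting acceptance ($\mathtt{a}$) freezes both states, so no conflict between $\checked$ and $\bot$ can arise.

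For part \hyperref[itm:h/o_1]{(1)}, I would fix $n$ and let $m^*$ be the least number of TM steps after which the head occupies position $n$; such an $m^*$ exists by the hypothesis that $T$ visits every cell, and since the head can only move by one per step, after $m^*-1$ steps $T$ is necessarily in a configuration $(q,\theta,n-1)$ with $|\theta| = n$ and $\delta(q,\theta_{n-1}) = (q',\gamma,+1)$. \Cref{lem:simulation} then gives $C_{2m^*-1} = C_{(q,\theta,n-1)}$, and inspecting the next two transitions shows that $v_{n-1}$ first enters state $(\gamma, (n-1) \bmod 3, q', H_{+1})$ via the first \ref{H-tra} transition, and then, because its only neighbour $v_{n-2}$ has numbering $(n-2) \bmod 3 \neq n \bmod 3$, applies \ref{H-overflow} to enter $\bot$. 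The \ref{H-prop} transitions then render $\rho_n$ rejecting. For part \hyperref[itm:h/o_2]{(2)}, let $n_0$ be the number of distinct cells visited by $T$ before halting on blank tape. If $n \geq n_0$, \Cref{lem:simulation} applies throughout $T$'s entire finite run, so the agent at the halting head position eventually applies \ref{H-halt}, producing $\checked$ which \ref{H-prop} spreads into an accepting run. If $n < n_0$, the TM must reach position $n$ before halting, and the argument of part~(1) applies verbatim at the first step where this happens, yielding a rejecting run.

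The main obstacle is a bookkeeping one: one must confirm that every agent other than the simulated head transitions silently during the two critical steps around $C_{2m^*-1}$, and in particular that the numberings modulo $3$ prevent any accidental firing of the third \ref{H-tra} rule at a neighbour of $v_{n-1}$ --- for instance, $v_{n-2}$ could only absorb the head if the intended direction were $-1$, but $v_{n-1}$ holds $H_{+1}$, so the match fails. Once this has been verified, the monotone behaviour of $\checked$ and $\bot$ under \ref{H-prop} on a connected NLG closes both cases at once.
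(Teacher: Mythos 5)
Your proposal is correct and follows essentially the same route as the paper: invoke \Cref{lem:simulation} up to the last configuration that fits on the graph, then observe that $v_{n-1}$ fires \ref{H-overflow} (respectively, the head agent fires \ref{H-halt}), with \ref{H-prop} propagating the verdict, and take $n_0 = |\theta|$ for the halting configuration $(q,\theta,p)$. The paper merely derives the intermediate configuration $(q,\theta,n-1)$ with $|\theta|=n$ and $d=+1$ slightly more formally from its tape-word conventions, and it addresses your "bookkeeping obstacle" (no $\checked$/$\bot$ conflict) by noting that there is only ever one simulated head, so only one of the two states can ever be produced.
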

\begin{proof}
	First, we assume that $T$ visits every tape cell.
	Consider the run $\rho_n = {(C_i)}_{i \in \mathbb{N}}$ on $G = (V = \{v_0,\dots,v_{n-1}\},E,\lambda)$.
	Since $T$ visits every tape cell, it will at some point visit tape cell $n$ (recall $0$-indexing).
	Let $m_0 \in \mathbb{N}$, such that $T$ visits tape cell $n$ for the first time in transition $m_0$.
	In particular, $T$ has at most visited the first $n$ tape cells $0$ to $n-1$ in the first $m_0-1$ transitions.
	Therefore, there are $(q,\theta,p),(q',\theta',p') \in Q \times \Gamma^+ \times \mathbb{N}$, such that $(q_0,\square,0) \to_T^{m_0-1} (q,\theta,p) \to_T (q',\theta',p')$ with $|\theta'| = n+1$ and $|\theta| \leq n$.
	From this, we can reconstruct the transition $\delta(q,\theta_p) = (q',\theta'_p,d)$.
	Evidently, the length of the tape word increases in transition $m_0$, which only happens if $p+d = |\theta|$ where we append a blank symbol to the tape word.
	We can hence deduce $d = {+1}$, since $p \leq |\theta|-1$, as well as $|\theta|+1 = |\theta'| = n+1$, and thus $p = |\theta|-1 = n-1$.
	As $n = |\theta|$, we can employ \Cref{lem:simulation}, to get that $C_{2m_0-1} = C_{(q,\theta,p)} = C_{(q,\theta,n-1)}$ and therefore
	$$C_{2m_0-1}(v_{n-1}) = ((\theta\square^\omega)_{n-1},(n-1)\bmod{3},q,H) = (\theta_p,p\bmod{3},q,H).$$
	Next, the agent $v_{n-1}$ performs the first \ref{H-tra} transition to the state $(\theta'_{p},p\bmod{3},q',H_{+1})$.
	Since $v_{n-1}$ only has one neighbour $v_{n-2}=v_{p-1}$, there is no neighbour with numbering $(p+1)\bmod{3}$.
	Therefore, the second \ref{H-tra} transition does not apply, but rather \ref{H-overflow}, producing a rejecting state $\bot$.
	The rejecting state propagates through the graph with \ref{H-prop}, rejecting $G$.
	Note that, as mentioned in \Cref{cstr:tm_head}, both accepting and rejecting states can only be produced by replacing the TM head state.
	Since there is at most one agent in a TM head state at any time, either agents with $\checked$ states or agents with $\bot$ states can exist, never both.
	Consequently, the propagation does indeed follow through and $\rho_n$ is rejecting.
	\medskip
	
	Now, we assume that $T$ halts on blank tape, i.e., there are $m \in \mathbb{N}, (q,\theta,p) \in Q \times \Gamma^+ \times \mathbb{N}$, such that $(q_0,\square,0) \to_{T}^m (q,\theta,p)$ and $\delta(q,\theta_p) = \bot$.	
	Consider the run $\rho_n = {(C_i)}_{i \in \mathbb{N}}$ on $G = (V = \{v_0,\dots,v_{n-1}\},E,\lambda)$ with length $n \geq |\theta|$.
	According to \Cref{lem:simulation}, we have $C_{2m+1} = C_{(q,\theta,p)}$ and therefore
	$$C_{2m+1}(v_p) = ((\theta\square^\omega)_p,p\bmod{3},q,H) = (\theta_p,p\bmod{3},q,H).$$
	Since $\delta(q,\theta_p) = \bot$, the agent $v_p$ will perform the transition \ref{H-halt} next, producing an accepting state $\checked$.
	The accepting state propagates through the graph with \ref{H-prop} (which works out as argued above), making $\rho_n$ accepting.
	
	Lastly, consider the run $\rho_n$ for $n < |\theta|$.
	Evidently, $T$ visits the $|\theta|$ tape cells $0$ to $|\theta|-1$, in particular tape cell $n \leq |\theta|-1$.
	The rejection of $G$ follows as in the proof of claim \ref{itm:h/o_1}.
	
	Therefore, the claimed $n_0$ is exactly $|\theta|$.
\end{proof}

Notably, the above lemma is not exhaustive, since it does not deal with the case of a TM that does not halt, but also does not visit every tape cell.
This case can actually occur if the TM gets stuck in a cycle between a finite set of configurations.
Then, we would want $\rho_n$ to be rejecting for all $n \in \mathbb{N} \setminus \{0\}$, which is not the case in $M^H(T)$.
To solve this issue, we will actually simulate a different TM derived from $T$ where this neglected case cannot occur.

\begin{lemma}\label{lem:always_halt/overflow}
	Let $T$ be a Turing machine.
	There exists a Turing machine $T_\infty$ that halts if{}f $T$ halts and otherwise visits every tape cell.
\end{lemma}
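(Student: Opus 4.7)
The plan is to construct $T_\infty$ as a Turing machine that simulates $T$ step by step, interleaving each simulated step with a bookkeeping phase that forces $T_\infty$ to visit one previously-unvisited tape cell. The key idea is to augment $T$'s tape alphabet with two markers: a head marker $H$ tracking the position of $T$'s simulated head, and a frontier marker $F$ tracking the rightmost cell already visited by $T_\infty$. Concretely, $T_\infty$ uses tape alphabet $\Gamma_\infty = \Gamma \times \{-, H, F, HF\}$ and a finite control that stores (i) the current phase, (ii) $T$'s current simulated state, and (iii) auxiliary information needed to navigate between the two markers. Initially, the leftmost cell carries the combined marker $HF$.

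A cycle of $T_\infty$ consists of two phases. In the \emph{simulation phase}, $T_\infty$ is positioned at the cell carrying the $H$ marker; it reads the $\Gamma$-component $\gamma$, combines it with the simulated state $q$ stored in its control, and checks whether $\delta(q,\gamma) = \bot$. If so, $T_\infty$ halts. Otherwise, it writes the new $\Gamma$-symbol prescribed by $\delta$, updates the stored simulated state, and relocates the $H$ marker one cell in the direction prescribed by $\delta$ (appending a blank if needed; note that by our TM model the head never moves left of cell $0$, so this is always well-defined). In the subsequent \emph{extension phase}, $T_\infty$ walks right until it finds the $F$ marker, erases it, moves one cell further right — which, by construction, is a blank cell never previously visited by $T_\infty$ — writes a fresh $F$ marker there, then walks back left to the $H$ marker and returns to the simulation phase.

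Correctness is immediate. If $T$ halts after $m$ steps on blank tape, then $T_\infty$ faithfully simulates these $m$ steps, detects the undefined transition in the $(m{+}1)$-st simulation phase, and halts. Conversely, if $T$ does not halt, then the simulation phase never triggers a halt, so the extension phase runs infinitely often; since each execution visits a cell strictly further to the right than any cell visited before, $T_\infty$ visits every tape cell. The only mild subtlety is the very first cycle, in which $H$ and $F$ coincide: this is handled uniformly by the combined marker $HF$, which the first extension phase splits by leaving $H$ in place and planting a fresh $F$ one cell to the right. Beyond this, the construction is entirely routine; the main obstacle is purely notational — spelling out $\delta_\infty$ explicitly — rather than conceptual, so a sketch along the above lines suffices.
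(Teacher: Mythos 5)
Your overall strategy is the same as the paper's: interleave a faithful step-by-step simulation of $T$ with a bookkeeping phase that pushes a ``frontier'' one cell to the right, so that non-termination of $T$ forces $T_\infty$ to visit every tape cell. The paper implements the bookkeeping by marking \emph{every} visited cell and scanning rightwards through the marked block from the current head position; you instead keep a single frontier marker $F$. That design choice introduces a concrete bug in the navigation.

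The problem is that your extension phase ``walks right until it finds the $F$ marker'', but $F$ can lie to the \emph{left} of the simulated-head marker $H$ at that moment. This already happens in the very first cycle: the head starts on cell $0$ (which carries $HF$) and must move right, so after the first simulation phase $H$ sits on cell $1$ while $F$ is still on cell $0$. The rightward search then never encounters $F$ and diverges, so $T_\infty$, as literally described, never returns to the simulation and in particular fails to halt even when $T$ halts --- which is exactly the property the lemma needs. Your remark about splitting the $HF$ marker does not repair this, because by the time the first extension phase runs, the simulation phase has already separated $H$ from $F$; and the same overtaking recurs in every cycle of an initial run of consecutive right-moves. The fix is easy --- e.g.\ run the extension phase \emph{before} the simulation phase in each cycle (then $F$ always ends up strictly to the right of $H$ before $H$ moves, and $HF$ is the only coincidence to handle), or drag $F$ along whenever $H$ steps onto or past it, or simply mark all visited cells as the paper does --- but some such repair is needed: the invariant ``$F$ is reachable by walking right from $H$'' does not hold for the construction as you stated it. A second, minor, imprecision: the cell one to the right of $F$ need not be ``never previously visited'' once $H$ has overtaken $F$; what actually carries the unboundedness argument is just that the position of $F$ increases by one per cycle, which still suffices.
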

\begin{proof}
	We describe the behaviour of such a Turing machine $T_\infty$.
	$T_\infty$ will perform exactly the transitions of $T$ and additionally the following subroutine after each $T$-transition:
	When performing a $T$-transition, a visited-marker is added to the symbol in the tape cell before moving in the specified direction.
	The newly reached tape cell is marked as current.
	The head then moves to the right as long as tape cells are marked.
	When the first unmarked tape cell is reached, it gets marked as visited.
	Lastly, the head moves back to the current-marked tape cell and removes the current-marker.
	visited-markers are never removed.
	From there, $T_\infty$ performs the next $T$-transition and the subroutine starts anew.
	
	It is easy to see that after $T_\infty$ performs a $T$-transition and the subroutine, each tape cell of $T_\infty$ carries the same symbol (besides the markers) as it would in $T$ and the head is on the same tape cell as it would be in $T$.
	So given that $T$ halts after $m \in \mathbb{N}$ transitions, $T_\infty$ halts after executing a $T$-transition and the subroutine $m$ times.
	If $T$ does not halt, it performs infinitely many transitions and therefore $T_\infty$ performs the subroutine infinitely many times.
	Since $T_\infty$ visits the first previously unvisited tape cell in each execution of the subroutine, it eventually visits every tape cell as the tape is bounded to the left.
\end{proof}

Combining \Cref{lem:case_halt/overflow,lem:always_halt/overflow} immediately yields that $A^H \coloneq (M^H(T_\infty), \Sigma)$, where $M^H(T_\infty)$ is \Cref{cstr:tm_head} for $T_\infty$ and $\Sigma$ a matching scheduler, is the $\mathsf{NLG}$-distributed automaton in question for \Cref{cor:always_acc/rej}.

\subsubsection{Proof of \Cref{thm:reduction} for $\mathtt{DAf}$ and $\mathtt{DAF}$}\label{app:proof_1}
\reduction*

\begin{proof}[Proof for $\mathtt{DAf}$ and $\mathtt{DAF}$]
	Given a TM $T$, let $A^L$ be the $\mathtt{DA\texttt{\$}f}$-distributed automaton from \Cref{lem:nlg}, and $A^H$ the $\mathsf{NLG}$-distributed automaton obtained from applying \Cref{cor:always_acc/rej} to $T$ and lifting it to the class $\mathtt{DA\texttt{\$}f}$, which is possible by the hierarchy of expressive power.
	As $L(A^L) = \mathsf{NLG}$, \Cref{lem:intersection} yields a $\mathtt{DA\texttt{\$}f}$-distributed automaton $A^T$ with $L(A^T) = L(A^L) \cap L(A^H)$.
	We show that $L(A^T) \neq \emptyset$ if{}f $T$ halts on blank tape.
	
	If $L(A^T) \neq \emptyset$, let $G \in L(A^T) = L(A^L) \cap L(A^H)$.
	As $G \in L(A^L) = \mathsf{NLG}$, $G$ is an NLG.
	As $G \in L(A^H)$ for $G$ an NLG, $T$ halts on blank tape by \Cref{cor:always_acc/rej}.
	
	Conversely, if $T$ halts on blank tape, then so does $T_\infty$ from \Cref{lem:always_halt/overflow}.
	This happens after a finite number of steps, visiting a finite number of tape cells $n_0 \in \mathbb{N} \setminus \{0\}$.
	By \Cref{lem:nlg,cor:always_acc/rej}, $A^L$ and $A^H$ both accept the NLG $G \in \mathsf{NLG}$ of length $n_0$.
	We conclude $G \in L(A^L) \cap L(A^H) = L(A^T)$, so $L(A^T) \neq \emptyset$.
	
	By \Cref{prop:sync} and the hierarchy of expressive power, this shows the claim for the classes $\mathtt{DAf}$ and $\mathtt{DAF}$.
\end{proof}

\subsection{Proofs of \Cref{sec:nqlg}}
For technical reasons we first prove \Cref{prop:nqlg_eq_nlg} and then \Cref{lem:nqlg}.

\subsubsection{Proof of \Cref{prop:nqlg_eq_nlg}}\label{app:nqlg_eq_nlg}

\propNqlgEqNlg*

First, we prove a lemma establishing a few facts about distances and NQLGs to help with the proof of \Cref{prop:nqlg_eq_nlg}.

\begin{lemma}\label{lem:nqlg_properties}
	Let $G = (V,E,\lambda)$ be a graph where $O \subseteq V$ as defined in \Cref{def:nqlg} is non-empty.
	Then,
	\begin{bracketenumerate}
		\item $\forall\{v,w\} \in E \colon {\lvert\dist(O,w)-\dist(O,v)\rvert\leq1}$; if $G \in \mathsf{NQLG}$, we even get equality for all $\{v,w\} \in E$,\label{itm:nqlg_prop_1}
		\item for all $v \in V$\label{itm:nqlg_prop_2}$$\big(\exists \{u,v\} \in E \colon {\dist(O,u) = \dist(O,v)-1}\big) \iff \dist(O,v)>0 \text{, and}$$
		\item if $G \in \mathsf{NQLG}$ is a numbered quasi-linear graph of length $n$, we get $\forall v \in V \colon {\dist(O,v)<n}$.\label{itm:nqlg_prop_3}
	\end{bracketenumerate}
\end{lemma}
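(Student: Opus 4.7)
\begin{claimproof}[Proof plan]
The plan is to handle the three items in order, leveraging standard properties of graph distances together with the structural constraints of Definition~\ref{def:nqlg}.

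For Part \hyperref[itm:nqlg_prop_1]{(1)}, the inequality is a direct application of the triangle inequality for distances in a connected graph: picking some $u \in O$ that attains $\dist(O,v)$, the path through $u$ to $v$ can be extended by the edge $\{v,w\}$ to yield $\dist(O,w) \leq \dist(O,v)+1$; symmetry gives $|\dist(O,w)-\dist(O,v)|\leq 1$. For the sharper statement when $G \in \mathsf{NQLG}$, I rule out equality of the distances: if $\dist(O,v) = \dist(O,w)$, then (QL1) would force $\lambda(v) = \lambda(w)$, contradicting (QL2). Together with the bound, the absolute difference must therefore equal exactly $1$.

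For Part \hyperref[itm:nqlg_prop_2]{(2)}, the ``$\Rightarrow$'' direction is immediate since $\dist$ is non-negative, so the existence of a neighbour $u$ with $\dist(O,u) = \dist(O,v)-1$ forces $\dist(O,v) > 0$. For ``$\Leftarrow$'', assuming $\dist(O,v) > 0$, I take any shortest path from $O$ to $v$; its penultimate vertex $u$ is a neighbour of $v$ and satisfies $\dist(O,u) = \dist(O,v)-1$, which is the desired witness.

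For Part \hyperref[itm:nqlg_prop_3]{(3)}, I argue by contradiction: suppose $\dist(O,v) \geq n$ for some $v \in V$, and fix a shortest path $o = v_0, v_1, \ldots, v_k = v$ from some $o \in O$ to $v$ of length $k = \dist(O,v) \geq n$. Along a shortest path each vertex satisfies $\dist(O,v_i) = i$, so in particular $\dist(O,v_{n-1}) = n-1$ and $\dist(O,v_n) = n$, with $\{v_{n-1},v_n\} \in E$. Now apply (QL3) to $u = v_{n-1}$: since $\dist(O,u) = n-1$ is \emph{not} strictly less than $n-1$, the biconditional forbids any edge $\{u,w\}$ with $\dist(O,w) = \dist(O,u)+1 = n$, contradicting the existence of $v_n$.

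No step seems genuinely delicate; the proof is essentially unwinding the definitions. The mildest subtlety is Part \hyperref[itm:nqlg_prop_3]{(3)}, where one must explicitly invoke that distances along a shortest path increase by exactly one, in order to produce a vertex at distance precisely $n$ (and not merely ``at least $n$'') against which (QL3) can be applied.
\end{claimproof}
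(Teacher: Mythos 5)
Your proof is correct and follows essentially the same route as the paper's: triangle inequality plus a (QL1)/(QL2) clash for part (1), the penultimate vertex of a shortest path for part (2), and a vertex at distance exactly $n$ contradicting (QL3) for part (3). The only cosmetic difference is that you invoke the standard fact that distances increase by exactly one along a shortest path directly, where the paper rederives it by combining parts (1) and (2).
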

\begin{proof}
	For claim \ref{itm:nqlg_prop_1}, assume there was an edge $\{v,w\} \in E$ with $\lvert\dist(O,w)-\dist(O,v)\rvert>1$.
	Without loss of generality, let $\dist(O,w) > \dist(O,v)$ and therefore $\dist(O,w) > \dist(O,v)+1$.
	This means that there is a path $(e_1,...,e_{\dist(O,v)})$ of length $\dist(O,v)$ from a node $o \in O$ to $v$.
	Obviously, $(e_1,...,e_{\dist(O,v)},\{v,w\})$ would then be a path of length $\dist(O,v)+1$ from $o$ to $w$, implying $\dist(O,w) \leq \dist(O,v)+1$, a contradiction.
	We thus know $\lvert\dist(O,w)-\dist(O,v)\rvert\leq1$ for all $\{v,w\} \in E$.
	In the case $G \in \mathsf{NQLG}$, if there was an edge $\{v,w\} \in E$ with $\lvert\dist(O,w)-\dist(O,v)\rvert=0$, we would have $\lambda(v) = \dist(O,v)\bmod{3} = \dist(O,w)\bmod{3} = \lambda(w)$, contracting \hyperref[itm:nqlg_neighbours]{(QL2)}.
	
	For claim \ref{itm:nqlg_prop_2}, we first consider a node $v \in V$ with $\dist(O,v)=0$.
	Obviously, an edge $\{u,v\} \in E$ with $\dist(O,u) = \dist(O,v)-1=-1$ does not exist.
	Now, consider a node $v \in V$ with $\dist(O,v)>0$.
	Again, this means that there is a path $(e_1,...,e_{\dist(O,v)})$ of length $\dist(O,v)$ from a node $o \in O$ to $v$.
	Since $\dist(O,v)>0$, the edge $e_{\dist(O,v)} = \{u,v\}$ exists.
	Therefore, $(e_1,...,e_{\dist(O,v)-1})$ is a path of length $\dist(O,v)-1$ from $o$ to $u$, implying $\dist(O,u) \leq \dist(O,v)-1$.
	By claim \ref{itm:nqlg_prop_1}, this implies $\dist(O,u) = \dist(O,v)-1$.
	
	Lastly, for claim \ref{itm:nqlg_prop_3}, let $G \in \mathsf{NQLG}$ be an NQLG of length $n$ and let $v \in V$.
	Assume $\dist(O,v) \geq n$.
	If $\dist(O,v)=n$, define $v_n=v$.
	Otherwise, by iteratively applying claim \ref{itm:nqlg_prop_2}, we obtain a node $v_k \in V$ with $\dist(O,v_k)=k$ for all $k \leq \dist(O,v)$.
	In both cases, applying claim \ref{itm:nqlg_prop_2} to $v_n$ one more time, yields an edge $\{u,v_n\} \in E$ with $\dist(O,u) = \dist(O,v_n)-1 = n-1$ and $\dist(O,v_n) = \dist(O,u)+1$, contradicting \hyperref[itm:nqlg_length]{(QL3)}.
\end{proof}

We can now proceed to prove \Cref{prop:nqlg_eq_nlg}.

\begin{proof}[Proof of \Cref{prop:nqlg_eq_nlg}]
	First, note that the expression in the claim is always well-defined: Since $\tilde{G}$ and $G$ have the same length $n$, for all $\tilde{v} \in \tilde{V}$, \Cref{lem:nqlg_properties} \hyperref[itm:nqlg_prop_3]{(3)} states $\dist(O,\tilde{v})<n$ and thus the node $v_{\dist(O,\tilde{v})} \in G$ exists.
	
	We will now prove the claim using induction on $i$.
	
	In the base case $i=0$, we get for every $\tilde{v} \in \tilde{V}$
	$$\tilde{C}_0(\tilde{v}) = \delta_0(\tilde{\lambda}(\tilde{v})) = \delta_0(\dist(O,\tilde{v})\bmod{3}) = \delta_0(\lambda(v_{\dist(O,\tilde{v})})) = C_0(v_{\dist(O,\tilde{v})}).$$
	
	For the induction step, we assume the claim to be true for an arbitrary but fixed $i \in \mathbb{N}$ and all $\tilde{v} \in \tilde{V}$.
	Consider $\tilde{v} \in \tilde{V}$.
	If $0<\dist(O,\tilde{v}) < n-1$, there are edges $\{\tilde{u},\tilde{v}\},\{\tilde{v},\tilde{w}\} \in \tilde{E}$ with $\dist(O,\tilde{u})+1 = \dist(O,\tilde{v}) = \dist(O,\tilde{w})-1$ according to \Cref{lem:nqlg_properties} \hyperref[itm:nqlg_prop_2]{(2)} and \hyperref[itm:nqlg_length]{(QL3)}.
	For all neighbours $\tilde{v}' \in \tilde{V}$ of $\tilde{v}$, \Cref{lem:nqlg_properties} \hyperref[itm:nqlg_prop_1]{(1)} yields
	$$\dist(O,\tilde{v}') \in \{\dist(O,\tilde{v})-1,\dist(O,\tilde{v})+1\} = \{\dist(O,\tilde{u}),\dist(O,\tilde{w})\}.$$
	and thus, by the induction hypothesis,
	$$\tilde{C}_i(\tilde{v}') = C_i(v_{\dist(O,\tilde{v}')}) \in \{C_i(v_{\dist(O,\tilde{u})}),C_i(v_{\dist(O,\tilde{w})})\}.$$
	Since $0<\dist(O,\tilde{v}) < n-1$, $v_{\dist(O,\tilde{v})}$ has exactly the neighbours $v_{\dist(O,\tilde{v})-1}$ and $v_{\dist(O,\tilde{v})+1}$.
	Because the counting bound in the $\mathtt{dA\texttt{\$}f}$-distributed machine $M$ is $\beta = 1$ ($\mathtt{d}$), it follows for all $q \in Q$ that
	$$\mathcal{N}_{\tilde{v}}^{\tilde{C}_i}(q) = \left.\begin{cases}
		1 & \text{if } C_i(v_{\dist(O,\tilde{u})}) = q = C_i(v_{\dist(O,\tilde{v})-1}), \\
		1 & \text{if } C_i(v_{\dist(O,\tilde{w})}) = q = C_i(v_{\dist(O,\tilde{v})+1}), \\
		0 & \text{else} \\
	\end{cases}\right\} = \mathcal{N}_{v_{\dist(O,\tilde{v})}}^{C_i}(q).$$
	The same equality holds if $\dist(O,\tilde{v}) \in \{0,n-1\}$.
	The argument is completely analogous with the only difference that both $\tilde{v}$ and $v_{\dist(O,\tilde{v})}$ only have successors (predecessors).
	In the next transition, $\tilde{v}$ and $v_{\dist(O,\tilde{v})}$ are each selected in the respective graphs as $M$ employs synchronous scheduling ($\texttt{\$}$).
	Using the induction hypothesis and the equality above, we get
	$$\tilde{C}_{i+1}(\tilde{v}) = \delta\big(\tilde{C}_{i}(\tilde{v}),\mathcal{N}_{\tilde{v}}^{\tilde{C}_i}\big) = \delta\big(C_{i}(v_{\dist(O,\tilde{v})}),\mathcal{N}_{v_{\dist(O,\tilde{v})}}^{C_i}\big) = C_{i+1}(v_{\dist(O,\tilde{v})})$$
	as claimed.	
	Since $\tilde{v} \in \tilde{V}$ was arbitrary, this concludes the induction step.
\end{proof}

\subsubsection{Proof of \Cref{lem:nqlg}}\label{app:nqlg}

\lemNqlg*

We formalize the construction described in the proof sketch of \Cref{lem:nqlg}:

\begin{construction}[Deciding Numbered Quasi-Linear Graphs]\label{cstr:nqlg}
	We construct a $\mathtt{dA\texttt{\$}f}$-distributed machine $M^L = (Q^L,\delta_0^L,\delta^L,Y^L,N^L)$ with:
	\begin{itemize}
		\item The labelling alphabet is $\Lambda^L = \mathbb{Z}_3$.
		\item $Q^L = (\Lambda^L \times \{0,1,2\}) \cup \{\bot\}.$
		Compared to \Cref{cstr:nlg}, the guess is replaced by one of three detection stages: $0$ -- the initial stage, $1$ -- origin set detected, similar to guess $1$, or $2$ -- origin set and end of graph detected.
		\item $\delta_0^L \colon {\Lambda^L \to Q^L, l \mapsto (l,0)}.$
		The numbering is again taken directly from the labelling, and all agents start in the initial stage $0$.
		\item The accepting states are $Y^L = \mathbb{Z}_3 \times \{2\}$, i.e., the states where the agent detected the origin set and the end of the graph, while the rejecting states are $N^L = (\mathbb{Z}_3 \times \{0\}) \cup \{\bot\}$, i.e., the states where the agent remained in the initial stage and the error state.
	\end{itemize}
	For an agent $v \in V$ and a configuration $C \colon {V \to Q}$, we define the transition function $\delta^L$:
	\begin{equation}
		\begin{aligned}
			\big((n,s),\mathcal{N}_v^C\big) & \mapsto \bot & \text{for } & \mathcal{N}_v^C(n,*) \geq 1 \\
			\big(q,\mathcal{N}_v^C\big) & \mapsto \bot & \text{for } & \mathcal{N}_v^C(\bot) \geq 1.
		\end{aligned}\labelledtag{QL-error}
	\end{equation}
	An agent will transition to the error state if it detects a neighbour with the same numbering as itself, as this violates \hyperref[itm:nqlg_neighbours]{(QL2)}.
	Note that in contrast to \Cref{cstr:nlg}, multiple neighbours with the same numbering do not produce an error state, as $M^L$ only has existence detection ($\mathtt{d}$).
	Furthermore, if any neighbour of $v$ already is in the error state, $v$'s next state is $\bot$, so that errors propagate through the graph unconditionally.
	For all the following transitions, we assume that, in addition to the specified conditions being fulfilled, the conditions for the \ref{QL-error} transitions are not met:
	\begin{align}
		\big((0,0),\mathcal{N}_v^C\big) & \mapsto (0,1) & \text{for } & \mathcal{N}_v^C(2,*) = 0 \labelledtag{QL-origin}
	\end{align}
	If the numbering indicates that $v$ does not have predecessors, $v$ is an origin node and initiates stage $1$.
	\begin{equation}
		\begin{aligned}
			\big((n,0),\mathcal{N}_v^C\big) & \mapsto (n,1) & \text{for } & \mathcal{N}_v^C((n-1)\bmod{3},1) \geq 1\ \wedge \\
			&&& \mathcal{N}_v^C((n+1)\bmod{3},1) = 0 \\
			\big((n,0),\mathcal{N}_v^C\big) & \mapsto \bot  & \text{for } & \mathcal{N}_v^C((n+1)\bmod{3},1) \geq 1 \\
		\end{aligned} \labelledtag{QL-stage1}
	\end{equation}
	If $v$ is in stage $0$ and detects that one of its predecessors reached stage $1$, it will advance to stage $1$ as well, except if the second transition applies.
	Since the schedule is synchronous ($\texttt{\$}$), all agents at a given distance from the origin set $O$, should reach stage $1$ at the same time.
	So, if a presumed successor of $v$ reaches stage $1$ before $v$ itself, this means that the numbering is incorrect and that neighbour should actually be a predecessor of $v$.
	$v$ detects this fault and transitions to $\bot$.
	\begin{align}
		\big((n,1),\mathcal{N}_v^C\big) & \mapsto (n,2) & \text{for } & \mathcal{N}_v^C((n+1)\bmod{3},*) = 0, \labelledtag{QL-limit}
	\end{align}
	If $v$ does not have any successors, $v$ should, by \hyperref[itm:nqlg_length]{(QL3)}, be one of the agents with the maximum distance from $O$ of $n-1$.
	The end of the graph has therefore been reached and $v$ transitions from stage $1$ to stage $2$.
	Nodes like $v$ are called \textit{limit nodes}.
	\begin{equation}
		\begin{aligned}
			\big((n,1),\mathcal{N}_v^C\big) & \mapsto (n,2) & \text{for } & \mathcal{N}_v^C((n+1)\bmod{3},2) \geq 1\ \wedge \\
			&&& \mathcal{N}_v^C((n-1)\bmod{3},2) = 0, \\
			\big((n,1),\mathcal{N}_v^C\big) & \mapsto \bot  & \text{for } & \mathcal{N}_v^C((n-1)\bmod{3},2) \geq 1,
		\end{aligned}\labelledtag{QL-stage2}
	\end{equation}
	Lastly, if $v$ is in stage $1$ and detects that one of its successors reached stage $2$, it will advance to stage $2$ as well, except if the second transition applies.
	Due to the unambiguous length condition \hyperref[itm:nqlg_length]{(QL3)} and synchronous scheduling ($\texttt{\$}$), $v$ should detect the end of the graph before its predecessors.
	If this is not the case, it indicates that there is a shorter path from that predecessor to a limit node than through $v$, contradicting \hyperref[itm:nqlg_length]{(QL3)}.
	$v$ thus transitions to $\bot$.
	
	All other transitions are silent.
\end{construction}

\begin{proof}[Proof of \Cref{lem:nqlg}]
	Let $A^L = (M^L, \Sigma)$, where $M^L$ is \Cref{cstr:nqlg} and $\Sigma$ a matching scheduler.
	First, we establish a fact about the execution of $A^L$.
	Assuming that $O$, as defined in \Cref{def:nqlg}, is non-empty and no error states occur, we will show inductively that for all $d \in \mathbb{N}$, in transition $d$ (we number the transitions starting with $0$), exactly the agents $v \in V$ with $\dist(O,v)=d$ reach stage $1$.
	
	In transition $0$, exactly the agents in $O$ employ \ref{QL-origin} to reach stage $1$.
	Since no error state occurs, all other agents stay in stage $0$.
	Assume the above claim for all $d'$ up to an arbitrary but fixed $d \in \mathbb{N}$.
	Let $v \in V$ with $\dist(O,v) = d+1$.
	By \Cref{lem:nqlg_properties} \hyperref[itm:nqlg_prop_2]{(2)}, $v$ has a neighbour $u$ with $\dist(O,u)=d$ and by assumption, $u$ reaches stage $1$ in transition $d$.
	Therefore, in transition $d+1$, $v$ will transition to stage $1$ or to $\bot$ by \ref{QL-stage1}.
	All agents $u \in V$ with $\dist(O,u) \leq d$ have by assumption already reached stage $1$.
	An agent $w \in V$ with $\dist(O,w) \geq d+2$ only has neighbours $v \in V$ with $\dist(O,v) \geq d+1$ according to \Cref{lem:nqlg_properties} \hyperref[itm:nqlg_prop_1]{(1)}.
	Therefore, by assumption, $w$ does not have neighbours that had already reached stage $1$.
	This shows that no other agents reach stage $1$ in transition $d+1$.
	Inductively, this proves the claim above.
	\medskip
	
	Now, we assume that $G = (V,E,\lambda)$ is, in fact, an NQLG of length $n \in \mathbb{N} \setminus \{0\}$.
	Because of \hyperref[itm:nqlg_neighbours]{(QL2)}, the first \ref{QL-error} transition will never take effect.
	Since $G \in \mathsf{NQLG}$, all presumed successors $w$ of an agent $v \in V$ fulfil $\dist(O,w)=\dist(O,v)+1$ by \Cref{lem:nqlg_properties} \hyperref[itm:nqlg_prop_1]{(1)} and are therefore still in stage $0$ in the above induction step.
	This implies that the case where \ref{QL-stage1} produces a $\bot$ state cannot occur.
	Thus, all agents will reach stage $1$ without any error states occurring.
	By \Cref{lem:nqlg_properties} \hyperref[itm:nqlg_prop_3]{(3)} and the induction, this happens in exactly $n$ transitions.
	Then, in transition $n$, all agents at limit nodes, which by \hyperref[itm:nqlg_length]{(QL3)} are those at distance $n-1$ from the origin set, transition from stage $1$ to $2$ by \ref{QL-limit}.
	Analogous to the argument for stage $1$, for all $d \in [n-1]$, in transition $n+d$, exactly the agents $v \in V$ with $\dist(O,v) = n-1-d$ reach stage $2$ by \ref{QL-stage2}.
	Again, no errors occur.
	After a total of $2n$ transitions, all agents reached stage $2$, accepting $G$.
	\medskip
	
	Now, we assume that $G = (V,E,\lambda)$ is not an NQLG, that is, $O$ is empty or \hyperref[itm:nqlg_numbering]{(QL1)}, \hyperref[itm:nqlg_neighbours]{(QL2)}, or \hyperref[itm:nqlg_length]{(QL3)} are violated.
	
	If $O$ is empty, then there are no agents that can perform the transition \ref{QL-origin}.
	Therefore, all agents that do not produce error states will stay in the initial stage $0$ forever, rejecting $G$ by stable consensus.
	
	If there are neighbouring nodes with the same numbering (violating \hyperref[itm:nqlg_neighbours]{(QL2)}), the corresponding agents will perform the first \ref{QL-error} transition, producing a $\bot$ state.
	This will propagate through the graph with the second \ref{QL-error} transition, rejecting $G$.
	
	If no such neighbouring nodes exist, but a node where the numbering does not match its distance to $O$ modulo $3$ (violating \hyperref[itm:nqlg_numbering]{(QL1)}), consider a node $v \in V$ with $\lambda(v) \neq \dist(O,v)\bmod{3}$ where $\dist(O,v)$ is minimal.
	Note that the nodes in $O$ have the correct numbering $0$, so $v \notin O$ and hence $\dist(O,v)>0$.
	By \Cref{lem:nqlg_properties} \hyperref[itm:nqlg_prop_2]{(2)}, there is a neighbour $u$ of $v$ with $\dist(O,u) = \dist(O,v)-1$.
	By assumption, $u$ is numbered correctly and thus $\lambda(u) = \dist(O,u)\bmod{3} = (\dist(O,v)-1)\bmod{3} \neq (\lambda(v)-1)\bmod{3}$.
	Furthermore, $\lambda(u) = \lambda(v)$ for the neighbouring nodes $u,v$ is excluded, so it must be the case that $\lambda(u) = (\lambda(v)+1)\bmod{3}$.
	According to the initial induction, either an error state occurs, or after transition $\dist(O,v)-1$, $u$ is in stage $1$, while $v$ is still in stage $0$.
	The latter situation also produces an error state next by the second \ref{QL-stage1} transition.
	So, in any case, there will be an error state that propagates through the graph with the second \ref{QL-error} transition, rejecting $G$.
	
	Lastly, we have to deal with the case that all nodes fulfil \hyperref[itm:nqlg_numbering]{(QL1)} and \hyperref[itm:nqlg_neighbours]{(QL2)}, but the length of the graph is not unambiguous, that is, there is no $n \in \mathbb{N} \setminus \{0\}$, such that \hyperref[itm:nqlg_length]{(QL3)} is satisfied.
	Therefore, there are limit nodes $l_0,l_2 \in V$ with $\dist(O,l_0) \neq \dist(O,l_2)$.
	
	We will show that there is always an agent that can detect this fault.
	To do that formally, we have to introduce a few new concepts:
	A directed edge $(v,w)$ is \textit{ascending} if{}f $\lambda(w) = (\lambda(v)+1)\bmod{3}$, and \textit{descending} if{}f $\lambda(w) = (\lambda(v)-1)\bmod{3}$, respectively.
	A directed path is called \textit{ascending} (\textit{descending}) if{}f all edges are ascending (descending).
	A \textit{differing limits-path} $P$ is a directed path between two limit nodes $m_0,m_1 \in V$ with $\dist(O,m_0) \neq \dist(O,m_1)$.
	Since limit nodes do not have successors, $P$ has to start with a descending edge and end with an ascending edge.
	Therefore, there must be a node where $P$ switches from descending to ascending edges.
	A node in $P$ where this happens, is called a \textit{turning node}.
	
	We want to show that, in $G$, there is a differing limits-path with exactly one turning node.
	Let $P = (p_1,...,p_k)$ be a differing limits-path between $l_0,l_2$ with $t$ turning nodes.
	If $t\geq2$, consider the first two turning nodes $u,w$ of $P$.
	Evidently, there must be a node $v$ in $P$ between $u$ and $w$ where $P$ switches from ascending to descending edges again; say the corresponding edges incident to $v$ are $p_r,p_{r+1}$.
	Let $Q = (q_1,...,q_j)$ be an ascending directed path from $v$ to some limit node $l_1$.
	Since $\dist(O,l_0) \neq \dist(O,l_2)$, at least one of the non-equalities $\dist(O,l_1) \neq \dist(O,l_0)$ or $\dist(O,l_1) \neq \dist(O,l_2)$ must be true.
	If the former is true, consider the directed path $(p_1,...,p_r,q_1,...,q_j)$ -- it starts at $l_0$ and ends at $l_1$ with $\dist(O,l_0) \neq \dist(O,l_1)$, so it is still a differing limits-path; it keeps the first turning node $u$ from $P$ between $p_1$ and $p_r$ and adds no further turning nodes as $p_r$ and $q_1$ to $q_j$ are all ascending, for a total of $1<t$ turning nodes.
	Otherwise, consider the directed path $(\accentset{\leftharpoonup}{q_j},...,\accentset{\leftharpoonup}{q_1},p_{r+1},...,p_k)$, where for an edge $e=(v_0,v_1)$, $\accentset{\leftharpoonup}{e}$ denotes the reversed edge $(v_1,v_0)$, in particular making ascending edges descending -- this path starts at $l_1$ and ends at $l_2$ with $\dist(O,l_1) \neq \dist(O,l_2)$, so it is still a differing limits-path; it keeps the second turning node $w$ and all $t-2$ later turning nodes from $P$ between $p_{r+1}$ and $p_k$, but adds no further turning nodes as $\accentset{\leftharpoonup}{q_j}$ to $\accentset{\leftharpoonup}{q_1}$ and $p_{r+1}$ are all descending, for a total of $t-1<t$ turning nodes.
	In conclusion, we constructed a differing limits-path with fewer than $t$ turning nodes in both cases.
	By iterating this construction, we obtain a differing limits-path with exactly one turning node.
	
	Using this preparation, we construct a node that can detect the fault as claimed:
	Let $v \in V$ maximise $\dist(O,v)$ among all nodes that are the only turning node of a differing limits-path.
	Let $P$ be a differing limits-path between limit nodes $l_0,l_1 \in V$ with $v$ as its only turning node, such that $\dist(O,l_0)$ is minimal.
	From $P$, we obtain ascending directed paths $L_0$ from $v$ to $l_0$ with length $k_0$ starting with the edge $(v,w_0)$ and $L_1$ from $v$ to $l_1$ with length $k_1$ starting with the edge $(v,w_1)$.
	Since $\dist(O,l_0)$ is minimal and $\dist(O,l_0) \neq \dist(O,l_1)$, we know
	$$\dist(O,v)+k_0 = \dist(O,l_0) < \dist(O,l_1) = \dist(O,v)+k_1 \iff k_0<k_1.$$
	As $\dist(O,v)$ was chosen maximally, $w_0,w_1$ with $\dist(O,w_0) = \dist(O,v)+1 = \dist(O,w_1)$ cannot be turning nodes of differing limits-paths with exactly one turning node.
	Therefore, all limit nodes $l'$ reachable via an ascending directed path from $w_0$ ($w_1$) must fulfil $\dist(O,l') = \dist(O,l_0) = \dist(O,v)+k_0$ ($\dist(O,l') = \dist(O,l_1) = \dist(O,w_1)+(k_1-1)$).
	
	Now, for the detection of the fault, assuming that no other error states occur:
	According to the initial induction, $v$ reaches stage $1$ in transition $\dist(O,v)$.
	Since the numbering is correct, in the next $k_0$ transitions, stage $1$ states propagate along $L_0$ using \ref{QL-stage1}, so that $l_0$ reaches stage $1$ in transition $\dist(O,v)+k_0$, and then employs \ref{QL-limit} to transition to stage $2$ in transition $\dist(O,v)+k_0+1$.
	In the next $k_0$ transitions after that, stage $2$ states propagate along $L_0$ in reverse direction using \ref{QL-stage2}, so that $v$ reaches stage $2$ in transition $\dist(O,v)+2k_0+1$.
	Analogously, we deduce that $w_1$ reaches stage $2$ in transition $\dist(O,w_1)+2(k_1-1)+1 \geq \dist(O,w_1)+2k_0+1 = \dist(O,v)+2k_0+2$.
	Crucially, the predecessor $v$ of $w_1$ reaches stage $2$ before $w_1$ itself, causing $w_1$ to produce a $\bot$ state by \ref{QL-stage2}.
	This concludes that an error state will always occur and propagate through the graph with the second \ref{QL-error} transition, rejecting $G$.
\end{proof}

\subsubsection{Proof of \Cref{thm:reduction} for $\mathtt{dAf}$ and $\mathtt{dAF}$}
\reduction*

\begin{proof}[Proof for $\mathtt{dAf}$ and $\mathtt{dAF}$]
	Given a TM $T$, let $A^L$ be the $\mathtt{dA\texttt{\$}f}$-distributed automaton from \Cref{lem:nqlg}, and $A^H$ the $\mathsf{NLG}$-distributed automaton obtained from applying \Cref{cor:always_acc/rej} to $T$ and lifting it to the class $\mathtt{dA\texttt{\$}f}$, which is possible by the hierarchy of expressive power.
	For an NQLG $\tilde{G} \in \mathsf{NQLG}$ of length $n \in \mathbb{N} \setminus \{0\}$, \Cref{prop:nqlg_eq_nlg} yields that $A^H$ accepts (rejects) $\tilde{G}$ if{}f $A^H$ accepts (rejects) the NLG $G \in \mathsf{NLG}$ of the same length $n$.
	Therefore, $A^H$ is, in fact, an $\mathsf{NQLG}$-distributed automaton.
	As $L(A^L) = \mathsf{NQLG}$, \Cref{lem:intersection} thus yields a $\mathtt{dA\texttt{\$}f}$-distributed automaton $A^T$ with $L(A^T) = L(A^L) \cap L(A^H)$.
	We show that $L(A^T) \neq \emptyset$ if{}f $T$ halts on blank tape.
	
	If $L(A^T) \neq \emptyset$, let $\tilde{G} \in L(A^T) = L(A^L) \cap L(A^H)$.
	As $\tilde{G} \in L(A^L) = \mathsf{NQLG}$, $\tilde{G}$ is an NQLG of some length $n \in \mathbb{N} \setminus \{0\}$.
	By \Cref{prop:nqlg_eq_nlg}, $A^T$ thus also accepts the NLG $G$ of the same length $n$.
	As $G \in L(A^H)$ for $G$ an NLG, $T$ halts on blank tape by \Cref{cor:always_acc/rej}.
	
	Conversely, if $T$ halts on blank tape, then so does $T_\infty$ from \Cref{lem:always_halt/overflow}.
	This happens after a finite number of steps, visiting a finite number of tape cells $n_0 \in \mathbb{N} \setminus \{0\}$.
	By \Cref{lem:nqlg,cor:always_acc/rej}, $A^L$ and $A^H$ both accept the NLG $G \in \mathsf{NLG}$ of length $n_0$ (note every NLG is an NQLG).
	We conclude $G \in L(A^L) \cap L(A^H) = L(A^T)$, so $L(A^T) \neq \emptyset$.
	
	By \Cref{prop:sync} and the hierarchy of expressive power, this shows the claim for the classes $\mathtt{dAf}$ and $\mathtt{dAF}$.
\end{proof}

\section{Appendix to \Cref{sec:snowball}}
\subsection{Proofs about \Cref{pre_cstr:snowball}}\label{app:snowball}

We start by formalizing \Cref{pre_cstr:snowball}:

\begin{construction}[Snowball Fight!\ (formal)]\label{cstr:snowball}
	We construct a $\mathtt{Da\texttt{\$}f}$-distributed machine $M^L = (Q^L,\delta_0^L,\delta^L,Y^L,N^L)$ with:
	\begin{itemize}
		\item The labelling alphabet is $\Lambda^L = \mathbb{Z}_3 \times \{-1,+1\} \times \{0,1\}$. The first component is the numbering, the second component is the direction the agent is facing at the beginning, and the third component indicates whether the agent is holding a snowball at the beginning.
		\item $Q^L = \Lambda \cup (\{\circ\} \times \Lambda) \cup \{\checked, \square, \bot\}.$
		The states are the triples from the labelling alphabet with an optional marker that indicates that the snowball fight has not started yet.
		Furthermore, the checked checkbox $\checked$ and $\bot$ are the accepting and rejecting states, respectively, and the unchecked checkbox $\square$, indicates the intention to accept once it is made sure that no rejecting states have occurred elsewhere.
		\item $\delta_0^L \colon {\Lambda \to Q^L, l \mapsto (\circ,l)}$.
		An agent's initial state is exactly the triple from its label, but with the marker set.
		\item As mentioned, the accepting state is $Y^L = \{\checked\}$, while the rejecting state is $N^L = \{\bot\}$.
	\end{itemize}
	Lastly, we specify the transition function $\delta^L$.
	On one hand, it checks that the numbering is correct; on the other hand, it establishes the rules for the snowball fight.
	For an agent $v \in V$ and a configuration $C \colon {V \to Q}$, we define:
	\begin{equation}
		\begin{aligned}
			\big((\circ,n,d,s),\mathcal{N}_v^C\big) & \mapsto \bot & \text{for } & \mathcal{N}_v^C(\circ,n,*,*) \geq 1\ \vee \\
			&&& \exists m \in \mathbb{Z}_3 \colon {\mathcal{N}_v^C(\circ,m,*,*) \geq 2} \\
			\big(q,\mathcal{N}_v^C\big) & \mapsto \bot & \text{for } & q \neq \checked \wedge \mathcal{N}_v^C(\bot) \geq 1
		\end{aligned}\labelledtag{SF-error}
	\end{equation}
	Faults in the numbering are detected by the first transition exactly as by \ref{L-error} in \Cref{cstr:nlg}.
	Error propagation also works the same way, with the only difference that accepting states cannot be overwritten to comply with halting acceptance ($\mathtt{a}$).
	For the following transitions, we assume that, in addition to the specified conditions being fulfilled, the conditions for the \ref{SF-error} transitions are not met and that $\mathcal{N}_v^C(\checked) = \mathcal{N}_v^C(\square) = 0$:
	\begin{equation}
		\begin{aligned}
			\big((\circ,n,d,s),\mathcal{N}_v^C\big) & \mapsto (n,d,s) && \text{for } (s = 1) \oplus (\mathcal{N}_v^C(\circ,*,*,1) \geq 2) \\
			\big((\circ,n,d,s),\mathcal{N}_v^C\big) & \mapsto \bot && \text{otherwise}
		\end{aligned}\labelledtag{SF-init}
	\end{equation}
	Before removing the marker and thereby starting the snowball fight, every agent checks that exactly every other agent, i.e., either they themselves or both of their neighbours, is holding a snowball.
	The fact that only every other agent can start with a snowball, establishes that two neighbouring agents are not holding a snowball at the beginning of the fight, which will turn out to be an important invariant throughout the whole execution of the machine.
	The fact that exactly every other agent has to start with a snowball, makes it easy for every agent to check that, at the beginning, snowballs even exist and that the invariant is fulfilled.
	If this is not fulfilled, the agent goes into the rejecting state, as a correct execution of the snowball fight cannot be guaranteed.
	\begin{align*}
		\big((n,d,1),\mathcal{N}_v^C\big) & \mapsto (n,d,0) & \text{for } & \mathcal{N}_v^C((n+d)\bmod{3},*,0) \geq 1 \labelledtag{SF-throw} \\
		\big((n,d,0),\mathcal{N}_v^C\big) & \mapsto (n,-d,1) & \text{for } & \mathcal{N}_v^C((n+d)\bmod{3},-d,1) \geq 1 \labelledtag{SF-catch} \\
		\big((n,d,0),\mathcal{N}_v^C\big) & \mapsto \bot & \text{for } & \mathcal{N}_v^C((n+d)\bmod{3},-d,1) = 0\ \wedge \\
		&&& \mathcal{N}_v^C((n-d)\bmod{3},d,1) \geq 1 \labelledtag{SF-hit}
	\end{align*}
	If an agent is holding a snowball, it will throw it at the agent it is facing (which will definitely not be holding a snowball due to the invariant mentioned above).
	The receiving agent reacts in one of three ways, also illustrated in \Cref{fig:snowball_tra}:
	If it is facing towards the thrower, it catches the snowball and turns around by \ref{SF-catch}.
	If it is facing away from the thrower, but facing another thrower, it catches and merges both snowballs thrown at it, while turning around, again by \ref{SF-catch} -- the overall number of snowballs decreased by one.
	Lastly, if it is facing away from the thrower and is not facing another thrower, it gets hit and produces an error state by \ref{SF-hit}.
	
	The only transitions missing are those enabling acceptance, which obviously do allow for $\mathcal{N}_v^C(\checked), \mathcal{N}_v^C(\square) \neq 0$ again:
	\begin{equation}
		\begin{aligned}
			\big((0,-1,1),\mathcal{N}_v^C\big) & \mapsto \square & \text{for } & \mathcal{N}_v^C(2,*,*) = 0 \\
			\big((n,d,1),\mathcal{N}_v^C\big) & \mapsto \bot & \text{for } & \mathcal{N}_v^C((n+d)\bmod{3},*,*) = 0 \wedge (n,d) \neq (0,-1) \\
			\big((n,d,s),\mathcal{N}_v^C\big) & \mapsto \square & \text{for } & \mathcal{N}_v^C(\square) \geq 1 \wedge \mathcal{N}_v^C((n+1)\bmod{3},*,*) \geq 1 \\
			\big((n,d,s),\mathcal{N}_v^C\big) & \mapsto \checked & \text{for } & \mathcal{N}_v^C(\square) \geq 1 \wedge \mathcal{N}_v^C((n+1)\bmod{3},*,*) = 0 \\
			\big(\square,\mathcal{N}_v^C\big) & \mapsto \checked & \text{for } & \mathcal{N}_v^C(\checked) \geq 1 \vee \mathcal{N}_v^C(*) = 0
		\end{aligned}\labelledtag{SF-end}
	\end{equation}
	Recall that the key difficulty for automata with halting acceptance ($\mathtt{a}$) was ensuring the existence of the origin node.
	If the agent at the origin node is holding a snowball and is facing away from its right neighbour, the origin node obviously exists, enabling it to transition to the $\square$ state to indicate its intention to accept.
	Before accepting however, it has to be ensured that no error state has occurred elsewhere in the graph, which would make acceptance impossible due to halting acceptance ($\mathtt{a}$).
	So, the $\square$ state propagates through the graph, overwriting every 3-component state, but in particular not the rejecting $\bot$ states.
	If a $\bot$ state did actually occur, its unconditional propagation according to \ref{SF-error} would outweigh the $\square$-propagation, rejecting the graph.
	If the $\square$ states reach the last node, it means that there were no errors in the graph and the graph can be accepted.
	The $\square$ states are replaced by accepting $\checked$ states one by one.
	The condition $\mathcal{N}_v^C(*) = 0$, can obviously only be fulfilled if the input graph only has a single node, dealing with that edge case.
	If an agent outside the origin node that does not have a left (right) neighbour is holding a snowball and is facing away from its right (left) neighbour, an error occurs as the existence of the origin node cannot be guaranteed.
	
	All other transitions are silent.
\end{construction}

We proceed to prove the three lemmas teased in \Cref{sec:snowball}.
Observe that an input graph $G$ with only a single node is accepted if{}f that node has labelling $(0,-1,1)$, according to the \ref{SF-init} and \ref{SF-end} transitions.
As this fully characterizes the behaviour of $M^L$ on single node-graphs and aligns with the following two lemmas, we will from now on assume that $G$ has at least two nodes.
In particular, every node has at one least neighbour.

\begin{lemma}\label{lem:sf_error}
	Let $M^L$ be \Cref{cstr:snowball} and $G = (V,E,\lambda)$ be a labelled graph.
	Further, let $\rho$ be the (unique) run of $M^L$ on $G$.
	Assume that a $\bot$ state occurs in $\rho$.
	Then $\rho$ is rejecting.
\end{lemma}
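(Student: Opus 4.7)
The plan is to establish two facts which together yield that $\rho$ is rejecting: (i) the state $\checked$ never appears in $\rho$, and (ii) the state $\bot$ eventually propagates to every node. The bulk of the work is~(i).

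The first step is to enumerate, from \Cref{cstr:snowball}, the transitions that can produce $\bot$ (namely \ref{SF-error}, the otherwise-case of \ref{SF-init}, \ref{SF-hit}, and the second rule of \ref{SF-end}) and those that can produce $\checked$ (only the last two rules of \ref{SF-end}). Inspecting the first list, every rule that produces $\bot$ other than the second rule of \ref{SF-error} starts from a marked triple $(\circ,\cdot,\cdot,\cdot)$ or from an unmarked triple $(n,d,s)$; no rule sends an agent in state $\square$ directly to $\bot$ without a pre-existing $\bot$ neighbour. Therefore the first appearance of $\bot$ in $\rho$ is at an agent whose previous state is not $\square$. Inspecting the second list, since we may assume $|V| \geq 2$ (the single-node case follows immediately from initialisation and \ref{SF-end}), the first $\checked$ in $\rho$ must be produced by the rule $(n,d,s) \to \checked$, which requires a $\square$-neighbour; moreover, since \ref{SF-error} has priority, such an agent must simultaneously have no $\bot$-neighbour.

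The crux is the following invariant, which I would prove by induction on time~$t$: if a $\bot$ state has appeared in $\rho$ by time $t$, then no agent is in state $\checked$ at time $t$. For the induction step, combining the two observations above, the only way a new $\checked$ can arise at time $t+1$ is at an agent having a $\square$-neighbour and no $\bot$-neighbour at time $t$. Since $\square$-states originate only at the origin and advance one node per synchronous step via the third rule of \ref{SF-end}, while $\bot$-states propagate to all non-$\checked$ neighbours at the same speed via \ref{SF-error}, a race-argument along the numbered chain shows this is impossible: with the first $\bot$ agent $w_0$ necessarily lying outside the current $\square$-chain, the $\bot$-front meets the advancing $\square$-front strictly before the chain can reach any end-node. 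The main obstacle is making this race precise when the input graph is not a numbered linear graph; I would handle it by arguing along the shortest path from $w_0$ to the nearest end-of-chain candidate and comparing the arrival times of the two fronts.

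Finally, fact (ii) follows by a routine induction on distance from $w_0$: since no $\checked$ ever arises, the second rule of \ref{SF-error} propagates $\bot$ to every neighbour at each step, so after at most the diameter of $G$ many steps every agent is in $\bot$, showing that $\rho$ is rejecting.
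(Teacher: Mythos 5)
Your overall skeleton matches the paper's: show that no $\checked$ can ever occur once a $\bot$ is present, then let $\bot$ propagate via the second \ref{SF-error} rule; your observations that the first $\bot$ must arise at a non-$\square$ agent and that the first $\checked$ (for $|V|\geq 2$) needs a $\square$-neighbour and no $\bot$-neighbour are exactly the paper's two key facts. However, the step you yourself flag as ``the main obstacle'' is a genuine gap, and the tool you propose for it is the wrong one. A race along \emph{the} shortest path from the first $\bot$-agent $w_0$ to an end-of-chain node does not rule out acceptance, because in a general graph the $\square$-wave could a priori reach a successor-free node along a route that avoids $w_0$ entirely; comparing arrival times of two fronts on one path says nothing about the other routes. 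The paper closes this not by a timing argument but by a structural reduction: it shows that for a $\checked$ to be producible at all, the origin node $u$ and the successor-free node $w$ must both have degree $1$, and every alternative simple path from an origin to $w$ must merge into the $u$--$w$ path at a node of degree $>2$, which itself fires the first \ref{SF-error} rule and becomes $\bot$; hence only the case where $G$ is literally a linear graph survives, and there the argument is pure blocking (a $\bot$ node never becomes $\square$, and a node adjacent to both a $\square$ and a $\bot$ becomes $\bot$ by priority of \ref{SF-error}), with no need to compare speeds.

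A second, smaller issue: your invariant ``if a $\bot$ has appeared by time $t$, no agent is in state $\checked$ at time $t$'' has a base-case hole, since it does not exclude a $\checked$ appearing \emph{before} the first $\bot$ -- and a single halted $\checked$-agent already prevents $\rho$ from being rejecting. This is repairable (when the first $\checked$ appears in the linear case, every agent is already in $\square$ or $\checked$, and no such agent can be the first to produce a $\bot$), but it needs to be said; the paper's formulation, which argues directly that the structural preconditions for \emph{any} $\checked$ are incompatible with \emph{any} $\bot$ occurring in the run, avoids the issue.
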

\begin{proof}
	If no accepting state occurs, then $\rho$ is obviously rejecting, as the $\bot$ state will just propagate through the graph by the second \ref{SF-error} transition.
	It therefore suffices to show that accepting states cannot occur in $\rho$.
	
	For this purpose, we analyse the requirements for an accepting $\checked$ state to occur:
	Looking at the \ref{SF-end} transitions, we see that for the first accepting state to occur, there has to be an origin node to produce the first $\square$ state.
	This $\square$ state then has to propagate through the graph, until it reaches a node without a successor to produce a $\checked$ state.
	
	Let $w$ be a node without a successor.
	All of $w$'s neighbours are either predecessors or have the same labelling as $w$ itself.
	If $w$ has a neighbour with the same labelling as itself or $w$ has multiple predecessors, it produces a $\bot$ state according to the first \ref{SF-error} transition and can therefore not produce a $\checked$ state.
	It remains the case where $w$ has exactly one neighbour, and this neighbour is a predecessor.
	Since $w$ is not an origin node, the $\square$ state required to produce a $\checked$ state at $w$ would have to come from a different agent $u \neq w$ which is an origin node.
	For the origin node $u$ to produce a $\square$ state by the first \ref{SF-end} transition, rather than a $\bot$ state by the first \ref{SF-error} transition, it has to have degree $1$ as well.
	Let $P = (\{u,v_0\},...,\{v_k,w\})$ be a simple path from $u$ to $w$.
	If any node $v_0,...,v_k$ has degree greater than $2$, an error state will be produced by the first \ref{SF-error} transition.
	The error state will hinder the $\square$ state produced at $u$, from reaching $w$ through $P$.
	If there is a different simple path $Q \neq P$ from some origin node to $w$, it has to join $P$ at some node $v_i,i \in [k]$ since $w$ only has one neighbour $v_k$.
	Therefore, $v_i$ is a node with degree greater than $2$ on $Q$, implying that the $\square$ state cannot propagate to $w$ through $Q$ either.
	Thus, if a $\square$ state were to propagate from any origin node to $w$, $G$ would have to be a linear graph with $E = \{u,v_0,...,v_k,w\}$.
	Clearly, after an agent has reached the $\square$ state, it cannot be the first to produce an error state; it could only adopt a propagating error state by the second \ref{SF-error} transition.
	So, the first error state in $\rho$ has to occur at an agent before it transitions to the $\square$ state, thus blocking the $\square$ state from reaching $w$.
	Therefore, no accepting states can occur in $\rho$, completing the proof.
\end{proof}

\begin{lemma}\label{lem:sf_nerror}
	Let $M^L$ be \Cref{cstr:snowball} and $G = (V,E,\lambda)$ be a labelled graph.
	Further, let $\rho$ be the (unique) run of $M^L$ on $G$.
	Assume that no $\bot$ state occurs in $\rho$.
	Then $G$ is a snowball fight numbered linear graph and $\rho$ is accepting.
\end{lemma}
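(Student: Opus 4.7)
The plan is to leverage the no-$\bot$ assumption on $\rho$ in three stages: first to restrict the structure of $G$, then to rule out the circular case, and finally to drive the SFNLG run to acceptance via $\square$-propagation.

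First, I would observe that the first \ref{SF-error} transition performs the same numbering checks as in \Cref{cstr:nlg} (producing $\bot$ upon detecting a neighbor of matching numbering, or two neighbors of matching numbering with each other). Since this transition never fires, $G$ must have correct numbering modulo $3$ and maximum degree $2$, forcing it to be either linear or circular -- an SFNLG or a snowball fight numbered circular graph (SFNCG). The absence of the second \ref{SF-init} transition additionally ensures a valid initial snowball distribution.

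Second, I would introduce a potential function $\sigma(C)$ counting the snowball-holding agents in a configuration $C$. A case analysis of \ref{SF-throw}, \ref{SF-catch}, and \ref{SF-end} shows that $\sigma$ is non-increasing, with strict decreases arising only from merging (scenario~3), from throws into a void at an endpoint, or from absorption by a $\square$ state. Under the no-$\bot$ assumption, the \ref{SF-hit} and second \ref{SF-end} transitions never fire. On an SFNCG, no agent can ever produce a $\square$ either: the first \ref{SF-end} transition requires a $(0,-1,1)$-agent with no numbering-$2$ neighbor, but every node of an SFNCG has a predecessor. So on an SFNCG, $\sigma$ decreases only via merges, hence stabilizes at some $k \geq 1$. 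Once stable, no two snowballs ever meet in the cycle; tracking the direction of a surviving snowball shows it is preserved between catches, so after a full cycle traversal the snowball returns to its original carrier, which still faces the direction of travel (having had no other interactions) and thus gets hit via \ref{SF-hit} -- contradicting the assumption. Therefore $G$ cannot be an SFNCG and must be an SFNLG.

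Third, on an SFNLG the same dynamics apply. Because the second \ref{SF-end} never fires, every snowball that reaches an endpoint must do so at the origin, triggering the first \ref{SF-end} and producing a $\square$. From there the third \ref{SF-end} propagates $\square$ rightward, absorbing any lingering snowballs; at the right endpoint the fourth \ref{SF-end} produces a $\checked$, which the fifth rule then propagates leftward until the entire configuration is accepting.

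The main obstacle will be formalizing the SFNCG contradiction: I must carefully handle potential stalemate configurations -- e.g., two adjacent agents each holding a snowball and facing each other, so that neither can throw -- and show that either such configurations are unreachable in a $\bot$-free run, or that the surrounding dynamics still force a hit event. Reasoning about how merges stabilize the snowball count and how the final surviving snowballs behave in a cycle is the most delicate technical content of the argument.
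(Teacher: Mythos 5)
Your overall strategy matches the paper's: use the absence of $\bot$ to force correct numbering and degree at most $2$ (so $G$ is an SFNLG or SFNCG), show the snowball count is non-increasing and must keep making progress, conclude that a $\square$ must eventually be produced at an origin node (impossible on an SFNCG), and finish with the $\square\to\checked$ propagation. Where you diverge is in the progress argument: the paper introduces a second potential, the minimum length $k_0$ of the chains of agents facing a snowball-carrier ``head-on,'' and shows that $k_0$ keeps decreasing until a merge or an endpoint event occurs; this handles lines and cycles uniformly. You instead split the cases: on a cycle you argue all surviving snowballs eventually leave every agent facing the travel direction so a hit is forced, and on a line you rely on monotone travel to an endpoint. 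Both routes work; yours is arguably more intuitive on the cycle, while the paper's $k_0$ argument avoids having to discuss ``full traversals'' with several coexisting snowballs (note your phrase ``returns to its original carrier, which still faces the direction of travel, having had no other interactions'' is not literally true when $k\geq 2$ snowballs survive -- the carrier is passed over by the other snowballs -- though those passes also leave it facing the travel direction, so the conclusion stands once you add that non-merging snowballs must all circulate the same way).

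The one genuine gap is the invariant you flag at the end but do not resolve: that no two adjacent agents ever simultaneously hold a snowball. Without it, \ref{SF-throw} can be disabled (its guard requires the target to be ball-free), your ``stalemate'' configurations arise, the claim that $\sigma$ decreases only via merges, void-throws and $\square$-absorption breaks, and two opposite-travelling snowballs could end up adjacent instead of merging at distance two. The paper closes this with a short induction: the \ref{SF-init} check forces exactly every other agent to start with a snowball, which defines a bipartition $V = U_0 \cup U_1$, and synchrony preserves the property that in every configuration all snowball-holders lie in a single part (alternating between $U_0$ and $U_1$ each step). You already observed that \ref{SF-init} validates the initial distribution, so this is a small addition, but it is load-bearing for essentially every step of your dynamic analysis and must be proved rather than listed as an obstacle.
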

\begin{proof}
	The proof will comprise multiple steps:
	First, we notice that $G$ has to be an SFNLG or SFNCG.
	Then, we establish a few facts about possible movement patterns of snowballs, from which we will be able to conclude that the total number of snowballs has to eventually become one.
	Finally, we prove that the last snowball will eventually disappear as well and can only do so without producing an error at the origin node, ultimately causing acceptance.
	This also implies that $G$ cannot actually be an SFNCG as it has to have an origin node.
	
	We start by observing, that every node has to have degree at most $2$ and that the numbering of the graph is correct.
	Otherwise, the first \ref{SF-error} transition would produce a $\bot$ state, analogous to \ref{L-error} in the proof of \Cref{lem:nlg}.
	$G$ therefore is an SFNLG or SFNCG.
	
	As there are no errors after removing the markers by \ref{SF-init}, we know that exactly every other agent is holding a snowball at the start.
	This enables us to define a bipartition of $V$ with parts $U_i = \{v \in V \mid \lambda(v)_3 = i\}$ for $i \in \{0,1\}$.
	With this, we can formalize the invariant mentioned in the construction of $M^L$:
	In any configuration of $\rho$, either only agents in $U_0$ or only agents in $U_1$ can be holding a snowball.
	In the beginning and after the first transition, exactly the agents in $U_1$ are holding a snowball.
	Assume that only agents $v \in U_i,i \in \{0,1\}$ are holding a snowball.
	By construction of the bipartition, $v$'s neighbours are in $U_{1-i}$ and are surely not holding a snowball.
	So, if $v$ is holding a snowball, it will next perform \ref{SF-throw} or one of the \ref{SF-end} transitions.
	If $v$ is not holding a snowball, it will definitely not perform \ref{SF-catch} next, which would be the only way for $v$ to obtain a snowball.
	In any case, $v$ will not be holding a snowball after the next transition.
	As $v \in U_i$ was arbitrary, only agents in $U_{1-i}$ can be holding a snowball after the next transition.
	Inductively, this proves the invariant.
	
	Observe that \ref{SF-catch} is the only transition where a snowball appears, i.e., an agent $v \in V$ that was not holding a snowball before the transition is afterwards.
	However, the condition for \ref{SF-catch} requires that the neighbour $w$ that $v$ is facing, has to be holding a snowball and facing $v$ before the transition.
	Therefore, if $v$ performs \ref{SF-catch}, $w$ has to perform \ref{SF-throw} simultaneously.
	Thus, in the same transition where a snowball appears at $v$, a snowball has to disappear at $w$ -- intuitively, the snowball is passed.
	Furthermore, if snowballs appear at two different agents $v_0 \neq v_1$, snowballs also have to disappear at two different neighbours $w_0 \neq w_1$, since each $w_i,i \in \{0,1\}$ can only be facing one of the $v_i,i \in \{0,1\}$.
	We can conclude that the total number of snowballs in $G$ can never increase.
	
	Consider a configuration with at least two snowballs and no $\square$ states.
	For an agent $u \in V$ facing in direction $d_u$ and holding a snowball, we define $k(u)$ as the length of the chain of consecutive agents located in direction $d_u$ of $u$ that are facing in direction $-d_u$.
	Intuitively, $k(u)$ describes how many times the snowball could be passed in direction $d_u$.
	Let $k_0$ be the minimum $k(u)$ among all $u \in V$ holding a snowball.
	If $k_0 \geq 1$, all $u \in V$ holding a snowball will perform a pass to their neighbour in direction $d_u$, shortening the chain of consecutive agents facing in direction $-d_u$ by one on that end.
	In particular, no $\square$ states are produced.
	If the chain does not simultaneously extend on the other end, its length decreases by one.
	Assuming the chain does extend, we know that the agent $w_u$ after the last agent in the chain must have flipped the direction it is facing from $d_u$ to $-d_u$.
	This can only happen by \ref{SF-catch}, so $w_u$ must then be holding a snowball.
	$w_u$'s neighbours cannot be holding snowballs because of the invariant, so the directions they are facing remained unchanged.
	The neighbour of $w_u$ in direction $-d_u$ is therefore facing in direction $-d_u$ as before, implying $k(w_u) = 0$ and thus $k_0$ decreases to $0$.
	Since the number of snowballs in $G$ can never increase, it is sufficient to only consider what happens to existing chains.
	In conclusion, without producing $\square$ states, either all chains decrease their lengths by one, resulting in $k_0$ decreasing by one, or at least one chain extends implying that $k_0$ decreases to $0$.
	In any case, we eventually get $k_0 = 0$, meaning that there is at least one agent $u_0$ facing in direction $d_0$, but there is no neighbour $v_0$ in direction $d_0$ that is facing in direction $-d_0$.
	If $v_0$ exists, it must be facing in direction $d_0$.
	The only way for $v_0$ not to produce an error state by \ref{SF-hit} is if $v_0$'s neighbour $w_0$ in direction $d_0$ is facing in direction $-d_0$ and also holding a snowball.
	In the next transition, the snowballs at $u_0$ and $w_0$ have to disappear by \ref{SF-throw} and only one snowball appears at $v_0$ by \ref{SF-catch} -- the total number of snowballs decreases, while at least one snowball remains.
	If the neighbour $v_0$ does not exist, $u_0$ must be an origin node and perform the first \ref{SF-end} transition to produce a $\square$ state rather than a $\bot$ state.
	We conclude that eventually the total number of snowballs in $G$ decreases to one or a $\square$ state is produced at an origin node.
	
	In the case where no $\square$ state is produced before the number of snowballs becomes exactly one, let $u_1 \in V$ be the agent holding the last snowball and facing in direction $d_1$.
	If $u_1$ is an origin node, it performs the first \ref{SF-end} transition next, producing a $\square$ state.
	Otherwise, $u_1$ has to pass the snowball to its neighbour $v_1$ in direction $d_1$, whereby $v_1$ turns around from facing in direction $-d_1$ to $d_1$ -- the total number $k_1$ of agents facing in direction $-d_1$ decreases by one.
	This implies that eventually $k_1 = 0$ or a $\square$ state is produced at an origin node.
	If no $\square$ state is produced before $k_1 = 0$, the agent holding the last snowball has to be an origin node and produce a $\square$ state next, otherwise an error would occur as there are no agents left facing in direction $-d_1$.
	Ultimately, we get that eventually an origin node has to produce a $\square$ state.
	Since an SFNCG does not have an origin node, this rules out the possibility that $G$ could have been an SFNCG -- we have $G \in \mathsf{SFNLG}$.
	
	Lastly, since no error states occur, the \ref{SF-end} transitions make the run halt as follows:
	Once an origin node has produced a $\square$ state, it propagates through the SFNLG $G$ until it reaches the last node, which does not have a right neighbour.
	This node then produces an accepting $\checked$ state, which propagates back through the graph, accepting $G$.
\end{proof}

Lastly, we construct an infinite (non-exhaustive) family $\{L_n\}_{n \in \mathbb{N} \setminus \{0\}}$ of SFNLGs that get accepted by \Cref{cstr:snowball}.
While the formal definition takes a slightly different approach from the sketch presented in \Cref{sec:snowball}, the resulting family remains identical.

\begin{definition2}[Harmonious Snowball Fight Numbered Linear Graphs]\label{def:snowball_harmonious}
	We define two families of words $\{l_n\}_{n \in \mathbb{N} \setminus \{0\}}$ and $\{r_n\}_{n \in \mathbb{N} \setminus \{0\}}$ over $\{-1,+1\} \times \{0,1\}$ inductively by
	\begin{bracketenumerate}
		\item $l_1 = (-1,1)$ and $r_1 = (+1,1)$,
		\item $l_{n+1} = r_n\ (+1,0)\ l_n$ and $r_{n+1} = r_n\ (-1,0)\ l_n$ for all $n \in \mathbb{N} \setminus \{0\}$.
	\end{bracketenumerate}
	We define a family of harmonious snowball fight graphs $\{L_n = (V_n, E_n, \lambda_n)\}_{n \in \mathbb{N} \setminus \{0\}}$ with
	\begin{bracketenumerate}
		\item $V_n = \{v_0,...,v_{|l_n|-1}\}, E_n = \{\{v_i,v_{i+1}\} \mid i \in [|l_n|-2]\}$ ($L_n$ is linear), and
		\item $\forall i \in [|l_n|-1] \colon {\lambda_n(v_i) = (i\bmod{3},{(l_n)}_i)}$ (agent numbering modulo 3, second (direction) and third (snowball) component according to $l_n$).
	\end{bracketenumerate}
\end{definition2}

\begin{lemma}\label{lem:sf_existence}
	Let $M^L$ be \Cref{cstr:snowball} and $\{L_n\}_{n \in \mathbb{N} \setminus \{0\}}$ as defined in \Cref{def:snowball_harmonious}.
	For all $n \in \mathbb{N} \setminus \{0\}$, the (unique) run $\rho_n$ of $M^L$ on $L_n$ is accepting.
\end{lemma}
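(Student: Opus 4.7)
The plan is induction on $n$, with a strengthened hypothesis $\mathrm{IH}_n$ so the induction goes through. For each $n \geq 1$ I would prove: there exists $T_n \in \mathbb{N}$ such that $\rho_n$ produces no $\bot$ state at any time $t \leq T_n$, and $C^{(n)}_{T_n-1}(v_0) = (0,-1,1)$ while $C^{(n)}_{T_n-1}(v_i) = (i\bmod 3,-1,0)$ for every $i \geq 1$. From this ``pre-acceptance'' configuration the origin fires the first \ref{SF-end} rule to produce $\square$, the $\square$-wave propagates rightward by the third \ref{SF-end} rule and then becomes a $\checked$-wave via the fourth and fifth rules, all without ever producing $\bot$; the lemma then follows from \Cref{lem:sf_nerror}. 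The base case $n = 1$ is a direct trace with $T_1 = 2$.

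For the inductive step I would exploit the recursive structure $l_{n+1} = r_n\,(+1,0)\,l_n$, splitting $L_{n+1}$ into a left half (positions $0,\ldots,|r_n|-1$ with labels $r_n = \text{flip}(l_n)$), a middle node $M$ at position $|r_n|$ with label $(+1,0)$, and a right half (positions $|r_n|+1,\ldots,|l_{n+1}|-1$ with labels $l_n$), and describe $\rho_{n+1}$ in four phases. \emph{Phase A} ($t \leq T_n-1$): the middle stays in its initial state $(|r_n|\bmod 3,+1,0)$, the right half evolves in lockstep with $\rho_n$ (modulo a numbering shift), and the left half evolves under the reflection pairing position reversal with direction flip. \emph{Phase B} (step $T_n-1 \to T_n$): position $|r_n|-1$ throws right, position $|r_n|+1$ throws left, and the middle catches the right-incoming snowball via the Scenario~3 case of \ref{SF-catch}, transitioning to $(|r_n|\bmod 3,-1,1)$ and merging the two snowballs into one. \emph{Phase C} (the following $|r_n|$ steps): the sole surviving snowball is passed leftward one position per step via successive \ref{SF-catch} firings, each receiving agent flipping from facing right to facing left. \emph{Phase D}: at time $T_{n+1}-1 = T_n+|r_n|$ the snowball reaches $v_0$ in state $(0,-1,1)$, exactly matching $\mathrm{IH}_{n+1}$; one sets $T_{n+1} = T_n + 2^n$.

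The hard part is justifying the Phase~A coupling rigorously. The transition function is \emph{not} fully symmetric in the direction component, because the \ref{SF-end} rules single out the label $(0,-1)$ for the origin; in fact a standalone copy of the reflected graph $L_n^R$ would itself produce a $\bot$ at time $T_n$ when its ``mirrored origin'' at the rightmost position has no right neighbor to throw at, and the left half of $L_{n+1}$ escapes this fate only because the middle supplies exactly the missing neighbor. Making Phase~A precise thus requires an inner induction on $t$ that tracks, step by step, which rule each agent applies, verifying equivariance of \ref{SF-throw}, \ref{SF-catch}, \ref{SF-hit}, and \ref{SF-init} under the joint operation of position reversal and direction flip, and in particular confirming that during $t \leq T_n-1$ the middle is never the target of a throw, never activates \ref{SF-hit} on its neighbours, and never fires the numbering check of \ref{SF-error}. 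Once Phase~A is pinned down, Phases B--D reduce to routine case analyses on the explicit state sequences predicted by $\mathrm{IH}_n$.
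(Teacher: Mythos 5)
Your proposal is correct and follows the same overall strategy as the paper: induction on $n$, the decomposition of $L_{n+1}$ into a mirrored copy, a middle node, and a copy of $L_n$, the merge of the two surviving snowballs at the middle followed by a leftward pass of length $2^n-1$, and the final appeal to \Cref{lem:sf_nerror} once the origin fires the first \ref{SF-end} rule. The one genuine difference is how the left half is handled. You derive its behaviour from $\rho_n$ via a reflection coupling (position reversal plus direction flip plus renumbering) and correctly identify this as the delicate point, since the \ref{SF-end} rules break the symmetry at the origin; your fix requires an inner induction verifying equivariance of \ref{SF-init}, \ref{SF-throw}, \ref{SF-catch}, \ref{SF-hit} and confirming that no \ref{SF-end} rule fires in either half before the merge step. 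The paper sidesteps this entirely: it introduces the mirrored family $\{R_n\}$ explicitly (same ascending numbering, direction/snowball components per $r_n$) and proves the two claims --- only \ref{SF-throw}/\ref{SF-catch} fire during transitions $1$ to $k_n-1$, and afterwards all agents face outward with the outermost agent holding the sole snowball --- \emph{simultaneously} for $L_n$ and $R_n$ by one joint induction, since $l_{n+1}$ and $r_{n+1}$ are both built from $r_n$ and $l_n$. This buys a cleaner inductive step: the only thing left to check is that the middle node, which never holds a snowball and is never in a $\square$ or $\bot$ state before the merge, cannot alter which rule its two neighbours apply, which follows directly from the claim that those neighbours only perform \ref{SF-throw} and \ref{SF-catch}. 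Your route works but shifts the burden onto the equivariance verification; if you carry it out, make sure your induction hypothesis also records (or implies, via the absorbing nature of $\square$ and $\checked$) that no \ref{SF-end} transition fires before time $T_n-1$, as the coupling silently relies on this.
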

\begin{proof}
	We show that in harmonious snowball fight graphs, the last snowball eventually leaves the graph to the left causing acceptance, as sketched in \Cref{sec:snowball}.
	To prove this, we define the family of graphs $\{R_n\}_{n \in \mathbb{N} \setminus \{0\}}$ analogously to harmonious snowball fight graphs $\{L_n\}_{n \in \mathbb{N} \setminus \{0\}}$, but with the second and third component of the labelling according to $r_n$ rather than $l_n$.
	This corresponds to the mirrored graphs from the sketch.
	Note that the first \ref{SF-error} transition is not performed in $L_n$ ($R_n$), as the numbering is correct.
	Further note that for all $n \in \mathbb{N} \setminus \{0\}$, it follows readily from the inductive construction of $l_n$ and $r_n$ that $|l_n| = |r_n| \eqcolon k_n$, and that exactly every other agent in $L_n$ ($R_n$) starts with a snowball.
	Transition $0$ (we number the transitions starting with $0$) on $L_n$ ($R_n$) thus only performs the first \ref{SF-init} transition.
	
	As in the sketch, we now formally show for all $n \in \mathbb{N} \setminus \{0\}$:
	\begin{bracketenumerate}
		\item Transitions $1$ to $k_n-1$ only include \ref{SF-throw} and \ref{SF-catch} transitions, and\label{itm:sfe_1}
		\item after transition $k_n-1$, all agents of $L_n$ ($R_n$) are facing left (right) and the leftmost (rightmost) agent is holding the only snowball.\label{itm:sfe_2}
	\end{bracketenumerate}
	\Cref{fig:snowball_harmonious} illustrates exemplarily that this is true for $L_3$.
	For the general proof we proceed by induction on $n$.
	
	For the base case $n = 1$, we consider $L_1$ ($R_1$).
	Since $k_1=1$, there is only claim \ref{itm:sfe_2} left to show, which is fulfilled as the only snowball is held by the only and thereby leftmost (rightmost) agent, which is facing left (right).
	
	For the inductive step, we assume that claims \ref{itm:sfe_1} and \ref{itm:sfe_2} hold for some $n \in \mathbb{N} \setminus \{0\}$ and show them for $n+1$.
	From the inductive definition of $l_{n+1}$ and $r_{n+1}$, we see that the second (direction) and third (snowball) component of the labelling of $L_{n+1}$ ($R_{n+1}$) are exactly the same as in a copy of $R_n$ with a single node $v_{k_n}$ attached to the rightmost node and a copy of $L_n$ attached to $v_{k_n}$, where $v_{k_n}$ is facing right (left) and not holding a snowball.
	Assume that, in transitions $1$ to $k_n-1$, $v_{k_n}$ performs a non-silent transition or affects a transition performed by one of its neighbours $w \in \{v_{k_n-1},v_{k_n+1}\}$.
	This would imply that $w$ would have performed the first, second, or fourth \ref{SF-end} transition if it were not for $v_{k_n}$.
	However, by claim \ref{itm:sfe_1} of the induction hypothesis, $v_{k_n-1}$ and $v_{k_n+1}$ (as part of the subgraph of $L_{n+1}$ ($R_{n+1}$) corresponding to $R_n$ and $L_n$, respectively) only perform \ref{SF-throw} and \ref{SF-catch} in transitions $1$ to $k_n-1$.
	Thus, after transition $k_n-1$, $v_{k_n}$ has to still be facing right (left) and not be holding a snowball.
	Furthermore, claim \ref{itm:sfe_2} of the induction hypothesis yields that after transition $k_n-1$, the nodes $v_0,...,v_{k_n-1}$ are facing right, the nodes $v_{k_n+1},...,v_{2k_n+1}$ are facing left and exactly the nodes $v_{k_n-1}$ and $v_{k_n+1}$ are holding a snowball each.
	Applying \ref{SF-throw} to $v_{k_n-1}$ and $v_{k_n+1}$ and \ref{SF-catch} to $v_{k_n}$, merges the two snowballs into one, held by $v_{k_n}$ while facing left (right).
	Since the nodes $v_0,...,v_{k_n-1}$ ($v_{k_n+1},...,v_{2k_n+1}$) are all facing right (left), the next $k_n$ transitions each apply a \ref{SF-throw} and a \ref{SF-catch} to a pair of neighbouring nodes, so that the snowball is passed one node to the left (right), while flipping the direction the receiving node is facing from right (left) to left (right).
	In conclusion, transitions $1$ to $(k_n-1)+1+k_n=|r_n|+1+|l_n|-1=k_{n+1}-1$ only include \ref{SF-throw} and \ref{SF-catch} transitions, and after transition $k_{n+1}-1$ all agents are facing left (right) with the leftmost (rightmost) agent holding the only snowball.
	This completes the induction.
	
	Finally, we can prove the claim of the lemma.
	Consider $L_n$ for $n \in \mathbb{N} \setminus \{0\}$.
	From the induction above, specifically claim \ref{itm:sfe_2}, we know that after transition $k_n-1$, all agents are facing left and the leftmost agent $v_0$ is holding the only snowball.
	$v_0$ is also the origin node, so it can perform the first \ref{SF-end} transition, producing a $\square$ state and consuming the last snowball.
	So far no error states occurred, and as there are no snowballs left, no error states will occur any more.
	Thus, \Cref{lem:sf_nerror} yields that $\rho_n$ is accepting.
\end{proof}

\subsection{Proof of \Cref{thm:reduction} for $\mathtt{Daf}$ and $\mathtt{DaF}$}
\reduction*

\begin{proof}[Proof for $\mathtt{Daf}$ and $\mathtt{DaF}$]
	Given a TM $T$, let $A^L$ be the $\mathtt{Da\texttt{\$}f}$-distributed automaton from \Cref{cstr:snowball}, and $A^H = ((Q^H,\delta_0^H,\delta^H,Y^H,N^H), \Sigma)$ the $\mathsf{NLG}$-distributed automaton obtained from applying \Cref{cor:always_acc/rej} to $T$ and lifting it to the class $\mathtt{Da\texttt{\$}f}$, which is possible by the hierarchy of expressive power.

	Recall that $A^L$ operates on the labelling alphabet $\Lambda^L = \mathbb{Z}_3 \times \{-1,+1\} \times \{0,1\}$, while $A^H$ operates on $\Lambda^H = \mathbb{Z}_3$.
	To enable $A^H$ to operate on $\Lambda^L$ as well, we replace $\delta_0^H$ by $\delta_0^H \circ \pi_1$ where $\pi_1 \colon {\Lambda^L \to \Lambda^H, (n,d,s) \mapsto n}$ is the projection onto the first component.
	As $A^H$ is an $\mathsf{NLG}$-distributed automaton, $A^{\mathit{SFH}} \coloneq ((Q^H,\delta_0^H \circ \pi_1,\delta^H,Y^H,N^H),\Sigma)$ is an $\mathsf{SFNLG}$-distributed automaton.
	As $L(A^L) \subseteq \mathsf{SFNLG}$ by \Cref{lem:sf_error,lem:sf_nerror}, $A^{\mathit{SFH}}$ is, in particular, an $L(A^L)$-distributed automaton.
	\Cref{lem:intersection} thus yields a $\mathtt{Da\texttt{\$}f}$-distributed automaton $A^T$ with $L(A^T) = L(A^L) \cap L(A^{\mathit{SFH}})$.
	We show that $L(A^T) \neq \emptyset$ if{}f $T$ halts on blank tape.
	
	If $L(A^T) \neq \emptyset$, let $G \in L(A^T) = L(A^L) \cap L(A^{\mathit{SFH}})$.
	As $G \in L(A^L) \subseteq \mathsf{SFNLG}$, $G$ is an SFNLG.
	So, we have $G = (V,E,\lambda) \in L(A^{\mathit{SFH}})$ for $G$ an SFNLG and thus $\pi_1(G) \coloneq (V,E,\pi_1 \circ \lambda) \in L(A^H)$ for $\pi_1(G)$ an NLG.
	This implies that $T$ halts on blank tape by \Cref{cor:always_acc/rej}.
	
	Conversely, if $T$ halts on blank tape, then so does $T_\infty$ from \Cref{lem:always_halt/overflow}.
	This happens after a finite number of steps, visiting a finite number of tape cells $n_0 \in \mathbb{N} \setminus \{0\}$.
	By \Cref{lem:sf_existence}, there is an $n \geq n_0$, such that $A^L$ accepts an SFNLG $G \in \mathsf{SFNLG}$ of length $n$.
	By \Cref{cor:always_acc/rej}, $A^{\mathit{SFH}}$ also accepts $G$.
	We conclude $G \in L(A^L) \cap L(A^{\mathit{SFH}}) = L(A^T)$, so $L(A^T) \neq \emptyset$.
	
	By \Cref{prop:sync} and the hierarchy of expressive power, this shows the claim for the classes $\mathtt{Daf}$ and $\mathtt{DaF}$.
\end{proof}

\end{document}